\renewcommand{\gets}{:=}
\newcommand{\Otilde}[1]{\tilde{O}(#1)}
\newcommand{\alglabel}[1]  {\label{alg:#1}}
\newcommand{\algref}[1]    {Algorithm~\ref{alg:#1}}
\newcommand{\seclabel}[1]  {\label{sec:#1}}
\newcommand{\secref}[1]    {Section~\ref{sec:#1}}
\newcommand{\secrefs}[2]   {Sections~\ref{sec:#1}--\ref{sec:#2}}
\newcommand{\lemlabel}[1]  {\label{lem:#1}}
\newcommand{\lemref}[1]    {Lemma~\ref{lem:#1}}
\newcommand{\thmlabel}[1]  {\label{thm:#1}}
\newcommand{\thmref}[1]    {Theorem~\ref{thm:#1}}
\newcommand{\corlabel}[1]  {\label{cor:#1}}
\newcommand{\corref}[1]    {Corollary~\ref{cor:#1}}
\newcommand{\defn}[1]      {\boldmath\textbf{\emph{#1}}\unboldmath}
\newcommand{\id}[1]        {\ifmmode\mathit{#1}\else\textit{#1}\fi}
\newcommand{\eqnlabel}[1]  {\label{eqn:#1}}
\newcommand{\eqnref}[1]    {Equation~\ref{eqn:#1}}
\newcommand{\lilabel}[1]   {\label{li:#1}}
\newcommand{\liref}[1]     {line~\ref{li:#1}}
\newcommand{\dist}[2][]
{\id{dist}\ifblank{#1}{}{_{#1}}\ifblank{#2}{}{^{#2}}}
\newcommand{\reach}[2][]
{R\ifblank{#1}{}{_{#1}}\ifblank{#2}{}{^{#2}}}
\newcommand{\thru}[2][]
{\id{thru}\ifblank{#1}{}{_{#1}}\ifblank{#2}{}{^{#2}}}
\newcommand{\bw}[2][]
{\id{BW}\ifblank{#1}{}{_{#1}}\ifblank{#2}{}{^{#2}}}
\newcommand{\card}[1]      {\left| #1\right|}
\newcommand{\ang}[1]       {\left<#1\right>}
\newcommand{\set}[1]       {\left\{#1\right\}}
\newcommand{\ceil}[1]      {\lceil#1\rceil}
\newtheorem{theorem}{Theorem}[section]
\newtheorem{lemma}[theorem]{Lemma}
\newtheorem{corollary}[theorem]{Corollary}
\newtheorem{definition}[theorem]{Definition}
\newenvironment{closeitemize}
    {\begin{list}{$\bullet$}
    {%\advance\leftmargin -1em
         \setlength{\itemsep}{-0.3\baselineskip}
     \setlength{\topsep}{0.15\baselineskip}
     \setlength{\parskip}{0pt}}}
    {\end{list}}
\newcounter{ccount}
\newenvironment{closeenum}
    {\begin{list}{\arabic{ccount}.}
    {\usecounter{ccount}\setlength{\itemsep}{-0.3\baselineskip}
     \setlength{\topsep}{0.15\baselineskip}
     \setlength{\parskip}{0pt}}}
    {\end{list}}
\begin{document}
\begin{titlepage}
  \title{Single-Source Shortest Paths with \\Negative Real Weights in
    $\Otilde{mn^{8/9}}$ Time}
  \author{Jeremy T.\ Fineman\\
    Georgetown University\\
    \texttt{jf474@georgetown.edu}} \date{}
  \maketitle

  \begin{abstract}
    This paper presents a randomized algorithm for the problem of
    single-source shortest paths on directed graphs with real (both
    positive and negative) edge weights. Given an input graph with $n$
    vertices and $m$ edges, the algorithm completes in
    $\Otilde{mn^{8/9}}$ time with high probability.  For real-weighted
    graphs, this result constitutes the first asymptotic improvement
    over the classic $O(mn)$-time algorithm variously attributed to
    Shimbel, Bellman, Ford, and Moore.  
  \end{abstract}
  \thispagestyle{empty}
\end{titlepage}

\section{Introduction}\seclabel{intro}

This paper considers the problem of single-source shortest paths
(SSSP) with possibly negative real weights.  The input to the SSSP
problem is a directed graph $G=(V,E,w)$ with real edge weights given
by the function
$w: E \rightarrow \mathbb{R}$ and a designated source vertex $s$.  If the
graph does not contain any negative-weight cycles, then the goal is to
output the shortest-path distance from the source $s$ to every vertex
$v\in V$.  If there is a negative-weight
cycle in the graph, then the algorithm should instead report the
presence of such a cycle.\footnote{Some algorithms 
  only report a negative-weight cycle if such a cycle is reachable from~$s$.
  But it is not hard to build black-box reductions from each version of
  the problem to the other.}

The classic algorithm for SSSP with real weights, due to
Shimbel~\cite{Shimbel55}, Ford~\cite{Ford56},
Bellman~\cite{Bellman58}, and Moore~\cite{Moore59}, henceforth called
the Bellman-Ford algorithm, has a running time of $O(mn)$ on a graph
with $m$ edges and $n$ vertices.  With no further restrictions to
graph topology or weights, this algorithm remains the best known
algorithm for SSSP.  When the weights are all \emph{nonnegative}
reals, Dijkstra's
algorithm applies, which can be made to run in $O(m+n\log n)$
time~\cite{FredmanTa87}.

For the case of \emph{integer} weights (negative and positive), there has
been significant further progress~\cite{GabowTa89, Goldberg95,
  CohenMaSa17, AxiotisMaVl20, BrandLeNa20, ChenKyLi22}, culminating in
a nearly linear-time algorithm~\cite{BernsteinNaWu22}.  All of these
integer-weight solutions apply a scaling approach, and their running
times depend on at least a $\log W$ term, where $-W$ is the
most-negative weight in the graph.  The $O(mn)$-time Bellman-Ford algorithm
remains the best strongly polynomial runtime known even for
the case of integer weights.

The main result of this paper is captured by the following
theorem. 
Throughout the paper, the model used is a Real RAM to allow for 
standard manipulation of edge weights in constant time; specifically,
addition, subtraction, negation, and comparison of real numbers (i.e., the
edge weights) each take constant time.  (The algorithms
presented in this paper do not perform any multiplication or division
of edge weights.) 

\begin{theorem}\thmlabel{mainthm}
  There exists a (Las Vegas) randomized algorithm that solves the SSSP
  problem for real-weighted graphs in $\tilde{O}(m n^{8/9})$ time, with high probability,
  where $m$ is the number of edges and $n$ is the number of vertices
  in the graph.
\end{theorem}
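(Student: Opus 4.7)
The plan is to beat the $O(mn)$ Bellman-Ford barrier by combining hop-bounded relaxation with a random hitting set. The central leverage is that $h$ rounds of Bellman-Ford take $O(mh)$ time and produce correct distances for every vertex whose true shortest path from $s$ uses at most $h$ hops. For vertices with more than $h$ hops on their shortest path, a uniformly random set $R\subseteq V$ of size $\tilde{O}(n/h)$ hits every such path with high probability, so those remaining distances must route through some pivot in $R$.

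Concretely, I would proceed in four steps. First, run $h$-hop Bellman-Ford from $s$ to obtain a partial distance function $d_h$ that is tight whenever the shortest path has hop count at most $h$. Second, sample a set $R$ of $\tilde{O}(n/h)$ pivots and, from each pivot, run an $h$-hop computation to build a contracted auxiliary graph $H$ on $R\cup\{s\}$ whose edges summarize the best $h$-hop path between pivots in the original graph. Third, solve SSSP recursively on $H$, crucially using $d_h$ as a partial price function to reweight inside the recursion so that each subproblem has less negative structure than its parent. Fourth, extend the pivot distances back to all of $V$ by one more $h$-hop relaxation from $R$. Choosing $h$ and the sample rate, and controlling the recursion depth, balances $O(mh)$ against $\tilde{O}((n/h)\cdot T_{\mathrm{pivot}})$; the target exponent $8/9$ corresponds to setting $h\approx n^{1/9}$ after the recursive cost is accounted for.

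The hardest part will be constructing and maintaining a useful vertex-potential function under real-valued negative weights. The scaling approach that underlies \cite{BernsteinNaWu22} and related integer-weight results is unavailable here, so one cannot simply reduce to a nonnegative instance and invoke Dijkstra as in Johnson's algorithm without already knowing the distances one is trying to compute. Instead, I expect to need a structural argument that iteratively identifies ``negative'' substructure, pays it off using the hop-bounded output as a partial potential, and repeats; the analysis must show that each such pass either finds a negative cycle (which can be reported) or strictly reduces the remaining negative structure measured by some potential-like quantity.

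Finally, because the hitting-set argument is probabilistic, achieving a Las Vegas guarantee requires a verification step that certifies the output distances as optimal, or otherwise flags a bad random sample so the algorithm can restart with fresh randomness. A natural certificate is a shortest-path tree together with the associated reduced weights being nonnegative; checking this is a single $O(m)$ pass, and the expected number of restarts is $O(1)$ because the hitting-set failure probability can be driven to $1/\mathrm{poly}(n)$ by taking $|R|$ polylogarithmically larger than $n/h$.
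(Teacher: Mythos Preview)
Your proposal has a genuine gap at the most basic cost-accounting level, and it is missing the central ideas that make the $8/9$ exponent work.

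The hitting-set template you describe breaks down immediately for negative weights. In step~2 you run an $h$-hop computation from each of the $\tilde{O}(n/h)$ pivots. But with negative edges present, an $h$-hop shortest-path computation from a single source costs $\tilde{O}(mh)$ (whether you count all edges or only negative edges as hops, you cannot use Dijkstra). Thus step~2 alone costs $\tilde{O}((n/h)\cdot mh) = \tilde{O}(mn)$, which is already Bellman-Ford. The hitting-set approach is efficient in the nonnegative setting precisely because each per-pivot call is $\tilde{O}(m)$ via Dijkstra; you have not explained how to recover that. Your third paragraph acknowledges exactly this difficulty (``the scaling approach \ldots\ is unavailable here'') but then defers to an unspecified ``structural argument that iteratively identifies negative substructure'' --- that \emph{is} the theorem, not a step one can hand-wave past. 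The claimed balance ``$h\approx n^{1/9}$'' is asserted without any recurrence that actually produces $8/9$; your recursion on the pivot graph $H$ (which has $\tilde{O}(n/h)$ vertices but potentially $(n/h)^2$ edges, all of which may be negative) has no stated cost and no argument that the ``negative structure'' shrinks.

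The paper takes a completely different route. It does not compute distances directly; instead it builds a price function that \emph{eliminates} $\Theta(k^{1/3})$ of the $k$ negative edges in $\tilde{O}(mk^{2/9})$ time, and then iterates. Each elimination round uses three new ingredients you would need to invent: (i) a \emph{betweenness reduction} step that reweights so that every pair of vertices has few vertices ``$\beta$-between'' them; (ii) finding a \emph{negative sandwich} $(x,U,y)$ (or a $1$-hop independent set) of size $\Omega(k^{1/3})$ and using it with the low-betweenness guarantee to reweight so that $U$ becomes \emph{$r$-remote}, meaning its $r$-hop negative reach has only $n/r$ vertices; and (iii) a \emph{hop-reduction} construction that exploits remoteness to solve SSSP on the subgraph $G^{\mathrm{out}(U)}$ in $\tilde{O}((|U|/r)\,m + rm)$ time rather than $\tilde{O}(|U|\,m)$. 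Setting $r=\Theta(k^{1/9})$ balances (i) against (iii) and yields $\tilde{O}(mk^{2/9})$ per round; $O(k^{2/3})$ rounds reduce $k$ by a constant factor, giving $\tilde{O}(mn^{8/9})$ overall. None of these structures appear in your sketch.
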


% \noindent In addition to returning the shortest-path distances, the algorithm
% of \thmref{mainthm} can also trivially be augmented to report a shortest-path tree.

\subsection{Preliminaries}
The $\tilde{O}$ denotes the soft-O notation.  Formally,
$f(x) = \tilde{O}(g(x))$ if there exists an integer $k$ such that
$f(x) = O(g(x) \cdot \log^k(g(x)))$. 
  
For the following, consider a graph $G=(V,E,w)$, let $m = \card{E}$
and $n = \card{V}$.  For a path $p$, the total \defn{weight} of the
path is given by $w(p) = \sum_{e \in p} w(e)$.  The \defn{size} of the
path is the number of edges on the path, denoted by $\card{p}$.  A
\defn{cycle} $C$ is a path that starts and ends at the same vertex,
and a negative-weight cycle is one where $w(C) < 0$.  
A
path $p$ from $u$ to $v$ is a \defn{shortest path} if all $u$-to-$v$
paths $p'$ satisfy $w(p) \leq w(p')$.  If there exists a shortest path
$p$ from $u$ to $v$, then we define the \defn{shortest-path distance}
from $u$-to-$v$ as $\dist[G]{}(u,v) = w(p)$; if there is no $u$-to-$v$
path, then $\dist[G]{}(u,v) = \infty$; if there is a path but no
shortest path (i.e., there is a negative-weight cycle), then
$\dist[G]{}(u,v) = -\infty$.  When $G$ is clear from context, we often
write $\dist{}(u,v)$ in place of $\dist[G]{}(u,v)$.

For a subset $S \subseteq V$ of vertices, the shortest-path distance
from any vertex in $S$ to $v$, denoted by 
$\dist[G]{}(S,v)$, is defined as 
\[ \dist[G]{}(S,v) = \min_{u \in S} \left(\dist[G]{}(u,v)\right) \ . \] The problem
of computing $\dist[G]{}(S,v)$ for all $v\in V$ corresponds to that of
solving SSSP on a slightly augmented graph:
create a ``super source'' vertex $s$, for all $u\in S$ add edges
$(s,u)$ with $w(s,u)=0$ to the graph, and finally solve SSSP from the
super source $s$ in the augmented graph.  Johnson's
algorithm~\cite{Johnson77} uses this same graph augmentation with
$S=V$.

\paragraph{Simplifying assumptions (without loss of generality).}
We shall make the following assumptions about the input graph
throughout. (1) If $(u,v) \in E$ and $w(u,v) < 0$, then $u$ has only
one outgoing edge;
thus, there are at most $n$ negative-weight edges in the
graph.\footnote{This first assumption is for convenience of
  exposition, not to simplify the algorithm.  The assumption implies a
  one-to-one correspondence between negative-weight edges and vertices
  with outgoing negative-weight edges, so referring to either is
  equivalent. Without the assumption, various statements and
  definitions would need to be altered, but the algorithm would
  otherwise remain unchanged.}  (2) Every vertex has degree at most
$O(m/n)$; thus, a subgraph on $n/r$ vertices has $O(m/r)$
edges.\footnote{This second assumption is common in randomized graph
  algorithms. Unlike the first, this one does simplify the
  algorithm. For example, to obtain the same results without this
  assumption, vertices would have to be randomly sampled proportional
  to their degree instead of uniformly.} These assumptions are without
loss of generality as they can be obtained from an arbitrary input
graph via a simple graph transformation without increasing the size of
the graph by more than a constant factor and without changing
distances between vertices in the original vertex set.

We shall also assume that $m\geq 2n$ to keep some of the statements of
performance bounds more concise.  A constant of at least two here also
implies that the number of edges is dominated by the number of edges
with nonnegative weight.

\paragraph{Hop-limited shortest paths.}  It is a simple exercise to
construct a SSSP algorithm that runs in $\tilde{O}(hm)$ time when
shortest paths are limited to $h\geq 1$ negative-weight edges or ``hops.''
(\secref{prelim} introduces corresponding notation and briefly summarizes 
such an algorithm.)  The novel algorithm in this paper applies
hop-limited SSSP as a subroutine.  

\subsubsection*{Price functions}

As with most of the integer-weight algorithms for SSSP, the algorithm
in this paper relies on price functions introduced by
Johnson~\cite{Johnson77} to transform the graph to an equivalent one
without negative weights; then Dijkstra's algorithm can be used to
solve the SSSP problem on the reweighted graph.  In more detail, a
\defn{price function} is a function $\phi : V \rightarrow \mathbb{R}$.
Given a price function $\phi$, define
$w_{\phi}(u,v) = w(u,v) + \phi(u) - \phi(v)$ and
$G_{\phi} = (V,E,w_\phi)$.  Modifying the weights in this way has the
following key properties~\cite{Johnson77}: (1) every cycle $C$ has the
same weight in both $G$ and $G_\phi$, so negative-weight cycles are
preserved, and (2) a path $p$ is a shortest path in $G_{\phi}$ if and
only if it is a shortest path in $G$.  More precisely, all $u$-to-$v$
paths $p$ satisfy $w_\phi(p) = w(p) + \phi(u) - \phi(v)$; if $p$ is a
cycle then $\phi(u) = \phi(v)$ and hence $w_\phi(p)=w(p)$.  Price functions also
compose in the natural way, i.e.,
%$(w_{\phi_1})_{\phi_2}(u,v) = w_{\phi_1}(u,v) + \phi_2(u)-\phi_2(v) =
%w(u,v) + (\phi_1(u)+\phi_2(u)) - (\phi_1(v)+\phi_2(v)) = w_{\phi_1+\phi_2}(u,v)$.
$(w_{\phi_1})_{\phi_2}(u,v) = w_{\phi_1+\phi_2}(u,v)$. 

We call $\phi$ or $w_\phi$ a \defn{valid reweighting} if $w_\phi$ does
not cause any edge weights to become negative.  That is, if
$\forall e \in E ((w(e) \geq 0) \implies (w_\phi(e) \geq 0))$.  We
say that $\phi$ or $w_\phi$ \defn{eliminates} a negative edge
$e \in E$ if $w(e) < 0$ and $w_\phi(e) \geq 0$.

Johnson~\cite{Johnson77} shows that (assuming no negative-weight
cycles) the problem of eliminating 
\emph{all} negative-weight edges can be accomplished by setting
$\phi(v) = \dist{}(V,v)$.  Using Bellman-Ford to solve the super-source problem, the running time is $O(mn)$.   When there are $k\ll n$ negative-weight edges, applying
hop-limited SSSP is better, giving a running time of $\tilde{O}(km)$. 

\subsection{Main Result}

This paper solves the problem of efficiently computing a reweighting
that eliminates a significant number of negative-weight edges. We say
that an algorithm is an \defn{$f(k)$-elimination algorithm} if, when
given an input graph $G=(V,E,w)$ with
$k=\card{\set{e\in E| w(e) < 0}}$ negative-weight edges, the algorithm
either (1) computes a valid reweighting that eliminates at least
$f(k)$ negative-weight edges\footnote{The reweighted graph $G_\phi$
  thus has at most $k-f(k)$ negative-weight edges.}, or (2) correctly
determines that the graph contains a negative-weight cycle.  Given an
$f(k)$-elimination algorithm $\cal A$, SSSP can be solved by
repeatedly applying $\cal A$ until no negative-weight edges remain,
and then applying Dijkstra's algorithm.\footnote{To be useful, the
  running time of the elimination algorithm should be much better than
  $O(mf(k))$, i.e., much better than $O(m)$ per edge
  eliminated. Obtaining an algorithm whose runtime is
  $\Otilde{m/k^\epsilon}$ per eliminated edge would generally
  translate to an $\Otilde{mn^{1-\epsilon}}$ algorithm for SSSP.} This
strategy of gradually eliminating negative-weight edges is reminiscent
of Goldberg's algorithm~\cite{Goldberg95} for integer-weighted graphs.

\begin{theorem}\thmlabel{main-elim}
  There exists a randomized $\Theta(k^{1/3})$-elimination algorithm
  for real-weighted graphs 
  that has running time $\tilde{O}(mk^{2/9})$, with high
  probability, where $m$
  and $k$ are the number of edges and negative-weight
  edges in the input graph, respectively.
\end{theorem}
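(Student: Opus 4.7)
The plan is to build the elimination algorithm around a single invocation of the $\Otilde{hm}$-time hop-limited Bellman-Ford subroutine mentioned in \secref{prelim} with hop budget $h = \Theta(k^{2/9})$; this pins the running time at the target $\Otilde{mk^{2/9}}$. The Bellman-Ford call will be run from a super source $s$ joined by weight-$0$ edges to a uniformly random subset $R$ of size $\tilde{\Theta}(k^{1/3})$ of the $k$ negative-edge tails, and the candidate price function will be $\phi(v) = \dist{h}(s,v)$, where $\dist{h}$ counts only negative-weight edges toward the hop budget. Because nonnegative edges do not consume the hop budget, $\phi(v) \le \phi(u) + w(u,v)$ holds automatically for every nonnegative edge $(u,v)$, so $\phi$ is a valid reweighting. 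A negative-weight cycle detected by the Bellman-Ford run is reported as option~(2) of the elimination specification.

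The central structural observation is that $\phi$ eliminates a negative edge $(u,v)$ whenever $\dist{h-1}(s,u) = \dist{h}(s,u)$: a weight-optimal $\le h$-negative-hop $s$-to-$u$ path can then be chosen to use at most $h-1$ negative hops, and prepending it to $(u,v)$ produces a $\le h$-hop witness of $\phi(v) \le \phi(u) + w(u,v)$. The proof therefore reduces to the probabilistic claim that, with high probability over $R$, at least $\Omega(k^{1/3})$ negative-edge tails $u$ satisfy this equality. I would attempt to establish it through a heavy/light dichotomy on an in-reach parameter $B_h(u)$ defined, for each tail $u$, as the set of negative-edge tails $u'$ for which some weight-optimal $s$-to-$u$ path factors through $u'$ via a $u'$-to-$u$ suffix that uses fewer than $h$ negative edges. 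A standard Chernoff/union bound shows that any tail $u$ with $|B_h(u)| = \Wtilde{k^{2/3}}$ satisfies $R \cap B_h(u) \ne \emptyset$ with high probability, and the sampled $u'$ then certifies the desired equality. If at least $k^{1/3}$ tails are ``heavy'' in this sense, the heavy case alone delivers the required number of eliminations, since $|R| \cdot \tilde{\Theta}(k^{2/3}) = \tilde{\Omega}(k\log n)$ matches exactly what the Chernoff bound demands.

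The principal obstacle I anticipate is twofold. First, the ``heavy'' condition must be phrased in terms of paths that are simultaneously weight-optimal and hop-efficient rather than in terms of mere reachability in few hops; this likely requires a lexicographic tie-break (weight, then hop count) inside the Bellman-Ford subroutine, or a direct structural argument about min-weight $h$-hop paths. Second---and I expect this to be the harder step---the complementary ``light'' case, in which fewer than $k^{1/3}$ tails have large hop-efficient in-reach, must be handled by exploiting the implied sparsity. A natural avenue is to extract a long chain of negative-edge tails, each linked to the next by a short weight-optimal suffix, so that $R$ lands near the start of the chain with high probability and $\phi$ eliminates an $\Omega(k^{1/3})$-length prefix; alternatively one could recurse on a structurally simpler residual subgraph. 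Once the dichotomy is in place, a final $O(m)$-time verification pass that counts the negative edges eliminated by $\phi$ and restarts with fresh randomness if the count falls short secures the Las Vegas guarantee without inflating the asymptotic running time.
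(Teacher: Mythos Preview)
Your plan diverges entirely from the paper's proof and leaves a gap I do not see how to close within your framework.

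First a technical point: your set $B_h(u)$ is defined via ``weight-optimal $s$-to-$u$ paths,'' but the super source $s$ is joined to the random sample $R$, so $B_h(u)$ itself depends on $R$; you cannot set up the heavy/light dichotomy on $B_h(u)$ \emph{before} sampling $R$ and then argue that $R$ hits $B_h(u)$. This is repairable---define $B_h(u)$ relative to a super source joined to \emph{all} negative tails $V^-$; then $u'\in R\cap B_h(u)$ gives $\dist{h-1}(R,u)\le\dist{h-1}(u',u)\le\dist{h}(V^-,u)\le\dist{h}(R,u)$, and the heavy case goes through. The real gap is the light case, which you correctly flag as the hard step but do not resolve. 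Neither suggested avenue suffices: the ``long chain'' idea presupposes a linear order among light tails that need not exist (they can sit in a hub-and-spoke or otherwise symmetric configuration admitting no chain of length $\Omega(k^{1/3})$), and ``recurse on a structurally simpler residual subgraph'' is not an argument without naming the parameter that improves and bounding the recursion within the $\tilde{O}(mk^{2/9})$ budget. A single $h$-hop SSSP from a random sample simply does not carry enough structure to force $\Omega(k^{1/3})$ eliminations in this regime.

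For comparison, the paper's proof is considerably more elaborate and does \emph{not} rest on one hop-limited SSSP call. It (i) first performs a \emph{betweenness-reduction} reweighting using $\tilde{\Theta}(k^{1/9})$ random samples and $\tilde{O}(k^{2/9})$-hop SSSP, guaranteeing that every pair of vertices has at most $n/r$ vertices $(r{+}1)$-between them; (ii) runs a separate sampling procedure to find either a $1$-hop independent set of size $\Omega(k^{1/3})$ (eliminated directly by a $1$-hop Johnson reweighting---this is precisely the mechanism that handles your ``light'' regime) or a \emph{negative sandwich} $(x,U,y)$ with $|U|=\Theta(k^{1/3})$; (iii) uses the low-betweenness guarantee to reweight so that $U$ becomes $r$-remote for $r=\Theta(k^{1/9})$; and (iv) eliminates $\id{out}(U)$ via a \emph{hop-reduction} construction on an $O(m)$-size layered auxiliary graph. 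The $\tilde{O}(mk^{2/9})$ runtime comes from balancing the $\tilde{O}(r^2 m)$ cost of (i) against the $\tilde{O}((k^{1/3}/r)m)$ cost of (iv). The betweenness reduction is what makes the sandwich branch work at all---without it no large remote set need exist---and nothing analogous appears in your plan.
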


\paragraph{\thmref{mainthm} is a corollary of \thmref{main-elim}.}
A similar argument occurs in~\cite{Goldberg95}, so the full proof is
omitted here.  The main idea is that $O(k^{2/3})$ repetitions of
$\Theta(k^{1/3})$-elimination suffice to reduce the number of
negative-weight edges by a constant factor.  The total running time of
these repetitions is
$\tilde{O}(mk^{8/9}) = \tilde{O}(mn^{8/9})$ to reduce
$k\leq n$
by a constant factor.  And $O(\log n)$ of these constant-factor reductions are
enough to eliminate all negative-weight edges.

\paragraph{Sketch of algorithm.} The remainder of this paper focuses
on solving the problem of $\Theta(k^{1/3})$-elimination, thereby
proving \thmref{main-elim}.  At a very high level, the algorithm
entails reweighting the graph so that $\Omega(k^{1/3})$ of the
negative-weight edges are ``remote'' or ``far away'' from most of the
graph. (In particular, only an $O(1/k^{1/9})$ fraction of the graph is
``nearby'' these edges.)  Then, reweight the graph again to eliminate these
$\Theta(k^{1/3})$ negative-weight edges by applying Johnson's
strategy.  Because these remote edges are far from most of the graph,
it turns out that it is possible to eliminate these edges in
$\tilde{O}(k^{1/3}\cdot (m/k^{1/9})) = \tilde{O}(k^{2/9}m)$ time, which
improves over the straightforward but insufficient $\tilde{O}(k^{1/3}m)$
bound by a factor of~$k^{1/9}$.

A key challenge is, of course, to establish this remote subset of
negative-weight edges.  The algorithm modifies the starting graph in
two ways as it progresses: the algorithm performs several gradual
reweighting steps to ensure remoteness, and the algorithm drops some
negative edges from consideration.  Each gradual reweighting uses
hop-limited shortest paths.  In slightly more detail, the first
reweighting selects a random sample of vertices and uses hop-limited
shortest paths to ``spread out'' the graph.  Then, search for a large
subset of negative edges that are relatively ``close together,'' or
failing that 
find a large subset that are ``independent.''  (Resolving
the latter case is easier.) Drop all other negative edges
from consideration.  Another reweighting moves most of the graph
away from those close-together edges, making the edges remote. Then a
final reweighting step is performed to eliminate these now remote edges;
this last reweighting is the only one guaranteed to eliminate any
negative-weight edges.

\subsubsection*{Outline}

Before giving any further detail of the algorithm, \secref{prelim}
establishes useful notations and definitions to formalize these types
of manipulations.  \secref{overview} then gives an overview of the
algorithm with some intuition. Finally,
\secrefs{between}{hopreduction} provide details of each step of the
algorithm and the analysis.

\section{Preliminaries}
\seclabel{prelim}

This section provides basic definitions and notation.  In addition,
this section discusses one of the main black-box subroutines:
hop-limited shortest path.  There are various definitions introduced
later in the paper as well, but most of those represent novel insights
into the structure of an efficient solution.  This section also
includes several useful claims for which the proofs are all simple
exercises and hence omitted. 

\paragraph{General graph notation.}
Consider a graph $G=(V,E,w)$, and let $X\subseteq V$ be any subset of
vertices.  Then $\id{out}(X)$ denotes the set of outgoing edges from
$X$, i.e., $\id{out}(X) = \set{(x,y) \in E | x \in X}$.

For a fixed target $t$, the problem of computing $\dist[G]{}(u,t)$ for
all $u\in V$ is called the \defn{single-target shortest-paths (STSP)}
problem.  This problem can be solved by solving SSSP from $t$ in the
transpose graph.  The \defn{transpose graph} is the graph obtained by
reversing all the edges.  That is, the transpose graph is a graph
$G^T = (V,E^T,w^T)$ where $E^T = \set{(v,u) | (u,v)\in E}$ and
$w^T(v,u) = w(u,v)$.

\paragraph{Negative edges, nonnegative edges, and the input graph.}
The \defn{input graph} refers to the graph $G$ on which the main
algorithm of \thmref{main-elim} is called, possibly with a modified
weight function.  We shall always denote the
input graph by $G = (V, E^+ \cup E^-, w)$, where the edge set has been
partitioned into the \defn{nonnegative edges}~$E^+$ and the
\defn{negative edges}~$E^-$.  Initially, $E^+ = \set{e \in E | w(e) \geq 0}$ and
$E^- = \set{e \in E | w(e) < 0}$, where $E=E^+\cup E^-$ is the full
edge set.  For every edge
$(u,v) \in E^-$, the vertex $u$ is called a \defn{negative vertex}.
Recall that, WLOG, every negative vertex has one outgoing edge.
Throughout, let $n=\card{V}$, $m=\card{E}$, and $k=\card{E^-}$.

As a slight abuse of notation, the $\cup$ symbol in
$G=(V,E^+ \cup E^-,w)$ is not simply a union, but also signifies which
edges are classified as negative edges (those in $E^-$), and which are
nonnegative (those in $E^+$).  As the algorithm progresses, the weight
function changes, but the classification of edges does not.  Thus,
having a negative edge $(u,v) \in E^-$ with $w(e) \geq 0$ is allowed;
that edge is still called a negative edge, and $u$ is still called
negative vertex.  In contrast, because the algorithm only produces
valid price function, it shall always be the case that $w(e) \geq 0$
for all $e\in E^+$.

Whenever the partition is not provided, e.g., if referring to an
auxiliary graph $H = (V',E',w')$, then implicitly the term ``negative
edges'' refers to those edges whose weight is negative.

\paragraph{(Negative)-hop-limited paths and distances.}

A path $p$ is an \defn{$h$-hop path} if at most $h$ of the
edges on the path are negative edges.  Nonnegative edges do not count
towards the number of hops. Paths need not be
simple, and each occurrence of a negative edge contributes to the hop
count.

The $h$-hop distance, denoted
\[ \dist[G]{h}(u,v) = \min\set{w(p) | \text{$p$ is an $h$-hop path
      from $u$ to $v$ in $G$}} \ , \] is the weight of a shortest
$h$-hop path from $u$ to $v$; define $\dist[G]{h}(u,v) = \infty$ if
there is no path from $u$ to $v$.  We also extend the distance
notation to be distance from a set of vertices (as in
\secref{intro}). Specifically, for any $S\subseteq V$, define
$\dist[G]{h}(S,v) = \min_{u\in S}\left(\dist[G]{h}(u,v)\right)$.
When $G$ is clear from context, we often write $\dist{h}$ instead of
$\dist[G]{h}$.
Note that unlike regular
distance, if $v$ is reachable from $u$, then $\dist[G]{h}(u,v)$ is
always finite, even with negative-weight cycles.

Just as with normal distance, it is easy to see that $h$-hop distances
obey a form of the triangle inequality, which has been adjusted to
incorporate the hop counts.
\begin{lemma}[Triangle inequality]
  For all integers $h_1,h_2\geq 0$ and all vertices $x, y, z$, we have
  \[ \dist{h_1+h_2}(x,z) \leq \dist{h_1}(x,y) + \dist{h_2}(y,z) \ . \]
  It follows that for any nonnegative edge $(y,z)$, $\dist{h_1}(x,z) \leq
  \dist{h_1}(x,y)+w(y,z)$.
  \lemlabel{triangle}
\end{lemma}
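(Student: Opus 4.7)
My plan is to prove the triangle inequality by a direct path-concatenation argument, which is the standard approach for such hop-counted distance inequalities. The reason the inequality holds is essentially that the hop count behaves subadditively under concatenation: a shortest $h_1$-hop path from $x$ to $y$ glued with a shortest $h_2$-hop path from $y$ to $z$ produces a valid $(h_1+h_2)$-hop path from $x$ to $z$ whose weight is the sum of the two weights.

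In more detail, first I would dispose of the trivial case: if either $\dist{h_1}(x,y) = \infty$ or $\dist{h_2}(y,z) = \infty$, the right-hand side is $\infty$ and the inequality holds vacuously. Otherwise, let $p_1$ be a shortest $h_1$-hop path from $x$ to $y$ realizing $\dist{h_1}(x,y) = w(p_1)$, and let $p_2$ be a shortest $h_2$-hop path from $y$ to $z$ realizing $\dist{h_2}(y,z) = w(p_2)$. Such paths exist by the definition of $h$-hop distance as a minimum over a nonempty set. Consider the concatenation $p = p_1 \cdot p_2$, which is a walk from $x$ to $z$. Since paths in this paper are not required to be simple and each occurrence of a negative edge is counted separately toward the hop bound, $p$ uses at most $h_1 + h_2$ negative edges, i.e., $p$ is an $(h_1+h_2)$-hop path. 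Its total weight is $w(p) = w(p_1) + w(p_2) = \dist{h_1}(x,y) + \dist{h_2}(y,z)$, and by definition $\dist{h_1+h_2}(x,z) \leq w(p)$, which yields the inequality.

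For the corollary, I would apply the triangle inequality with $h_2 = 0$ and observe that a single nonnegative edge $(y,z)$ by itself is a $0$-hop path (nonnegative edges do not count toward the hop budget), so $\dist{0}(y,z) \leq w(y,z)$. Combining gives
\[
\dist{h_1}(x,z) = \dist{h_1+0}(x,z) \leq \dist{h_1}(x,y) + \dist{0}(y,z) \leq \dist{h_1}(x,y) + w(y,z).
\]

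There is essentially no obstacle here, as the argument is purely definitional; the only mildly subtle point to state carefully is that nonnegative edges contribute zero to the hop count, so concatenation does not exceed the $h_1 + h_2$ budget, and that non-simple paths are explicitly allowed so the concatenation is legitimate without further modification.
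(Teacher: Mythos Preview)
Your proof is correct; the paper omits its own proof, explicitly calling this and the neighboring claims ``simple exercises.'' Your path-concatenation argument is the standard one and is exactly what the paper has in mind.
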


If $\dist[G]{h}(u,v) < 0$ or $\dist[G]{h}(v,u) < 0$, then we say that
$u$ and $v$ are \defn{$h$-hop related}.
The \defn{negative $h$-hop reach} of a vertex $u$ is the set of vertices
that can be reached by a negative-weight $h$-hop path.  More
generally, for a set subset $S \subseteq V$ of vertices, the negative $h$-hop
reach of $S$ is
\[ \reach[G]{h}(S) = \set{v \in V | \dist[G]{h}(S,v) < 0} \ . \]
The \defn{size} of the reach is its cardinality.  As with distance,
the subscript $G$ may be dropped when $G$ is clear from context.

\paragraph{Reweighting and invariance of $h$-hop paths.} 
The algorithm performs several steps that each partially reweight the
graph by way of a sequence of price functions~$\phi$.  The notation $G_\phi =
(V,E^+\cup E^-, w_\phi)$ denotes the reweighted graph, i.e., the input
graph reweighted by price function~$\phi$.  When $G$ is clear from context, we use the
subscript $\phi$ as a shorthand for $G_\phi$ in all notations where
the subscript specifies the graph of concern, i.e., $\dist[\phi]{h}$
means $\dist[G_\phi]{h}$ and $\reach[\phi]{h}$ means
$\reach[G_\phi]{h}$.  

The classification of edges as negative or nonnegative does not change when the graph is
reweighted, and the validity of the price function is defined with
respect to the initial classification.
Specifically, a price function $\phi$ is \defn{valid} if for all
$e \in E^+$, $w_\phi(e) \geq 0$.  When going from a price function
$\phi$ to a price function $\phi'$, the function $\phi'$ can still be
valid even if there exists $e \in E^-$ with $w_\phi(e) \geq 0$ and
$w_{\phi'}(e) < 0$.

Importantly, since the classification of edges does not change, $h$-hop paths in the input graph are invariant across
reweighting. That is, a path $p$ is an $h$-hop path in
$G_\phi=(V,E^+\cup E^-,w_\phi)$ if and only if it is an $h$-hop path
in~$G=(V,E^+\cup E^-,w)$.  Ensuring this
invariant is the primary reason negative edges were defined in the
specific manner above.  This invariant shall allow us to more-cleanly
reason about paths and distances when the algorithm performs several
reweighting steps. Specifically, we immediately have the
following.

\begin{lemma}\lemlabel{phidist}
  Consider the input graph $G=(V,E^+ \cup E^-,w)$, and let $\phi$ be a
  price function.  Then for all $u,v \in
  V$, we have
  \[ \dist[\phi]{h}(u,v) = \dist{h}(u,v) + \phi(u) - \phi(v) \ .\]
\end{lemma}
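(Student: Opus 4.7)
The plan is to reduce the lemma to the standard telescoping identity for price functions, using the explicit observation (already made in the preceding paragraph) that the classification of edges as negative or nonnegative is invariant under reweighting, so the collection of $h$-hop paths from $u$ to $v$ is identical in $G$ and in $G_\phi$.

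First I would fix any path $p = (u = x_0, x_1, \ldots, x_\ell = v)$ from $u$ to $v$ and expand
\[ w_\phi(p) = \sum_{i=1}^{\ell}\bigl(w(x_{i-1},x_i) + \phi(x_{i-1}) - \phi(x_i)\bigr) = w(p) + \phi(u) - \phi(v), \]
where the telescoping collapses the price contributions. This is just Johnson's identity and needs no appeal to hop counts.

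Next I would observe that because $E^+$ and $E^-$ are fixed subsets of $E$ independent of the weight function, ``$p$ is an $h$-hop path from $u$ to $v$'' is a purely combinatorial condition on $p$ that depends only on $(V, E^+ \cup E^-)$, not on $w$ or $w_\phi$. Hence the set $\mathcal{P}_h(u,v)$ of $h$-hop $u$-to-$v$ paths is the same in $G$ and in $G_\phi$.

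Finally I would take the minimum over $\mathcal{P}_h(u,v)$ on both sides of the identity above. Since $\phi(u) - \phi(v)$ is a constant with respect to $p$,
\[ \dist[\phi]{h}(u,v) = \min_{p \in \mathcal{P}_h(u,v)} w_\phi(p) = \min_{p \in \mathcal{P}_h(u,v)} w(p) + \phi(u) - \phi(v) = \dist{h}(u,v) + \phi(u) - \phi(v). \]
The only mild subtlety, and the reason the lemma is stated as an equation rather than an inequality, is the $\infty$ case: if $\mathcal{P}_h(u,v) = \emptyset$ then both sides are $\infty$, which is consistent since the set of available paths is the same under both weightings. There is no real obstacle here; the whole point is that the invariance of the $h$-hop path set (engineered by the definition of $E^-$) makes the classical Johnson telescoping carry over verbatim to hop-limited distances.
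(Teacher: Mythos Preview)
Your proposal is correct and matches the paper's intent: the paper actually omits the proof entirely, saying only that the lemma ``immediately'' follows from the invariance of $h$-hop paths under reweighting. Your telescoping argument together with the observation that $\mathcal{P}_h(u,v)$ is the same in $G$ and $G_\phi$ is exactly the natural elaboration of that one-line justification.
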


\paragraph{Computing $h$-hop distances.}
Given a source vertex $s$, the problem of computing $h$-hop distances
from $s$ to all other vertices is called the \defn{$h$-limited SSSP
  problem}.  There is a natural solution for $h$-limited SSSP that
combines Bellman-Ford and Dijkstra's algorithm, called BFD
here.\footnote{See, e.g.,~\cite{DinitzIt17}, for a deeper discussion
  of one variant of this algorithm.  Bernstein et
  al.~\cite{BernsteinNaWu22} apply an optimized version of BFD that
  does not reconsider a vertex in the next round unless its distance
  has improved; their algorithm for integer-weight SSSP leverages a tighter bound for the case that most
  shortest paths have few hops.} BFD interleaves $(h+1)$ full
executions of Dijkstra's algorithm (but without reinitializing
distances) on the nonnegative edges and $h$ ``rounds'' of Bellman-Ford
on the negative edges.\footnote{A ``round'' of Bellman-Ford means
  ``relaxing'' all the edges once. A full execution of Bellman-Ford is
  $n$ rounds.}  The running time of BFD is thus $O(hm\log n)$ when
$h\geq 1$.

\begin{lemma}[Follows from, e.g.,~\cite{BernsteinNaWu22,DinitzIt17}] Consider a graph
  $G=(V,E^+\cup E^-,w)$ with $w(e)\geq 0$ for all $e\in E^+$, and let
  $n=\card{V}$, $m=\card{V}$, and $k=\card{E^-}$. BFD solves the
  $h$-limited SSSP problem in time $O((h+1)(m+n\log n))$.  That is,
  given source vertex $s$ and integer $h\geq 0$, it returns 
  $d_h(v) = \dist[G]{h}(s,v)$ for all $v \in V$.  Moreover, the
  algorithm can also return all smaller-hop distances
  $d_{h'}(v) = \dist[G]{h'}(s,v)$ for all $h' \in \set{0,1,2,\ldots,h}$ with the
  same running time.

  When $h=k\geq 1$, BFD solves the regular SSSP problem in
  $O(k(m+n\log n))$ time.

  More generally, given a set $S\subseteq V$ instead of a source
  vertex, it is also possible to compute the distances
  $d_{h'}(v) = \dist[G]{h'}(S,v)$ for all $v\in V$ and $h' \leq h$
  with the same time complexity.  In addition, for all $v\in V$, the
  algorithm can be augmented to return $s(v) \in S$ such that $d_h(v) =
  \dist[G]{h}(s(v),v)$.\lemlabel{BFD}
\end{lemma}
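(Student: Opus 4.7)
My plan is to implement BFD as an explicit interleaving of $h+1$ Dijkstra runs on the nonnegative subgraph $(V,E^+,w)$ with $h$ single-round Bellman-Ford relaxations over the edges of $E^-$, and verify correctness by induction on the hop count. The algorithm maintains a distance array $d_i(\cdot)$ for each $i\in\set{0,1,\ldots,h}$, with the intended invariant $d_i(v)=\dist{i}(s,v)$. To build $d_0$, I run Dijkstra from $s$ using only edges in $E^+$; this is valid since $w(e)\geq 0$ for all $e\in E^+$. For each subsequent $i\geq 1$, I first copy $d_i\gets d_{i-1}$, then relax every edge of $E^-$ once (setting $d_i(v)\gets\min\set{d_i(v),\,d_{i-1}(u)+w(u,v)}$ for each $(u,v)\in E^-$), and finally run a full Dijkstra on $(V,E^+,w)$ using the current $d_i(\cdot)$ as seed distances, without reinitializing them. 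This last Dijkstra remains valid despite possibly-negative seed values because its extract-min argument depends only on the nonnegativity of the weights it relaxes, not of the distances themselves.

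For correctness, I would prove by induction on $i$ that $d_i(v)=\dist{i}(s,v)$ for all $v\in V$ after round $i$. The base case is a standard Dijkstra invariant on $(V,E^+,w)$. For the inductive step, any $i$-hop shortest path from $s$ to $v$ either uses at most $i-1$ negative edges — in which case $\dist{i}(s,v)=\dist{i-1}(s,v)$ and the copy together with the final Dijkstra preserve the value from $d_{i-1}$ — or uses exactly $i$, in which case writing its last negative edge as $(u,v')\in E^-$ decomposes the path into an $(i-1)$-hop prefix from $s$ to $u$, the single edge $(u,v')$, and a nonnegative-only suffix from $v'$ to $v$. The Bellman-Ford step captures the relaxation through $(u,v')$ via the inductive value $d_{i-1}(u)$, and the subsequent Dijkstra on $E^+$ propagates through the suffix; \lemref{triangle} formally combines the three segments. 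The matching lower bound $d_i(v)\geq\dist{i}(s,v)$ follows because every decrease of $d_i(v)$ can be traced to a concrete $i$-hop path of that weight.

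For the running time, each of the $h+1$ rounds costs $O(m+n\log n)$ for one Dijkstra with a Fibonacci heap plus $O(k)=O(m)$ for scanning $E^-$, giving the claimed $O((h+1)(m+n\log n))$ bound; the intermediate arrays $d_{h'}$ for $h'<h$ are produced for free as snapshots. For the set-source variant, I would initialize $d_0(v)\gets 0$ for $v\in S$ and $\infty$ otherwise, which is equivalent to the super-source construction from \secref{intro}; to recover the attaining $s(v)\in S$, I would carry a pointer alongside each $d_i(v)$ that is updated whenever $d_i(v)$ decreases in either a Bellman-Ford or Dijkstra relaxation. Finally, for $h=k\geq 1$ and assuming no negative cycles, every shortest path is simple and thus traverses at most $k$ negative edges, so $\dist{k}(s,v)=\dist{}(s,v)$ and the general SSSP bound follows. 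The point that I expect to warrant the most explicit care is the validity of Dijkstra from possibly-negative seed distances — the only place where the algorithm deviates from textbook usage of its subroutines — so I would state and justify this invariant separately before invoking it in the induction.
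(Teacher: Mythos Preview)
Your proposal is correct and matches the algorithm the paper describes: the paper itself omits the proof of this lemma (it is listed among ``useful claims for which the proofs are all simple exercises and hence omitted'' and attributed to~\cite{BernsteinNaWu22,DinitzIt17}), and only sketches BFD as ``$(h+1)$ full executions of Dijkstra's algorithm (but without reinitializing distances) on the nonnegative edges and $h$ `rounds' of Bellman-Ford on the negative edges,'' with the explicit note that distances should be stored per round rather than updated in place. Your interleaving, your use of $d_{i-1}(u)$ (not $d_i(u)$) in the $E^-$ relaxation step, and your identification of the seeded-Dijkstra invariant as the one nontrivial point are all exactly right.
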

  
Note that many textbook descriptions of Bellman-Ford (e.g.,
CLRS~\cite{CLRS}) update distance estimates in place, which when
extended to BFD would only guarantee $d_h(v)
\leq\dist[G]{h}(s,v)$.  The inequality may be problematic when
reasoning about hop-limited paths. We instead want the return values
to be exactly equal to the $h$-hop distances.  BFD of the lemma thus
starts from a version of Bellman-Ford that explicitly stores distances
for each round (e.g., Kleinberg-Tardos~\cite{KleinbergTa06}).

\paragraph{Subgraphs of negative edges.}  
For a subset $N \subseteq E^-$ of negative edges on the input graph,
we use $G^{N}$ to denote the subgraph $G^N = (V,E^+\cup N,w)$.
Moreover, $G^N_\phi$ denotes the reweighted subgraph
$G^N_\phi = (V, E^+\cup N, w_\phi)$.  Vertices are classified as
negative vertices in $G^N$ only if their corresponding negative edge
is included in~$N$.

Because all of the nonnegative edges are included in $G^N$, it should
be obvious that for any price function $\phi$, if $w_\phi$ is a valid
reweighting of $G^N$ then it is also a valid reweighting of $G$.
Moreover, if $w_\phi$ eliminates negative edges from $G^N$, it also
eliminates those same edges from $G$.  Working with subgraphs $G^N$
thus suffices to solve the problem.  Specifically, the algorithm shall
eventually reach a subgraph $G^N$ with $\card{N} = \Theta(k^{1/3})$
and find a reweighting that eliminates all the edges~$N$.

\section{Algorithm Overview}
\seclabel{overview}

This sections provides an overview of the algorithm for
$\Theta(k^{1/3})$ elimination.  This section includes some intuition
of correctness for each of the main components of the algorithm, but
the full details and most of the proofs are deferred to \secrefs{between}{hopreduction}.

The main goal of the algorithm is to find a large (i.e.,
size-$\Theta(k^{1/3})$) $r$-remote set or a
large $1$-hop independent set, both defined next, and then to
eliminate the corresponding negative edges.  (We shall eventually set
$r=\Theta(k^{1/9})$). 

\begin{definition}
  Consider a graph $G=(V,E^+\cup E^-,w)$, let
  $n=\card{V}$, and let $X$ be a subset of negative vertices.
  If the negative $r$-hop reach of $X$ has size at most $n/r$, i.e.,
  $\card{\reach{r}(X)} \leq n/r$, then $X$ is an \defn{$r$-remote
    set}.  We also call $\id{out}(X)$ a set of \defn{$r$-remote edges},
  and we call the subgraph induced by the negative $r$-hop reach of $X$ an
  \defn{$r$-remote subgraph}.
\end{definition}

\begin{definition}
  Consider a graph $G=(V,E^+\cup E^-,w)$.  Let $I$ be a subset of
  negative vertices.  We say that $I$ is a \defn{$1$-hop independent
    set} if $\forall x, y \in I$, $x$ and $y$ are not $1$-hop related
  in~$G$.
\end{definition}

\algref{full} outlines the algorithm.  Note that some of the
terminology will be revealed later in this section.  Nevertheless, the
reader may wish to refer to this psuedocode to see how the steps fit
together.  Each of the main steps is marked with the corresponding
sections that explain them.  For expository reasons, the steps of the
algorithm are presented out of order in this overview section (but in
order later in the paper).  The algorithm produces a sequence of price
functions through several steps.  Each step computes the next price
function relative to the current weighting of the graph.  Thus, the
actual weight is obtained by composing (adding) all of the price functions.

Roughly speaking, there are two main components in the algorithm.  The
first component is an efficient algorithm to either find a large
$r$-remote set (also with large $r$) or, failing that, to find a large
$1$-hop independent set.  Unfortunately, neither may exist with the
original weight function of the graph---it is not hard to construct
graphs where (1) every pair of negative vertices is $1$-hop related,
and (2) every negative vertex has large $1$-hop reach, i.e.,
$\card{\reach{1}(\set{u})} = \Omega(n)$.  The first component of the
algorithm thus entails not simply finding such a set, but also
adjusting the weight function to ensure that such a set exists.  This
component spans all but the last numbered step in the pseudocode.

The second component is an efficient algorithm that eliminates all of
outgoing edges from the $r$-remote or $1$-hop-independent
set.  The second problem is easier, and it also helps
to motivate why $r$-remote sets are useful. Thus, this section
addresses the second component first.  (Efficiently eliminating a $1$-hop
independent set is almost trivial, so that is deferred to \secref{combining}.)

\begin{algorithm}[t]
\caption{Algorithm for eliminating $\Theta(k^{1/3})$ negative
  edges. Negative-weight cycles may be discovered inside steps (1), (2),
or (4); when a cycle is discovered, the entire algorithm is terminated.}\alglabel{full}
\SetKwInOut{Input}{input}
\Input{A graph $G=(V,E^+\cup E^-,w)$ with $w(e) \geq 0$ for $e\in E^+$
and $w(e) < 0$ for $e \in E^-$}
\ \\
let $k = \card{E^-}$ and $r= \Theta(k^{1/9})$\\
\nl (\secref{over-btw},~\ref{sec:between}) perform betweenness reduction on $G$ with $\beta=r+1$ and $\tau=r$\\
    let $\phi_1$ be the price function computed by this step\\
\nl (\secref{over-findsandwich},~\ref{sec:findsandwich}) find a size-$\Omega(k^{1/3})$ negative sandwich $(x,U,y)$ or independent set $I$
in $G_{\phi_1}$ \\
\If{this step discovers an independent set}
{(\secref{combining}) find a
    price function $\phi$ that eliminates all
    negative edges in $G_{\phi_1}^{\id{out}(I)}$\\\KwRet{$\phi+\phi_1$}}
  \lElse{arbitrarily remove vertices from $U$
    until $\card{U} = \Theta(k^{1/3})$}
\nl (\secref{over-usesandwich},~\ref{sec:usesandwich}) reweight the graph
$G_{\phi_1}$ to try to  make $U$ become $r$-remote\\
   let $\phi_2$ be the price function computed by this step\\
    \lIf{$\card{\reach[\phi_1+\phi_2]{r}(U)} > n/r$ (i.e.,
      not $r$-remote)}{restart \algref{full}}
\nl (\secref{over-hopreduction},~\ref{sec:hopreduction}) use the hop-reduction technique on graph
$G_{\phi_1+\phi_2}^{\id{out}(U)}$ to eliminate $\id{out}(U)$\\
let $\phi$ be the price function computed by this step\\
  \KwRet{$\phi+\phi_1+\phi_2$}
\end{algorithm}

\subsection{Eliminating $r$-remote edges by hop reduction}\seclabel{over-hopreduction}

Recall that Johnson's strategy~\cite{Johnson77} for eliminating
negative edges entails solving SSSP.  If there are $\hat{k}$ negative
edges, then the running time is $\tilde{O}(m\hat{k})$ using BFD.  The
goal here is to accelerate this SSSP computation for the case that the
edges being eliminated are remote. (Notationally, the use of $\hat{k}$
here is to emphasize that this step is applied to a subgraph.)

To illustrate the approach, consider first an arbitrary graph
$G=(V,E^+\cup N,w)$ without a known remote set.  
The goal is
to produce a new auxiliary graph $H = (V_H, E_H, w_H)$ such (1)
$V \subseteq V_H$, and (2) for all hop counts $h\geq 0$ and
$u,v \in V$,
$\dist[G]{h}(u,v) \geq \dist[H]{\ceil{h/r}}(u,v) \geq
\dist[G]{}(u,v)$.  That is to say, all $h$-hop paths in $G$ correspond
to $\ceil{h/r}$-hop paths in~$H$.  We say that $H$ is an \defn{$r$-hop
  reduction of $G$}.  Thus, we can compute SSSP for $G$ by instead
computing SSSP in $H$ with a cost of
$\tilde{O}((\hat{k}/r) \cdot m_H)$, where $\hat{k} = \card{N}$ is
the number of negative edges in $G$, and $m_H = \card{E_H}$ is the size
of $H$.  As we shall see next, there is a fairly straightforward
construction of an $O(rm)$-size $r$-hop reduction of $G$.
Unfortunately, the running time of SSSP remains
$\tilde{O}((\hat{k}/r)\cdot (rm)) = \tilde{O}(\hat{k}m)$.  But given an
$r$-remote set, it is possible to improve this construction to achieve
a better running time.

The construction of $H$ is roughly as follows.  First, for each vertex
$v \in V$, add $r+1$ copies $v=v_0, v_1, \ldots, v_r$ to $V_H$.  Add
the nonnegative edges to each layer of the graph, i.e., for each edge
$(u,v) \in E^+$ and each $0\leq i \leq r$, add the edge $(u_i,v_i)$ to
$E_H$.  As for the negative edges $(u,v)\in N$, create the edges
$(u_i,v_{i+1})$ for $0\leq i <r$ to $E_H$.  Each copy of the negative
edge thus moves from the $i$-th layer of the graph to the $(i+1)$th
layer.  Finally, add edges $(v_i,v_0)$ for all $v$ and~$i$ to allow a
way to get back to the 0th layer.

It remains to specify the weight function $w_H$.  The goal is to
ensure that only the edges $(v_i,v_0)$ have negative weight, and thus
an $r$-hop subpath in $G$ can be simulated by a $1$-hop path in $H$
that moves through copies $0, 1, 2,\ldots r, 0$.  
This goal can be accomplished by roughly
running Johnson's reweighting limited to $r$ hops, i.e., computing $i$-hop SSSP from $V$ for all $i\leq r$,
and setting $w_H(u_i,v_j) = w(u,v) + \dist[G]{i}(V,u) - \dist[G]{j}(V,v)$.
For each $(u,v) \in N$, it follows that $w_H(u_i,v_{i+1}) \geq 0$ because
$\dist[G]{i+1}(V,v) \leq \dist[G]{i}(V,u) + w(u,v)$.

The graph $H$ has size $m_H = O(r m)$ by construction.  Moreover, from
\lemref{BFD} the SSSP distances and hence weights $w_H$ can be
calculated in $\tilde{O}(rm)$ time.

Now let us improve the construction if $N$ is a set of $r$-remote
edges.  Consider a vertex $u \in V$ that falls outside the $r$-remote
subgraph.  Then $r$-remoteness implies that $\dist[G]{i}(V,u) = 0$ for
all $i\leq r$ as there is no negative-weight path and there is a
0-weight path (the empty path from $u$).  There is thus no reason to
include multiple copies of this vertex in $H$ as each copy's incident
edges would be weighted identically---it suffices to keep the single
copy $u = u_0$, or equivalently to contract all copies into $u$ and
remove any redundant edges. In summary, when given an $r$-remote
subgraph, $H$ comprises $r$ copies of the remote subgraph plus a
single copy of the original graph.  Applying the assumption that the
maximum degree is $O(m/n)$, the total size of $H$ now becomes
$m_H = O(r\cdot (n/r)\cdot (m/n) + m) = O(m)$. Moreover, $H$ still
constitutes an $r$-hop reduction of $G$.  We are thus left with the
following lemma; (the second term in the runtime is the cost of
constructing~$w_H$).

\begin{lemma}\lemlabel{hop-reduction}
  Consider a graph $G=(V,E^+\cup N, w)$; let
  $\hat{k} = \card{N}$ and $m=\card{E^+\cup N}$.  Suppose that
  $w(e) \geq 0$ for all $e \in E^+$ and that $N$ is $r$-remote.
  Then there exists an $\tilde{O}((\hat{k}/r) m + rm)$-time
  deterministic algorithm that either (1) correctly determines that the
  graph contains a negative-weight cycle, or (2) computes a valid
  reweighting that eliminates all edges $N$.
\end{lemma}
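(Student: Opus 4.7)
The plan is to implement the two-phase construction sketched above the lemma: first build a layered auxiliary graph $H$ that compresses every $r$-negative-hop subpath of $G$ into a single return-edge hop and whose size is $O(m)$ thanks to $r$-remoteness, then solve SSSP in $H$ by hop-limited BFD and pull back to a valid price function on $G$. In more detail, first run BFD from a super-source attached to every $v\in V$ by a zero-weight edge to compute $d_i(v) = \dist[G]{i}(V,v)$ for all $v\in V$ and all $0\le i\le r$; by \lemref{BFD} this costs $\Otilde{rm}$. Then assemble $H$ with $r+1$ layered copies of each vertex in $\reach{r}(X)$ (where $X\subseteq V$ is the $r$-remote set with $N=\id{out}(X)$) and a single copy of every other vertex, together with layered copies of nonnegative edges, shifted copies of each $(u,v)\in N$ (collapsed when an endpoint lies outside $\reach{r}(X)$), and return edges $(v_i,v_0)$ out of every layered copy. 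Set $w_H(u_i,v_j) := w(u,v) + d_i(u) - d_j(v)$ using base weight $0$ for return edges, so that every non-return edge of $H$ is nonnegative by the hop-limited triangle inequality (\lemref{triangle}) applied to the $d_i$. The bound $m_H=O(m)$ follows from $r$-remoteness together with the maximum-degree $O(m/n)$ assumption, exactly as in the paragraph preceding the lemma.

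Next, attach a fresh super-source $s$ to every layer-$0$ copy by a zero-weight edge, and run BFD in $H$ with hop limit $h^\star := \lceil \hat k/r\rceil + 1$, where the negative edges of $H$ are exactly the return edges. By \lemref{BFD} this costs $\Otilde{h^\star m_H} = \Otilde{(\hat k/r)m + m}$. If any distance strictly decreases from hop $h^\star-1$ to hop $h^\star$, report a negative-weight cycle: simple $G$-paths use at most $\hat k$ negative edges and hence lift to $H$-walks with at most $\lceil \hat k/r\rceil$ return-edge hops, so any further improvement must arise from traversing a closed walk of negative weight in $H$, which projects to a negative-weight cycle in $G$. Otherwise let $\delta$ denote the returned distances and define $\phi(v):=\delta(v_0)$. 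For $(u,v)\in E^+$, the layer-$0$ copy has $w_H(u_0,v_0)=w(u,v)$ and Johnson's observation in $H$ gives $w_\phi(u,v) = w_H(u_0,v_0)+\delta(u_0)-\delta(v_0) \ge 0$. For $(u,v)\in N$, a routine case analysis on whether $v\in\reach{r}(X)$ shows that there is always a $u_0$-to-$v_0$ walk in $H$ of total weight $w(u,v)$ using at most one return-edge hop (either the walk $u_0 \to v_1 \to v_0$ with weights $(w(u,v)-d_1(v)) + d_1(v)$, or the direct collapsed edge $(u_0,v_0)$ of weight $w(u,v)$ when $v\notin \reach{r}(X)$), so $\delta(v_0) \le \delta(u_0) + w(u,v)$ and hence $w_\phi(u,v) \ge 0$.

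The main obstacle is verifying the two correctness claims on which the speedup rests. The first is that contracting every vertex $u\notin \reach{r}(X)$ to a single copy preserves correctness; this hinges on the observation above the lemma that $r$-remoteness forces $d_i(u)=0$ for every $i\le r$ in that case, so all hypothetical copies of such a $u$ would carry identical incident weights. The second is the soundness of the cycle test: one must carefully formalize the lift/project correspondence between $G$-paths and $H$-walks so that the bound ``at most $\lceil \hat k/r\rceil$ return-edge hops on any simple shortest $H$-walk'' is rigorous, thereby ensuring both that the extra hop $h^\star$ is truly unused when no negative-weight cycle exists and that improvement at hop $h^\star$ certifies one. Combining the $\Otilde{rm}$ preprocessing with the $\Otilde{(\hat k/r)m + m}$ hop-limited SSSP yields the stated $\Otilde{(\hat k/r)m + rm}$ bound.
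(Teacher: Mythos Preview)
Your proposal is correct and follows essentially the same route as the paper: compute the $r$-hop super-source distances $d_i(v)=\dist[G]{i}(V,v)$, build a layered auxiliary graph $H$ with $r{+}1$ copies of the remote part and one copy of the rest (weighted via the $d_i$ so that only the return edges can be negative), run hop-limited BFD in $H$ with limit $\lceil\hat{k}/r\rceil+1$, and either detect a cycle or return $\phi(v)=\dist[H]{}(V,v_0)$, which is then Johnson's price function on~$G$. The only visible difference is cosmetic: you place return edges $(v_i,v_0)$ out of every layer (exactly as in the paper's overview sketch), whereas the paper's detailed construction instead threads a self-cycle $(v_0,v_1),\ldots,(v_{r-1},v_r),(v_r,v_0)$ so that the \emph{only} negative $H$-edges are the $(v_r,v_0)$. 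That choice slightly streamlines the hop-count bookkeeping and also supplies forward self-edges $v_0\to v_i$, which the paper uses to show $d_\kappa(v_i)=d_\infty(v_i)$ for $i>0$; in your variant one instead argues this by splitting any shortest $s$-to-$v_i$ walk at its last visit to layer~$0$ and replacing the prefix by a $\kappa$-hop optimum. Both work, and the running-time analysis is identical.
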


\subsection{Betweenness reduction}\seclabel{over-btw}

We are left with the more difficult problem of uncovering an
$r$-remote set or $1$-hop independent set, which as previously noted
entails some reweighting. But it is not clear how to attack this
problem directly.  Roughly speaking, the challenge is that when the
price of a vertex changes, there may be new $h$-hop relationships
introduced even though no negative edges are created.  It thus seems
difficult to argue that a particular reweighting of the graph reduces
the number of relationships. 

The key insight here is to think in
terms of ``betweenness'' instead,  which is better behaved.  We can
then later translate to an $r$-remote set, but that transformation
is more restricted so easier to reason about. 

\begin{definition}
  For the following, consider a graph $G$, vertices $u$,
  $x$, and $v$, and integer $\beta \geq 0$.

  The \defn{$\beta$-distance from $u$ to $v$ through $x$} is defined
  as
  \[ \thru[G]{\beta}(u,x,v) = \dist[G]{\beta}(u,x) +
    \dist[G]{\beta}(x,v) \ . \]
  We say that $x$ is \defn{$\beta$-between}
  $u$ and $v$ if $\thru[G]{\beta}(u,x,v) < 0$.  The
  \defn{$\beta$-betweenness} of $u$ and $v$, denoted
  $\bw[G]{\beta}(u,v) = \card{\set{x \in V| \thru[G]{\beta}(u,x,v) < 0}}$
  is the number of vertices $\beta$-between $u$ and $v$.
\end{definition}
\noindent For all of these notations, the $G$ may be dropped if clear
from context, and $\phi$ is used as shorthand for $G_\phi$.

The goal here is to find a price function
$\phi$ so that for given parameter $\tau$, all pairs $u,v \in V$ have
$\bw[\phi]{\beta}(u,v) \leq n/\tau$.  (We will use
$\tau = \beta-1 = r$, but the algorithm of this section is described
for any $\beta$ and $\tau$.) The algorithm
is fairly simple. Sample a size-$\Theta(\tau \log n)$ subset of
vertices.  Then find any reweighting for which all
$\beta$-hop distances to or from the sampled vertices are nonnegative,
or determine that the graph contains a negative-weight cycle.
Roughly speaking, the reweighting entails computing
$\Theta(\beta \tau \log n)$-limited SSSP (because we want
$\Theta(\beta)$-hop subpaths between each of the $\Theta(\tau \log n)$
samples).  There are many relatively straightforward ways to achieve
the desired reweighting, and the details are deferred to \secref{between}.

We are left with a question: does reweighting in this way ensure that
$\bw[\phi]{\beta}(u,v) \leq n/\tau$?
It is easy to see that (by construction) no sampled vertex is
$\beta$-between any pair of vertices in $G_\phi$, but that would only directly
tell us that the $\beta$-betweenness is at most
$n-\Theta(\tau\log n)$.

Consider the distance from $u$ to $v$ through a vertex $x$.  It
follows from \lemref{phidist} that 
$\thru[\phi]{\beta}(u,x,v) = \thru{\beta}(u,x,v) + \phi(u) - \phi(v)$,
which importantly does not depend on $\phi(x)$.  The
$u$-to-$v$ distances through other vertices thus compare in the same
way before and after reweighting. Therefore, if any
sampled vertex $y$ has $\thru{\beta}(u,y,v) \leq \thru{\beta}(u,x,v)$,
then it follows that $x$ is not $\beta$-between $u$ and $v$ in
$G_\phi$ because $y$ is not either.  With high probability, there is a
sample $y$ taken from the smallest $1/\tau$-fraction of through
distances, and hence at most a $1/\tau$ fraction of vertices is
$\beta$-between $u$ and~$v$ in $G_\phi$. Thus, we obtain the following, with
proof in \secref{between}:

\begin{lemma}\lemlabel{betweenness-reduction}
  Consider input graph $G=(V,E^+\cup E^-,w)$; let
  $m=\card{E^+\cup E^-}$, and suppose that $w(e) \geq 0$ for all
  $e \in E^+$.  Then there exists an
  $\tilde{O}(\beta\tau m + \tau^2 n)$-time (Monte Carlo) randomized
  algorithm that always satisfies one of the following three cases,
  and it falls in either of the first two with high probability:
  (1) it correctly determines the graph contains a
  negative-weight cycle, (2) it finds valid price function
  $\phi$ such that $\bw[\phi]{\beta}(u,v) \leq n/\tau$ for all $u,v\in
  V$, or (3) it returns a valid price function, but the betweenness goal
  is not achieved.
\end{lemma}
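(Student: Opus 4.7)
The algorithm will sample $S\subseteq V$ uniformly with $|S|=c\tau\log n$ for a sufficiently large constant $c$, use \lemref{BFD} (with hop limit $\beta$ in both $G$ and $G^T$) to compute $\dist{\beta}(s,v)$ and $\dist{\beta}(v,s)$ for every $s\in S$ and $v\in V$ in $\tilde O(\beta\tau m)$ time, and then reduce the remaining work to a single hop-limited SSSP on a ``constraint graph'' $H$. Concretely, $H$ lives on the vertex set $V\cup\{s^*\}$ and contains (i) every $(u,v)\in E^+$ with weight $w(u,v)$ (nonnegative, non-hop), (ii) a weight-$0$ edge $(s^*,v)$ for each $v\in V$ (nonnegative, non-hop), and (iii) for every $s\in S,v\in V$ two ``shortcut'' edges $(v,s)$ and $(s,v)$ of weights $\dist{\beta}(v,s)$ and $\dist{\beta}(s,v)$, classified as hop-contributing. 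I will run BFD from $s^*$ with hop limit $h=2|S|+1$, set $\phi(v):=\dist[H]{h}(s^*,v)$, and verify the shortcut inequalities $\phi(s)\le\phi(u)+\dist{\beta}(u,s)$ and $\phi(v)\le\phi(s)+\dist{\beta}(s,v)$ for every $u,v\in V$ and $s\in S$. If some inequality fails the algorithm reports ``negative-weight cycle''; otherwise it returns $\phi$.

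\textbf{Runtime and cycle reporting.} The key structural fact is that every simple path in $H$ touches each $s\in S$ at most once, so it uses at most $2|S|$ shortcut edges; hence $h=2|S|+1$ captures every simple path. BFD on $H$ then costs $O(h\cdot|E_H|)=\tilde O(|S|(m+\tau n))=\tilde O(\tau m+\tau^2 n)$, the verification is cheaper, and the phase-one BFDs dominate at $\tilde O(\beta\tau m)$. Validity of $\phi$ on $E^+$ is automatic since nonnegative edges consume no hops, so the BFD triangle inequality gives $\phi(v)\le\phi(u)+w(u,v)$. If a shortcut inequality $\phi(s)\le\phi(u)+\dist{\beta}(u,s)$ fails, then the path ``$\phi$-best to $u$, then shortcut $(u,s)$'' has at most $h+1$ hops and weight strictly less than $\phi(s)=\dist[H]{h}(s^*,s)$. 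Since any simple $s^*$-to-$s$ path uses at most $2|S|<h+1$ hops, this better path is nonsimple; iteratively removing cycles from it forces at least one removed cycle to have strictly negative weight (otherwise the resulting simple path would contradict $\phi(s)$). Each shortcut in $H$ realizes an honest $\beta$-hop walk of the same weight in $G$, so this negative $H$-cycle unrolls to a negative closed walk, and hence a negative cycle, in $G$, correctly justifying Case~(1).

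\textbf{Betweenness via sampling.} When verification succeeds, \lemref{phidist} converts the shortcut inequalities into $\dist[\phi]{\beta}(u,s)\ge 0$ and $\dist[\phi]{\beta}(s,v)\ge 0$ for every $s\in S$, whence $\thru[\phi]{\beta}(u,s,v)\ge 0$: no sampled vertex is $\beta$-between any pair in $G_\phi$. Fix any $u,v\in V$. \lemref{phidist} also gives $\thru[\phi]{\beta}(u,x,v)=\thru{\beta}(u,x,v)+\phi(u)-\phi(v)$, so the set $T_{u,v}=\{x:\thru[\phi]{\beta}(u,x,v)<0\}=\{x:\thru{\beta}(u,x,v)<\phi(v)-\phi(u)\}$ is a prefix of $V$ in the order induced by $\thru{\beta}(u,\cdot,v)$, and it is disjoint from $S$. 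If $|T_{u,v}|>n/\tau$, the uniform sample $S$ of size $c\tau\log n$ missed an entire $(1/\tau)$-fraction of $V$, which occurs with probability at most $(1-1/\tau)^{c\tau\log n}\le n^{-\Omega(c)}$; a union bound over the $O(n^2)$ pairs $(u,v)$ with $c$ large enough yields $\bw[\phi]{\beta}(u,v)\le n/\tau$ everywhere with high probability (Case~(2)), and the complementary event is Case~(3). The main technical obstacle is exactly making the SSSP on $H$ fit within the $\tau^2 n$ budget despite $\Theta(\tau n)$ potentially negative shortcut edges; the ``at most $2|S|$ shortcuts per simple path'' observation is what buys that, and it simultaneously delivers the clean dichotomy between a usable $\phi$ and a provably negative cycle.
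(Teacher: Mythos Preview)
Your proof is correct and takes essentially the same approach as the paper: sample $\Theta(\tau\log n)$ vertices, compute $\beta$-hop distances to and from each, build an auxiliary graph whose hop-counting edges are these $\beta$-hop shortcuts, run super-source BFD with hop limit $2|S|$ on it, and derive the betweenness bound from the fact that the ordering of $\thru{\beta}(u,\cdot,v)$ is invariant under reweighting. The only cosmetic difference is that you include $E^+$ in the auxiliary graph (making validity of $\phi$ on $E^+$ automatic, at the cost of a harmless extra $\tilde O(\tau m)$ in the second BFD), whereas the paper omits $E^+$ from $H$ and proves validity separately via the triangle inequality.
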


\subsection{A sandwich with low betweenness gives $r$-remoteness}\seclabel{over-usesandwich}

Consider a graph $G=(V, E^+ \cup E^-, w)$.  The goal is to argue
that if $G$ has low betweenness, then it is not too hard to reweight
$G$ so that there is an $r$-remote subset.  To do so, we apply a new
object called a negative sandwich.

\begin{definition}A \defn{negative sandwich} is a
triple $(x, U, y)$ with the following properties.  
\begin{closeitemize}
\item $U$ is a subset of negative vertices,
\item $x\in V$ and $\dist{1}(x,u) < 0$ for all $u\in U$, and
\item $y\in V$ and $\dist{1}(u,y) < 0$ for all $u\in U$.
\end{closeitemize}
The \defn{size} of
the sandwich is the cardinality of $U$.
\end{definition}

For now, let us ignore the task of finding such a sandwich. The goal
here is only to argue that a negative sandwich is useful.  Note that
restricting $U$ to negative vertices does not affect the bulk of the
logic here
(\lemref{sandwich} holds for any set $U\subseteq V$); this restriction
is simply because negative vertices are what matter for the
transformation to a remote subset.

Given a negative sandwich $(x,U,y)$ and hop count $\beta$, consider the
reweighting given by the price function
$\phi(v) = \min(0,\max(\dist{\beta}(x,v),-\dist{\beta}(v,y)))$.
Roughly speaking, there are two main goals of this price function: (1)
for all $u \in U$, $\phi(u) = 0$, and (2) for most other vertices~$v$,
$\phi(v) \leq 0$ and 
$\phi(v) \leq \dist{\beta}(x,v)$.  Because the 1-hop distance from
$x$ to $u$ is negative (by definition of a negative sandwich), these together would imply that the
$(\beta-1)$-hop distance from $u$ to $v$ in the reweighted graph
becomes positive.  In general, however, ensuring (1) in a way that also gives a
valid reweighting somewhat interferes with (2).  This is why the
price function here uses $\dist{\beta}(v,y)$ to limit how negative
$\phi(v)$ can get. It is not hard to see that (1) is ensured because
in a negative sandwich
$\dist{\beta}(u,y) < 0$ for all $u\in U$.  The price function also
ensures (2) because when $v$ is \emph{not} $\beta$-between $x$ and
$y$, $\dist{\beta}(x,v) + \dist{\beta}(v,y) \geq 0$ or
$\dist{\beta}(x,v) \geq -\dist{\beta}(v,y)$; the implication is that
$\max(\dist{\beta}(x,v),-\dist{\beta}(v,y)) = \dist{\beta}(x,v)$ as desired.  That is to say,
the only vertices that remain in the $(\beta-1)$-hop reach of $U$ in
$G_\phi$ are (a subset of) those vertices that are $\beta$-between $x$
and $y$ in~$G$.  It follows that if $x$ and $y$ have
$\beta$-betweenness at most $n/\tau$, then $U$ becomes
$\min(\tau,\beta-1)$-remote.

The following lemma formalizes these ideas and also proves that the
reweighting is valid. That the reweighting is valid may not be
obvious, but the proof (in \secref{usesandwich}) essentially amounts
to applying the triangle inequality.

\begin{lemma}\lemlabel{sandwich}
  Consider a graph $G=(V,E^+\cup E^-, w)$, and suppose that
  $w(e) \geq 0$ for all $e \in E^+$.  Consider negative
  sandwich $(x,U,y)$ and integer $\beta > 1$. Let $\phi$
  be the price function defined by
  \[ \phi(v) =
    \min(0,\max(\dist[G]{\beta}(x,v),-\dist[G]{\beta}(v,y))) \ . \]
  Then we have the following:
  \begin{closeenum}
  \item $\phi$ is a valid reweighting, i.e., $w_\phi(e) \geq 0$ for
    all $e \in E^+$.
  \item For every $v\in V$: if $\thru[G]{\beta}(x,v,y) \geq 0$ (i.e.,
    $v$ is not $\beta$-between $x$ and $y$), then
    $v \not\in \reach[G_\phi]{\beta-1}(U)$.
  \end{closeenum}
\end{lemma}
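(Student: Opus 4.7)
The plan is to work with the shorthand $a(v) = \dist[G]{\beta}(x,v)$ and $b(v) = -\dist[G]{\beta}(v,y)$, so that $\phi(v) = \min(0, \max(a(v), b(v)))$, and to prove the two parts of the lemma in turn.

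For part~(1), I would show that $\phi(v) \leq \phi(u) + w(u,v)$ for every nonnegative edge $(u,v) \in E^+$, which is exactly $w_\phi(u,v) \geq 0$. The key observation is that each of the three constituent functions $a$, $b$, and the constant $0$ already satisfies this subadditive inequality along $(u,v)$. For $a$, this is the nonnegative-edge form of \lemref{triangle} applied to paths from $x$. For $b$, the symmetric statement $\dist[G]{\beta}(u,y) \leq w(u,v) + \dist[G]{\beta}(v,y)$ gives the bound after negating. For the constant $0$, the inequality is just $w(u,v) \geq 0$. Because pointwise max and pointwise min of functions satisfying $f(v) \leq f(u) + w(u,v)$ still satisfy the same inequality, one application for the inner max and one for the outer min with $0$ yields the desired bound on $\phi$.

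For part~(2), I would first argue $\phi(u) = 0$ for every $u \in U$. By the sandwich definition $\dist[G]{1}(x,u) < 0$ and $\dist[G]{1}(u,y) < 0$, and monotonicity of hop distance in $h$ (valid since $\beta \geq 1$) gives $a(u) \leq \dist[G]{1}(x,u) < 0$ and $b(u) \geq -\dist[G]{1}(u,y) > 0$, so $\max(a(u), b(u)) > 0$ and the outer min pins $\phi(u)$ to $0$. Next, for a vertex $v$ with $\thru[G]{\beta}(x,v,y) \geq 0$, the definition of $\thru{}$ rearranges to $a(v) \geq b(v)$, hence $\max(a(v), b(v)) = a(v)$ and $\phi(v) \leq a(v)$.

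Combining these facts with \lemref{phidist}, for every $u \in U$ and every $v$ with $\thru[G]{\beta}(x,v,y) \geq 0$,
\[
\dist[\phi]{\beta-1}(u,v) \;=\; \dist[G]{\beta-1}(u,v) + \phi(u) - \phi(v) \;\geq\; \dist[G]{\beta-1}(u,v) - a(v),
\]
and \lemref{triangle} (with $h_1 = 1$, $h_2 = \beta-1$) gives $a(v) \leq \dist[G]{1}(x,u) + \dist[G]{\beta-1}(u,v)$, so $\dist[\phi]{\beta-1}(u,v) \geq -\dist[G]{1}(x,u) > 0$, whence $v \notin \reach[\phi]{\beta-1}(U)$. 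The main subtlety I expect is in part~(1): it is not a priori obvious that combining the two opposing shortest-path quantities $a$ and $b$ via min and max preserves validity on $E^+$, and the cleanest justification is the small abstract fact that subadditivity along a nonnegative edge is preserved by taking pointwise maxima and minima.
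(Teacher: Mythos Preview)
Your proof is correct, and for part~(2) it is essentially the paper's argument: establish $\phi(u)=0$ from the sandwich condition, establish $\phi(v)\leq\dist[G]{\beta}(x,v)$ from the non-betweenness hypothesis, and combine via \lemref{phidist} and the triangle inequality with the $1+(\beta-1)$ split.

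For part~(1) you take a genuinely different route. The paper does an explicit three-way case analysis on which branch of the outer $\min$ and inner $\max$ is active (specifically: $\phi(u)=0$; or $\phi(u)\neq 0$ with $\dist{\beta}(x,v)\geq -\dist{\beta}(v,y)$; or $\phi(u)\neq 0$ with the reverse inequality), applying the appropriate triangle inequality in each case. Your approach instead isolates the abstract fact that the property ``$f(v)\leq f(u)+w(u,v)$ for every $(u,v)\in E^+$'' is closed under pointwise $\max$ and $\min$, then checks it once for each of $a$, $b$, and the constant~$0$. This is cleaner and more conceptual: it explains \emph{why} the particular combination $\min(0,\max(a,b))$ works (any lattice expression in such functions would), whereas the paper's case split, while perfectly rigorous, somewhat obscures this structure. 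The paper's version has the minor advantage of being entirely self-contained, with no appeal to a general closure lemma, but your version is shorter and arguably more illuminating.
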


We conclude with the following.

\begin{lemma}
  Consider a graph $G=(V,E^+ \cup E^-,w)$ with $w(e)\geq 0$ for all
  $e\in E^+$, and let $m=\card{E^+ \cup E^-}$ and $n=\card{V}$.
  Consider also a negative sandwich $(x,U,y)$ and any integer
  $\beta > 1$. Let $b=\bw{\beta}(x,y)$ denote the $\beta$-betweenness of $x$
  and $y$.

  Then there exists an $O(\beta m \log n))$-time deterministic algorithm that
  finds a valid reweighting $\phi$ such that $U$ is
  $\min(\beta-1,n/b)$-remote in $G_\phi$
\end{lemma}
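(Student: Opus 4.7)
The plan is to invoke \lemref{sandwich} essentially as a black box, using the standard hop-limited SSSP routine of \lemref{BFD} to compute the required quantities.

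First, I would compute the two distance arrays appearing in the definition of $\phi$. Running BFD from source $x$ on $G$ with hop bound $\beta$ produces $\dist[G]{\beta}(x,v)$ for every $v \in V$, and running BFD from source $y$ on the transpose graph $G^T$ with hop bound $\beta$ produces $\dist[G]{\beta}(v,y)$ for every $v \in V$ (since $\dist[G^T]{\beta}(y,v) = \dist[G]{\beta}(v,y)$). By \lemref{BFD}, each call runs in $O(\beta(m + n\log n))$ time, which is $O(\beta m \log n)$ time under the standing assumption $m \geq 2n$. If either BFD call uncovers a negative-weight cycle, well, the statement only asserts we find a valid $\phi$, but in fact the way distances are defined, $\beta$-hop distances are always finite and BFD just returns them; no cycle reporting is needed at this step. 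Then in $O(n)$ additional time, set $\phi(v) \gets \min(0, \max(\dist[G]{\beta}(x,v), -\dist[G]{\beta}(v,y)))$ for every~$v$.

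Next I would verify the remoteness property. \lemref{sandwich} (applied directly, since $\beta > 1$ and the sandwich hypothesis is given) guarantees two things: $\phi$ is a valid reweighting, and every vertex $v$ that is \emph{not} $\beta$-between $x$ and $y$ satisfies $v \notin \reach[\phi]{\beta-1}(U)$. Contrapositively, $\reach[\phi]{\beta-1}(U)$ is contained in the set of vertices $\beta$-between $x$ and $y$, so $\card{\reach[\phi]{\beta-1}(U)} \leq b$. Now let $r = \min(\beta-1, \floor{n/b})$. Since $r \leq \beta - 1$, monotonicity of negative reach in the hop count gives $\reach[\phi]{r}(U) \subseteq \reach[\phi]{\beta-1}(U)$, so $\card{\reach[\phi]{r}(U)} \leq b$. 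Since also $r \leq n/b$, we have $b \leq n/r$, hence $\card{\reach[\phi]{r}(U)} \leq n/r$, establishing that $U$ is $r$-remote in $G_\phi$.

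There really is no serious obstacle: the work has been front-loaded into \lemref{sandwich}, and what remains is a two-line combination of (i) the bound on $\beta$-betweenness supplying a cap on $\card{\reach[\phi]{\beta-1}(U)}$, and (ii) monotonicity of negative $h$-hop reach in $h$, which lets us trade a smaller $r$ for the smaller denominator $n/r$ demanded by the definition of $r$-remoteness. Minor bookkeeping items to double-check in the write-up are the edge cases $b = 0$ (in which case $U$ is trivially $(\beta-1)$-remote, so $r = \beta - 1$), and that in the timing bound the $O(n\log n)$ term from Dijkstra inside BFD is absorbed into $O(m\log n)$ by the assumption $m \geq 2n$.
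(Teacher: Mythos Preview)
Your proposal is correct and follows essentially the same approach as the paper: compute $\phi$ via $\beta$-limited SSSP from $x$ and $\beta$-limited STSP to $y$ using BFD, then invoke \lemref{sandwich} for validity and the reach bound. The paper's proof is terser (it simply says ``the qualities of $\phi$ follow from \lemref{sandwich} and the definition of $r$-remote''), whereas you spell out the monotonicity-in-$h$ step and the arithmetic $b \leq n/r$ explicitly, which is a reasonable expansion.
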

\begin{proof}
  The qualities of $\phi$ follow from \lemref{sandwich} and the
  definition of $r$-remote.  The price function $\phi$ can be computed
  by solving $\beta$-limited SSSP from $x$ and $\beta$-limited STSP to
  $y$.   Applying the running time for BFD (\lemref{BFD}) completes
  the proof. 
\end{proof}

Choosing $\beta=r+1$ and $\tau=r$ as parameters in the betweenness
reduction (i.e., \lemref{betweenness-reduction}), we obtain the following
corollary:
\begin{corollary}\corlabel{sandwich}
  Suppose we are
  given a negative sandwich $(x,U,y)$ and that $\bw{r+1}(x,y) \leq
  n/r$ for integer $r \geq 1$. Then there is an
  $O(rm\log n)$-time deterministic algorithm that finds a valid price function $\phi$ such that $U$
  is $r$-remote in $G_\phi$.\qed
\end{corollary}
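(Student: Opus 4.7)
The plan is to obtain this corollary as a direct specialization of the preceding lemma by plugging in $\beta = r+1$, so no new technology is needed. First I would invoke the preceding lemma on the given negative sandwich $(x,U,y)$ with the parameter choice $\beta = r+1$; this immediately yields a deterministic $O(\beta m \log n) = O((r+1) m \log n) = O(rm \log n)$-time procedure that produces a valid price function $\phi$ with $U$ being $\min(\beta-1, n/b)$-remote in $G_\phi$, where $b = \bw{\beta}(x,y) = \bw{r+1}(x,y)$.

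Next I would translate the betweenness hypothesis $\bw{r+1}(x,y) \leq n/r$ into a lower bound on the remoteness parameter. Since $b \leq n/r$, we have $n/b \geq r$, and since $\beta - 1 = r$, we obtain $\min(\beta-1, n/b) = \min(r, n/b) = r$. Hence the lemma guarantees that $U$ is $r$-remote in $G_\phi$, which is precisely the conclusion of the corollary.

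The only mild subtlety worth mentioning (though it is implicit in the definitions) is a monotonicity observation: if $U$ were only guaranteed to be $s$-remote for some $s \geq r$, this still implies $r$-remoteness, because $\reach{r}(U) \subseteq \reach{s}(U)$ (an $r$-hop negative path is also an $s$-hop negative path), so $|\reach{r}(U)| \leq |\reach{s}(U)| \leq n/s \leq n/r$. In our case $s = r$ exactly, so this step is degenerate, but it confirms that the extraction of the $r$-remote conclusion from the more general $\min(\beta-1, n/b)$-remote statement is unambiguous. There is no real obstacle here: the entire argument is essentially a substitution, and all the work has already been done in proving the preceding lemma via \lemref{sandwich} and \lemref{BFD}.
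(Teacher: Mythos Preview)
Your proposal is correct and matches the paper's intended argument exactly: the corollary is marked with \qed in the paper because it is obtained by substituting $\beta = r+1$ into the immediately preceding lemma, which is precisely what you do. Your monotonicity remark is a harmless aside, since indeed $\min(\beta-1,n/b)=r$ on the nose here.
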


\noindent This step may fail to make $U$ become $r$-remote only if the Monte
Carlo betweenness reduction failed to ensure that $x$ and $y$ have low
betweenness; in this case, the entire
algorithm must be restarted.

\subsection{Finding a negative sandwich or independent set}\seclabel{over-findsandwich}

The final problem is that of finding a negative sandwich or
$1$-hop independent set.  The main tool is given by the following
lemma, proved in \secref{findsandwich}.

\begin{lemma}\lemlabel{sandwich-crust}
  Consider an input graph $G=(V,E^+\cup E^-, w)$ with $w(e) \geq 0$
  for all $e\in E^+$; let $n=\card{V}$, $m=\card{E^+\cup E^-}$.  Let
  $U_0$ be any subset of negative vertices in $G$, let $\hat{k} =
  \card{U_0}$, and let $\rho$ be an integer parameter with $1\leq \rho
  \leq \hat{k}$.  

  There exists a (Las Vegas) randomized algorithm whose running time
  is $O(m\log^2 n)$, with high probability, that takes as input $G$,
  $U_0$, and $\rho$ and always does one of the following:
  \begin{closeenum}
  \item correctly determines that $G$ contains a negative-weight cycle,
  \item returns a subset of negative vertices $U \subseteq U_0$ with $\card{U} =
    \Omega(\hat{k}/\rho)$ and a vertex $y$ such that
    for all $u\in U$, $\dist{1}(u,y) < 0$, or
  \item returns a $1$-hop independent set $I$ with $\card{I} = \Omega(\rho)$.
  \end{closeenum}
\end{lemma}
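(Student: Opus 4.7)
The plan is a dichotomy argument based on the density of 1-hop negative relations inside $U_0$, combined with random sampling and 1-hop BFD calls from \lemref{BFD}. Define $\sigma = \card{\set{(u,v) \in U_0 \times U_0 : \dist{1}(u,v) < 0}}$; by reindexing the sum, $\sigma = \sum_{y \in U_0} \card{R^{-}(y) \cap U_0}$ where $R^{-}(y) := \set{u \in V : \dist{1}(u,y) < 0}$. Fix a sufficiently small absolute constant $c$. In the \emph{dense} case $\sigma \geq c\hat{k}^2/\rho$, a simple averaging argument shows at least $c\hat{k}/(2\rho)$ vertices $y \in U_0$ satisfy $\card{R^{-}(y) \cap U_0} \geq c\hat{k}/(2\rho)$, and any such $y$ together with its set of predecessors in $U_0$ gives a valid sandwich half of size $\Omega(\hat{k}/\rho)$. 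In the \emph{sparse} case $\sigma < c\hat{k}^2/\rho$, the symmetric 1-hop relation graph on $U_0$ has $O(\sigma)$ edges, so a uniform random subset $I' \subseteq U_0$ of size $s = \Theta(\rho)$ has only $O(\sigma s^2/\hat{k}^2) = O(cs)$ internal conflicts in expectation; for $c$ small enough, removing one endpoint per conflict leaves a 1-hop independent set of size $\Omega(\rho)$ with at least constant probability.

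Algorithmically, one round tries three sub-strategies, each using $O(1)$ calls to the 1-hop BFD of \lemref{BFD} from a super-source and therefore costing $O(m\log n)$: (i) sample a single $y \in U_0$, run 1-hop BFD from $y$ in $G^T$, and return if $\card{R^{-}(y) \cap U_0}$ meets the sandwich threshold; (ii) sample $I_1 \subseteq U_0$ of size $\Theta(\rho)$, run 1-hop BFD from $I_1$ in $G^T$ with the witness augmentation of \lemref{BFD}, and return $(U,y)$ if some $y \in I_1$ collects a set $U \subseteq U_0$ of $\Omega(\hat{k}/\rho)$ witness-predecessors (each such witness $v$ has $\dist{1}(v,y) < 0$, so $(U,y)$ is a valid sandwich half); (iii) sample $I' \subseteq U_0$ of size $\Theta(\rho)$, run 1-hop BFD from $I'$ in both $G$ and $G^T$, mark every $v \in I'$ that has some other member of $I'$ at negative 1-hop distance in either direction, and return the unmarked remainder if it has size $\Omega(\rho)$. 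Every returned answer is directly verifiable, so the algorithm is Las Vegas; iterating $\Theta(\log n)$ independent rounds boosts success to whp for total time $O(m\log^2 n)$. Negative-weight cycles are detected opportunistically: whenever any BFD from a single source $v$ reports $\dist{1}(v,v) < 0$, a 1-hop negative cycle through $v$ exhibits a negative-weight cycle in $G$, so option~(1) is returned.

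The main technical obstacle is showing that in the dense case at least one of sub-strategies (i) and (ii) succeeds with constant probability per round. I expect a case split on the distribution of forward reaches $\card{R^{+}(u) \cap U_0}$ inside $U_0$: in the ``few-heavy'' regime where a small number of vertices have nearly complete forward reach, their presence alone forces \emph{every} $y \in U_0$ to be a sandwich target, so strategy~(i) succeeds on the very first sample; in the complementary ``many-medium'' regime, covering probabilities sum so that $\card{\set{v \in U_0 : \dist{1}(v, I_1) < 0}} = \Omega(\hat{k})$ with constant probability, and pigeonholing these $\Omega(\hat{k})$ witness assignments over the $\Theta(\rho)$ buckets of $I_1$ yields some $y \in I_1$ receiving $\Omega(\hat{k}/\rho)$ witnesses; since witness counts lower-bound $\card{R^{-}(y) \cap U_0}$, strategy~(ii) succeeds. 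Interpolating carefully between these regimes, with appropriately chosen constants, should deliver the per-round constant success probability in the dense case, after which the $\Theta(\log n)$ repetitions yield the Las Vegas guarantee claimed by the lemma.
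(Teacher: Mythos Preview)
Your overall framework---dichotomy on the pair count $\sigma$, Las Vegas via direct verification, and the random independent-set construction for the sparse case---is sound and matches the paper in spirit for case~(3). But your dense-case analysis has a real gap, and the paper takes a quite different (and simpler) route there.

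The paper does \emph{not} try to randomly sample a good target $y$. Instead it estimates $C(U_0,y)=\card{R^{-}(y)\cap U_0}$ for \emph{all} $y$ simultaneously: in each of $\Theta(\log n)$ rounds it samples $U'\subseteq U_0$ by including each vertex with probability $\rho/\hat{k}$, runs one \emph{forward} 1-hop BFD from $U'$, and increments a counter for every $y$ in the resulting reach. A Chernoff bound then shows that, with high probability, every $y$ with $C(U_0,y)\ge 2\hat{k}/\rho$ lands in the ``heavy'' bucket and every $y$ with $C(U_0,y)\le \hat{k}/(8\rho)$ lands in the ``light'' bucket. If any heavy $y$ exists, one STSP call from it recovers $U$ and the size is verified; otherwise \emph{every} $y\in U_0$ is light, which is exactly the hypothesis your sparse independent-set argument needs. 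There is no dichotomy on $\sigma$, no witness trick, and no interpolation between regimes.

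Your strategies (i) and (ii) do not obviously achieve constant per-round success in the dense case, and your ``few-heavy versus many-medium'' sketch does not close the gap. The problem is the intermediate regime where the good targets $Y_G=\set{y:C(U_0,y)=\Omega(\hat{k}/\rho)}$ form only a $\Theta(1/\rho)$ fraction of $U_0$ (so a single sample in (i) misses them) while the total coverage $\card{\set{v:\dist{1}(v,I_1)<0}}$ is only $o(\hat{k})$ (so your pigeonhole over $\card{I_1}=\Theta(\rho)$ buckets yields $o(\hat{k}/\rho)$ per bucket). Concretely, if $\sigma$ is carried by $\Theta(\hat{k}/\sqrt{\rho})$ source vertices each with forward reach $\Theta(\hat{k}/\sqrt{\rho})$ concentrated on the same target set $S$ of size $\Theta(\hat{k}/\sqrt{\rho})$, then coverage is $\Theta(\hat{k}/\sqrt{\rho})$ and your stated pigeonhole over all of $I_1$ gives only $\Theta(\hat{k}/\rho^{3/2})$. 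The algorithm may still succeed here---witnesses actually land only in $I_1\cap S$, of expected size $\Theta(\sqrt{\rho})$, and pigeonholing over \emph{that} gives $\Theta(\hat{k}/\rho)$---but this refined argument is not what you wrote, and making it work uniformly across all density profiles (in particular when bad targets in $I_1$ siphon off witnesses) requires substantially more care than ``interpolating carefully between these regimes.'' The paper's counting trick sidesteps all of this.
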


Given \lemref{sandwich-crust}, we immediately obtain the following
lemma by running the algorithm twice.

\begin{corollary}\corlabel{findsandwich}
  Consider an input graph $G=(V,E^+\cup E^-,w)$ with $w(e)\geq 0$ for
  all $e \in E^+$; let $n=\card{V}$, $m=\card{E^+\cup E^-}$, and $k =
  \card{E^-}$.   

  There exists a (Las Vegas) randomized algorithm whose running time
  is $O(m\log^2n)$, with high probability, that always does one of the
  following:
  \begin{closeenum}
  \item correctly determines that $G$ contains a negative-weight cycle,
  \item returns negative sandwich $(x,U,y)$ with $\card{U} =
    \Omega(k^{1/3})$, or
  \item returns a 1-hop independent set $I$ with $\card{I} =
    \Omega(k^{1/3})$.
  \end{closeenum}
\end{corollary}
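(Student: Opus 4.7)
The plan is to invoke Lemma~\ref{lem:sandwich-crust} twice, once on $G$ to obtain the right end $y$ of the sandwich together with a large candidate set, and once on the transpose graph $G^T$ applied to that candidate set to obtain the left end $x$. Each invocation can independently terminate with a negative-weight cycle (which immediately satisfies outcome~1) or with a $1$-hop independent set. I would choose $\rho = \Theta(k^{1/3})$ in both calls, so that (i) any independent-set outcome already has size $\Omega(\rho) = \Omega(k^{1/3})$, and (ii) the subset sizes compose correctly across the two calls.

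For the first call, let $U_0$ be the set of all $k$ negative vertices of $G$ and invoke the lemma on $G$ with parameter $\rho = \Theta(k^{1/3})$. In the half-sandwich outcome, the lemma returns $U_1 \subseteq U_0$ with $\card{U_1} = \Omega(k/\rho) = \Omega(k^{2/3})$ and a vertex $y$ such that $\dist[G]{1}(u,y) < 0$ for every $u \in U_1$. For the second call, I would then invoke the lemma on the transpose graph $G^T$ with $U_0 \gets U_1$ and the same $\rho = \Theta(k^{1/3})$; using $\dist[G^T]{1}(u,z) = \dist[G]{1}(z,u)$, the half-sandwich outcome yields $U_2 \subseteq U_1$ and a vertex $x$ (playing the role of the right-end vertex in $G^T$) with $\card{U_2} = \Omega(\card{U_1}/\rho) = \Omega(k^{1/3})$ and $\dist[G]{1}(x,u) < 0$ for all $u\in U_2$. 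Since $U_2 \subseteq U_1$ inherits $\dist[G]{1}(u,y) < 0$, the triple $(x, U_2, y)$ is a negative sandwich of size $\Omega(k^{1/3})$. If the second call instead returns a $1$-hop independent set in $G^T$, the symmetry $\dist[G^T]{1}(a,b) < 0 \iff \dist[G]{1}(b,a) < 0$ makes being $1$-hop related a relation preserved under transposition, so the set is also a $1$-hop independent set in $G$; a negative cycle in $G^T$ is likewise a negative cycle in $G$. Running time is the sum of two calls to Lemma~\ref{lem:sandwich-crust}, i.e.\ $O(m\log^2 n)$ with high probability.

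The only real obstacle is a bookkeeping mismatch: Lemma~\ref{lem:sandwich-crust} requires its $U_0$ to be a subset of negative vertices of the graph it is invoked on, whereas $U_1$ is a set of negative vertices of $G$ rather than of $G^T$ (whose negative vertices are heads of negative edges in $G$). I expect this to be resolved by observing that the role of $U_0$ in the construction of Section~\ref{sec:findsandwich} is merely as a set of source vertices whose $1$-hop reach structure is probed, and does not rely on each $u \in U_0$ being the tail of a negative edge in the graph being processed; equivalently, the lemma admits an entirely symmetric variant that searches for a common predecessor rather than a common successor, and this is the variant invoked in the second call. Either way, the two applications combine to give the three-way guarantee claimed in the corollary.
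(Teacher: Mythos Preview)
Your proposal is correct and follows essentially the same approach as the paper: apply \lemref{sandwich-crust} once on $G$ with $U_0$ all negative vertices and $\rho=\ceil{k^{1/3}}$, then again on $G^T$ with $U_0=U_1$ to find the other bread of the sandwich, handling the cycle and independent-set outcomes along the way. Your observation about the bookkeeping mismatch (that $U_1$ need not consist of negative vertices of $G^T$) is a genuine technicality that the paper's proof simply elides; your proposed resolution---that the algorithms of \secref{findsandwich} never actually use the hypothesis that $U_0$ consists of negative vertices, only that it is a set of source vertices whose $1$-hop reach is probed---is correct and is the intended reading.
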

\begin{proof}
  First, run the algorithm of \lemref{sandwich-crust} on $G$ with
  $U_0 = E^-$ and $\rho = \ceil{k^{1/3}}$.  From
  \lemref{sandwich-crust}, the algorithm does one of the following:
  (i) correctly determines that $G$ contains a negative-weight cycle,
  (ii) returns a 1-hop independent set with size
  $\Omega(\rho) = \Omega(k^{1/3})$, or (iii) returns a vertex $y$ and
  subset $U_1 \subseteq U_0$ with
  $\card{U_1} = \Omega(k/\rho) = \Omega(k^{2/3})$ such that
  $\dist{1}(u,y) < 0$ for all $u\in U_1$.  In cases (i) and (ii), we
  are done.  Otherwise, run the algorithm of \lemref{sandwich-crust}
  again but in the transpose graph (with all edges reversed) using
  $G$, $U_1$, and $\rho = \ceil{k^{1/3}}$.  The second execution thus either
  (i) identifies a cycle, (ii) returns an independent set with size
  $\Omega(\rho) = \Omega(k^{1/3})$, or (iii) returns a vertex $x$ and
  subset $U_2 \subseteq U_1$ with
  $\card{U_2} = \Omega(\card{U_1}/\rho) = \Omega(k^{2/3}/k^{1/3}) =
  \Omega(k^{1/3})$ such that $\dist{1}(x,u) < 0$ for all $u\in U_2$.
  In cases (i) and (ii), we are again done. In case (iii), $(x,U_2,y)$
  is a negative sandwich with size $\Omega(k^{1/3})$.
\end{proof}

We are left with the problem of constructively proving
\lemref{sandwich-crust}. For
the following, let
$C(U_0,v) = \card{\set{u \in U_0 | \dist[G]{1}(u,v) < 0}}$ denote the
number of vertices in $U_0$ that can reach $v$ with a negative-weight
$1$-hop path.

The algorithm for \lemref{sandwich-crust} is roughly as follows, with
details in \secref{findsandwich}.  The
first task is to estimate $C(U_0,v)$ for all $v\in V$.  More
precisely, the goal is to partition $U_0$ into two subsets $H$ and $L$
(for heavy and light, respectively), such that $\forall v\in H,C(U_0,v) =
\Omega(\hat{k}/\rho)$ and $\forall v \in L, C(U_0,v) = O(\hat{k}/\rho)$.  This
task can be accomplished by randomly sampling each vertex $u\in U$ with
probability $\rho/\hat{k}$ into a subset $U'$, then computing
$\reach{1}(U')$.  If $C(U_0,v) \gg \hat{k}/\rho $, then it is reasonably
likely that $v \in \reach{1}(U')$.  Conversely, if $C(U_0,v) \ll
\hat{k}/\rho$, then it is likely that $v \not\in \reach{1}(U')$.
Repeating this process $\Theta(\log n)$ times and applying a Chernoff
bound allows us to correctly partition the vertices, with high
probability.

If $H$ is nonempty, then select any $y$ in $H$ and run STSP to compute
$U = \set{u \in U_0 | \dist{1}(u,y) < 0}$.  Finally, verify that
$\card{U} = \Omega(\hat{k}/\rho)$ just in case the estimation
procedure failed. 

If instead $H$ is empty, then $L = U_0$, and all vertices $v\in U_0$
should have $C(U_0,v) = O(\hat{k}/\rho)$.  Then it is straightforward
to construct a large random independent set. Select a
uniformly random subset $I' \subseteq U_0$ with
$\card{I'} = \Theta(\rho)$.  Then, set $I = I' - \reach{1}(I')$, where
``$-$'' here denotes set subtraction, which ensures that the set~$I$
is independent.  For each vertex in $v \in I'$, there is only a
constant probability that there is another vertex $u \in I', u\neq v$
such that $\dist{1}(u,v) < 0$.  Thus, as long as there no
negative-weight cycles in the graph, there is at least a constant
probability that $\card{I} = \Omega(\rho)$.  Repeating $\Theta(\log
n)$ times gives high probability of successfully finding an
independent set.

\subsection{The full algorithm}\seclabel{combining}

Assuming all of the lemmas stated in this section, we are almost ready
to prove \thmref{main-elim}.  The only remaining pieces are
eliminating an independent set and determining the appropriate value
for~$r$.

\paragraph{Eliminating an independent set.}  Let $I$ be a $1$-hop
independent set of negative vertices in the graph
$G = (V, E^+\cup E^-,w)$.  Then eliminating these edges is
straightforward.  Consider the subgraph $G^{\id{out}(I)}$.  Then
simply use the price function
$\phi(v) = \dist[G^{\id{out}(I)}]{1}(V,v)$, which can be computed by
running 1-hop BFD in $O(m\log n)$ time.  It is not too hard to see
that this price function accomplishes the task.

\begin{lemma}\lemlabel{ind}
  Let $G = (V,E^+\cup \id{out}(I), w)$ be a subgraph of the input
  graph, where $I$ is a 1-hop independent set of the
  negative vertices.  Suppose that $w(e) \geq 0$ for all $e \in E^+$.
  Then the price function given by $\phi(v) = \dist[G]{1}(V,v)$ is a
  valid price function that eliminates all negative-weight edges from~$G$.
\end{lemma}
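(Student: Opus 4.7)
The plan is to verify separately the two requirements: (a) that $\phi$ is a valid reweighting (nonnegative edges remain nonnegative), and (b) that every edge $(u,v) \in \id{out}(I)$ becomes nonnegative under $w_\phi$. Part (a) is immediate from the triangle inequality: for every $(u,v) \in E^+$, \lemref{triangle} gives $\dist[G]{1}(V,v) \leq \dist[G]{1}(V,u) + w(u,v)$, which is precisely $w_\phi(u,v) \geq 0$. So the real content is part (b).

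The key intermediate claim I would establish is that $\phi(u) = 0$ for every $u \in I$. Since the empty path has weight $0$, we always have $\phi(u) = \dist[G]{1}(V,u) \leq 0$, so it suffices to rule out a negative $1$-hop path ending at $u$. Suppose for contradiction that some $s$ satisfies $\dist[G]{1}(s,u) < 0$. Because $G$'s only negative edges lie in $\id{out}(I)$, any $1$-hop path from $s$ to $u$ uses exactly one negative edge $(u',v')$ with $u' \in I$, and decomposes as a nonnegative $0$-hop segment $s \to u'$, the edge $(u',v')$, then a nonnegative $0$-hop segment $v' \to u$. The total weight being negative forces $w(u',v') + \dist[G]{0}(v',u) < 0$, which gives a $1$-hop path from $u'$ to $u$ of negative weight, i.e., $\dist[G]{1}(u',u) < 0$. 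But $u, u' \in I$ and $I$ is $1$-hop independent, contradiction. (If $G$ contains a negative cycle through $u$ itself, this already contradicts independence since $u$ would be $1$-hop related to itself; alternatively, one may argue independence implicitly requires acyclicity for this lemma, consistent with how it is invoked in \algref{full}.)

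With $\phi(u) = 0$ on $I$ in hand, part (b) follows in one line: for $(u,v) \in \id{out}(I)$, the edge itself is a $1$-hop path, so
\[ \phi(v) = \dist[G]{1}(V,v) \leq \dist[G]{1}(u,v) \leq w(u,v), \]
and hence $w_\phi(u,v) = w(u,v) + \phi(u) - \phi(v) = w(u,v) - \phi(v) \geq 0$.

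The main obstacle is the claim $\phi(u) = 0$ for $u \in I$; everything else is a direct triangle-inequality calculation. The subtlety there is ensuring that the negative edge appearing on a hypothetical negative $1$-hop path to $u$ originates at a vertex of $I$ distinct enough from $u$ to trigger the independence hypothesis, which is what allows the contradiction with \defref{th:independent set}-style $1$-hop independence.
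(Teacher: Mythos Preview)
Your proof is correct, but it takes a different route from the paper's. The paper proves the single global claim $\dist[G]{1}(V,v)=\dist[G]{}(V,v)$ for all $v$ by splitting an arbitrary path at its visits to $I$ and using $1$-hop independence to show each intermediate subpath has nonnegative weight; once $\phi$ is recognized as Johnson's price function, validity and elimination follow in one stroke. You instead verify validity and elimination separately, with the crux being the targeted claim $\phi(u)=0$ for $u\in I$; this is a special case of the paper's global claim, proved by essentially the same independence argument restricted to $1$-hop paths ending in~$I$. Your approach is more elementary and self-contained (no appeal to Johnson), while the paper's approach exposes the stronger structural fact that in $G^{\id{out}(I)}$ the $1$-hop super-source distances already coincide with the true distances. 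One small wording issue: when you write ``forces $w(u',v')+\dist[G]{0}(v',u)<0$,'' what you actually get is that the specific $v'\to u$ tail of your chosen path, together with $(u',v')$, has negative weight; that already gives $\dist[G]{1}(u',u)<0$, which is all you need.
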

\begin{proof}
  The claim is that $\dist{1}(V,v)=\dist{}(V,v)$.  If this claim is
  true, then $\phi(v)$ matches Johnson's price
  function~\cite{Johnson77}, and hence it eliminates all
  negative-weight edges.  Suppose for the sake of contradiction that
  there exists some $u$-to-$v$ path $p$ with $w(p) < \dist{1}(V,v)$.
  Split $p$ into subpaths at each vertex in $I$, giving rise to a
  sequence of nonempty subpaths $p_0,p_1,\ldots,p_\ell$.  Each subpath
  $p_i$ for $0 < i < \ell$ starts and ends at negative vertices and
  includes exactly one negative-weight edge.  Thus, because $I$ is a
  1-hop independent set, $w(p_i) \geq 0$ for $0<i<\ell$. If $u\in I$,
  then the subpath $p_0$ follows the same logic; if $u\not\in I$, then
  $p_0$ contains no negative edges. Either way, $w(p_0)\geq 0$.  We
  therefore have $w(p) = \sum_{i=0}^\ell w(p_i) \geq w(p_\ell)$, which
  contradicts the assumption that $p$ is a shorter path to $v$ than
  the 1-hop path $w(p_\ell)$. 
\end{proof}

\paragraph{Choosing $r$ to minimize runtime.}  Fixing
$\beta-1 = \tau = r$, there are two components that dominate the
running time of the algorithm: the betweenness reduction, with a
running time of $\tilde{O}(r^2m)$ (\lemref{betweenness-reduction}),
and eliminating the $r$-remote subset using hop reduction, with a
running time of $\tilde{O}((k^{1/3}/r)m + rm)$
(\lemref{hop-reduction}).  The total running time is thus
$\tilde{O}(m\cdot (r^2 + k^{1/3}/r))$, which is minimized by setting
$r=\Theta(k^{1/9})$, yielding $\tilde{O}(k^{2/9}m)$, as per
\thmref{main-elim}.

\paragraph{Proof of \thmref{main-elim}.}
Consider the steps of \algref{full}.  The first step is betweenness
reduction.  By \lemref{betweenness-reduction}, this step always either
correctly determines the graph contains a negative-weight cycle, or it
finds a valid price function $\phi_1$.  In the former case, the
algorithm terminates.  Otherwise, $\phi_1$ is valid so
$w_{\phi_1}(e) \geq 0$ for all $e \in E^+$ (which is a precondition of
the subsequent steps).  This step always takes $\tilde{O}(k^{2/9}m)$
time (\lemref{betweenness-reduction}).

The next step is to find a negative sandwich or 1-hop independent set.
By \corref{findsandwich}, there is an algorithm that always correctly
determine that the graph contains a negative-weight cycle, returns a
size $\Omega(k^{1/3})$ negative sandwich, or returns a
size-$\Omega(k^{1/3})$ 1-hop independent set.  This algorithm runs in
$\tilde{O}(m)$ time, with high probability.  Again, in the case of a
cycle, the algorithm terminates with a correct output.  In the case of
a 1-hop independent set, \lemref{ind} gives a way of finding a valid
price function that eliminates the independent set.  Thus, the
algorithm correctly eliminates $\Omega(k^{1/3})$ negative edges.  In
the last case, the algorithm continues to the next step.

The third step is to reweight the graph $G_\phi$ again to attempt to
establish remoteness.  From \lemref{sandwich}, the reweighting step
here always produces a valid price function~$\phi_2$.  Moreover,
\corref{sandwich} states that if the betweenness reduction was
successful in reducing the $\beta$-betweenness of all pairs of
vertices, then $U$ is $r$-remote in
$(G_{\phi_1})_{\phi_2} = G_{\phi_1+\phi_2}$.  By
\lemref{betweenness-reduction}, the betweenness reduction succeeds
with high probability, and thus this step also succeeds with high
probability.  (Otherwise, the entire algorithm restarts.)  When
proceeding past this point, $U$ is always an $r$-remote subset with
size $\Omega(k^{1/3})$ carrying over from the previous step, and
$\phi_1+\phi_2$ is always a valid price function.  This step takes
$\tilde{O}(k^{1/9}m)$ time from \corref{sandwich}.

The final step is to apply the hop-reduction technique on the graph
$G_{\phi_1+\phi_2}^{\id{out}(U)}$, where $\card{U} =
\Theta(k^{1/3})$
Because $U$ is $r$-remote, \lemref{hop-reduction} applies, indicating
that the $\Theta(k^{1/3})$ negative edges can be eliminated
deterministically in
$\tilde{O}(k^{2/9}m)$ time.   

Summing the running time of all steps gives $\tilde{O}(k^{2/9}m)$. \qed

\section{Betweenness Reduction}
\seclabel{between}

This section expands on the problem of betweenness reduction,
introduced in \secref{over-btw}, with the goal of proving
\lemref{betweenness-reduction}.  Throughout this section, let
$G=(V,E^+\cup E^-, w)$ denote the input graph and let $n=\card{V}$ and
$m=\card{E^+ \cup E^-}$.  The variables $\beta$ and $\tau$ denote the
parameters for betweenness reduction, with $\beta \geq 1$ and
$1\leq \tau \leq \card{V}$.  Recall that the goal is to find a price
function $\phi$ such that for all vertices $u,v$, we have
$\bw[\phi]{\beta}(u,v) \leq n/\tau$.  The algorithm is parameterized by a
constant $c\geq 3$ used to adjust the probability of success.

\begin{algorithm}[t]
\caption{Algorithm for betweenness reduction}\alglabel{btw}
\SetKwInOut{Input}{input}
\Input{A graph $G=(V,E^+\cup E^-,w)$ with $w(e) \geq 0$ for $e\in
  E^+$}
\Input{Parameters $\tau$ and $\beta$ and constant $c>1$, with $\beta \geq 1$ and $1 \leq \tau \leq
  \card{V}$ }
\ \\
\nl    let $n = \card{V}$\\
\nl let $T\subseteq V$ be a uniformly random subset of $c\tau\ceil{\ln n}$
vertices\\
\nl \lilabel{btw-start}\ForEach{$x \in T$}{
  \nl run $\beta$-hop SSSP and STSP, computing $\beta$-hop distances
  from and to $x$, respectively}
\nl construct a new graph $H = (V, E_H, w_H)$ as follows:\\
$E_H = (T \times V) \cup (V\times T)$ \\
$w_H(u,v) = \dist[G]{\beta}(u,v)$ using precomputed distances to/from
vertices in $T$\\
\nl let $\ell = 2\card{T}$ (which equals $2c\tau\ceil{\ln n}$)\\
\nl compute super-source distances $d(v) =
\dist[H]{\ell}(V,v)$ and $d'(v) = \dist[H]{\ell+1}(V,v)$ for all $v
\in V$\\
\nl \lIf{$\exists v$ such that $d'(v) < d(v)$}{terminate algorithm and
  report ``cycle''}
\nl \lElse{\KwRet{price function $\phi = d$}}
\end{algorithm}

\algref{btw} presents the algorithm for betweenness reduction.  The
algorithm begins by sampling a subset $T$ of vertices with
$\card{T}=c\tau\ceil{\ln n}$ vertices.  The remainder of the algorithm is
devoted to reweighting the graph so that all $\beta$-hop distances to
or from vertices in $T$ become nonnegative.

There are many straightforward ways to accomplish the goal of
nonnegative $\beta$-hop distances to/from $T$; \algref{btw} is just
one concrete example.  \algref{btw} proceeds by computing all
$\beta$-hop distances from each vertex in $T$ and all $\beta$-hop
distances to each vertex in $T$, using SSSP and STSP, respectively.
Then, an auxiliary graph $H$ is constructed.  The graph $H$ contains
all edges of the form $(x,v)$ and $(v,x)$ where $x\in T$ and $v\in V$.
Thus, all edges in $H$ are, by construction, incident on a vertex in
$T$.  The weights of these edges are the corresponding $\beta$-hop
distances in~$G$ that have already been computed.  The final step of
the algorithm is to apply Johnson's strategy~\cite{Johnson77} to~$H$.
That is, compute distances to each vertex using super-source shortest
paths.  Because all edges are incident on a vertex in $T$, the
computation stops at $2\card{T} + 1$ hops, at which point either the
algorithm has discovered a negative-weight cycle, or the
$2\card{T}$-hop distances are the actual shortest path distances
in~$H$.  Finally, these distances are returned as a price function
for~$G$.

There are two main aspects of correctness to prove.  (1) The algorithm
finds a price function $\phi$ such that all $\beta$-hop distances
to/from $x\in T$ in $G_\phi$ are nonnegative.  The idea here is that
from Johnson's strategy~\cite{Johnson77}, the shortest-path distances
in $H$ constitute a valid price function $\phi$ that eliminates all
negative edges in~$H$.  These edges in $H$ correspond to $\beta$-hop
paths in $G$ to/from vertices in~$T$.  Thus, applying $\phi$ to $G$
ensures that these $\beta$-hop paths have nonnegative weight.  (2) The
algorithm reduces the betweenness of all pairs to at most
$n/\tau$, as discussed in \secref{over-btw}.  The claims, along with
running time, are proved next.

\begin{lemma}\lemlabel{btw-reweighting}
  Consider an execution of \algref{btw} on input graph $G(V,E^+\cup E^-,w)$
  starting from \liref{btw-start} with any arbitrary subset
  $T\subseteq V$.  (That is, this claim does not rely on any
  randomness of the sample.) Then we have the following:
  \begin{closeitemize}
  \item If the algorithm reports a negative-weight cycle, then $G$
    contains a negative-weight cycle.
  \item Otherwise, the algorithm returns a price function $\phi$ such
    that for all $v \in V$ and $x\in T$:
    $\dist[G_\phi]{\beta}(x,v) \geq 0$ and
    $\dist[G_\phi]{\beta}(v,x) \geq 0$. Moreover, if the initial
    weight satisfies $w(e) \geq 0$ for
    all $e \in E^+$, then the price function is valid.
  \end{closeitemize}       
\end{lemma}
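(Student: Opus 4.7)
The plan is to handle the cycle-reporting and price-function-returning cases separately, relying on the structural fact that every edge of $H$ is incident on $T$; hence any simple path in $H$ uses at most $\ell = 2\card{T}$ total edges (each $T$-vertex is incident on at most two edges of a simple path).

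If the algorithm reports a cycle, then $d'(v) < d(v)$ for some $v$, so the best path to $v$ with at most $\ell+1$ negative edges is strictly shorter than the best with at most $\ell$; it must use exactly $\ell+1$ negative edges and therefore more than $\ell$ total edges, making it non-simple. This non-simple path must contain a cycle of negative weight---otherwise, iteratively excising nonnegative cycles would yield a simple path of no greater weight with at most $\ell$ total edges, contradicting the definition of $d(v)$. Expanding each edge $(a,b)$ of this negative $H$-cycle into the underlying $\beta$-hop path in $G$ of weight $w_H(a,b) = \dist[G]{\beta}(a,b)$ produces a negative-weight closed walk in $G$, and hence a negative-weight cycle.

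Otherwise $d' = d$ pointwise (since extra hops never increase distance), which forces the Bellman-Ford triangle inequality $d(b) \leq d(a) + w_H(a,b)$ on every edge $(a,b) \in E_H$ and makes $d$ equal to the true shortest-path distance from the super source in $H$. Applied to $(x,v), (v,x) \in E_H$ together with \lemref{phidist}, this yields $\dist[G_\phi]{\beta}(x,v) \geq 0$ and $\dist[G_\phi]{\beta}(v,x) \geq 0$. For validity on $E^+$, the easy subcase is when $u \in T$ or $v \in T$: then $(u,v) \in E_H$ with $w_H(u,v) = \dist[G]{\beta}(u,v) \leq w(u,v)$, and the $H$-triangle inequality gives $d(v) \leq d(u) + w(u,v)$ immediately. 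The main obstacle is the remaining subcase where $u, v \notin T$ (so $(u,v) \notin E_H$): here I would bound $d(v)$ by exhibiting a path in $H$ to $v$. Take a path realizing $d(u)$; if it has at least one edge, its last edge is $(y,u)$ for some $y \in T$ (every $H$-edge touches $T$, and $u \notin T$), and by optimality the prefix has cost $d(y)$, so $d(u) = d(y) + w_H(y,u)$. Replacing the final edge by $(y,v)$ produces a path of cost $d(y) + \dist[G]{\beta}(y,v) \leq d(y) + \dist[G]{\beta}(y,u) + w(u,v) = d(u) + w(u,v)$, where the inequality is the nonnegative-edge form of \lemref{triangle}. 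If the realizing path is trivial, then $d(u) = 0$ and $d(v) \leq 0 \leq w(u,v)$ directly. Either way, $w_\phi(u,v) = w(u,v) + d(u) - d(v) \geq 0$, completing validity.
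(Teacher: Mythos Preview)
Your proof is correct and follows essentially the same approach as the paper: both use the structural fact that every $H$-edge touches $T$ to bound simple paths by $2\card{T}$ edges, reduce the cycle case to finding a negative cycle in $H$ and expanding it back to $G$, and handle validity by tracing the last edge of a shortest $H$-path to $u$ through some $y\in T$ and applying \lemref{triangle}. Your case split for validity (on whether $u$ or $v$ lies in $T$, versus the paper's split on $d(u)=0$ or $d(u)<0$) is slightly more explicit—indeed the paper tacitly assumes $u\notin T$ when asserting the last edge $(x,u)$ has $x\in T$—but the argument is otherwise the same.
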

\begin{proof}
  Let us start with the following observation: all simple paths in $H$
  have size at most $2\card{T}$, which follows from the fact that all
  edges in $H$ are incident on vertices in~$T$.  (If the path is
  larger, some vertex in $T$ has at least 2 incoming or outgoing
  edges, and hence the path is not simple.)  Simple paths therefore
  also have at most $2\card{T}$ negative-weight edges.  Thus, $H$ has
  a negative-weight cycle if and only if there exists a vertex $v$
  such that $\dist[H]{\ell+1}(V,v) < \dist[H]{\ell}(V,v)$, where $\ell
  = 2\card{T}$. We have
  thus established that a cycle is reported if and only if $H$ has a
  negative-weight cycle.  Moreover, if no cycle is reported, then
  $d(v)$ is the actual super-source distance in~$H$, so the
  standard (not hop-limited) triangle inequality applies to~$d$.

  Next, suppose that $H$ has a negative-weight cycle~$C$. Then it is easy
  to see that $G$ does as well: replace each edge in $C$ with the
  corresponding $h$-hop path in $G$, which by construction has the same
  weight. Therefore, when the algorithm reports a negative-weight
  cycle, that result is correct.

  For the remainder, suppose that there is no negative-weight cycle
  in~$H$, so a price function is returned. Here we prove the claim
  that the distance to/from each sample is nonnegative.  By the standard
  triangle inequality, for all $x\in T$ and $v \in V$ (and hence
  $(x,v)\in E_H$), we have
  $d(v) \leq d(x) + w_H(x,v) = d(x) + \dist[G]{\beta}(x,v)$, or
  $\dist[G]{\beta}(x,v) + d(x) - d(v) \geq 0$.  Setting $\phi = d$ and
  using \lemref{phidist}, we thus have
  $\dist[G_\phi]{\beta}(x,v) = \dist[G]{\beta}(x,v) + \phi(x) -
  \phi(v) = \dist[G]{\beta}(x,v) + d(x) - d(v) \geq 0$.  Similarly, by the
  symmetric argument now considering the edge $(v,x) \in E_H$, we have
  $d(x) \leq d(v) + w_H(v,x) = d(v) + \dist[G]{\beta}(v,x)$, or
  $\dist[G]{\beta}(v,x) + d(v) - d(x) \geq 0$.  Thus
  $\dist[G_\phi]{\beta}(v,x) = \dist[G]{\beta}(v,x) + \phi(v) -
  \phi(x) \geq 0$.

  Finally, let us address the validity of the price function~$\phi=d$.
  We shall again prove this using the triangle inequality.  The only
  issue here is that $H$ does not include all edges in $E^+$, so we
  cannot directly apply the triangle inequality on computed distances
  to these edges.  Start by noting that $d(v) \leq 0$ for all $v\in V$
  from the empty path.  Now consider any edge $(u,v) \in E^+$ and
  suppose $w(u,v) \geq 0$.  If $d(u) = 0$ then trivially
  $w_\phi(u,v) = w(u,v) - d(v) \geq w(u,v) \geq 0$.  Suppose instead
  that $d(u) < 0$.  Then a shortest path to $u$ in $H$ is nonempty and
  must end with a last edge $(x,u) \in E_H$ for some $x \in T$; that
  is, $d(u) = d(x) + w_H(x,u)$.  By \lemref{triangle} on~$G$, for
  $(u,v)\in E^+$ we have
  $\dist[G]{\beta}(x,v) \leq \dist[G]{\beta}(x,u) + w(u,v)$, and hence
  $w_H(x,v) \leq w_H(x,u) + w(u,v)$.  Thus, using the triangle
  inequality in $H$ we have
  $d(v) \leq d(x) + w_H(x,v) \leq d(x) + w_H(x,u) + w(u,v) = d(u) +
  w(u,v)$, or $w_\phi(u,v) = w(u,v) + d(u) - d(v) \geq 0$.
\end{proof}

\begin{lemma}\lemlabel{btw-time}
  Suppose that the input graph $G=(V,E^+\cup E^-,w)$ satisfies
  $w(e) \geq 0$ for all $e \in E^+$.  Let $m=\card{E^+\cup E^-}$ and
  $n=\card{V}$.  Then there is a realization of \algref{btw} that runs
  in $O(\beta\tau \log n (m+n\log n) + \tau^2 n\log^2n)$ time.
\end{lemma}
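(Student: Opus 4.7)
The plan is to simply sum the per-step costs of \algref{btw}, invoking \lemref{BFD} for each of the two hop-limited shortest-path computations.

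First I would handle the work on the input graph $G$ in \liref{btw-start}. For each of the $|T|=c\tau\lceil\ln n\rceil$ sampled vertices $x\in T$, the algorithm runs $\beta$-hop SSSP from $x$ and $\beta$-hop STSP to $x$ (the latter being a $\beta$-hop SSSP from $x$ on the transpose graph $G^T$, whose size is the same as $G$). By \lemref{BFD}, each such call costs $O((\beta+1)(m+n\log n))$. Summing over the $|T|$ samples yields $O(\beta\tau\log n\cdot(m+n\log n))$, which is the first term of the claimed bound.

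Next I would turn to the auxiliary graph $H$ and its super-source computation. Since $E_H=(T\times V)\cup(V\times T)$, we have $|E_H|=O(|T|\cdot n)=O(\tau n\log n)$, and $H$ can be constructed in $O(|E_H|)$ time because each weight $w_H(u,v)=\dist[G]{\beta}(u,v)$ is a direct lookup into the $\beta$-hop distances precomputed in the previous step. To compute $d$ and $d'$, I would invoke \lemref{BFD} on $H$ with hop limit $\ell+1=2|T|+1=O(\tau\log n)$ from the implicit super source. The cleanest realization is to classify every edge of $H$ as a negative edge (any such edge may in fact have negative weight), so that BFD runs in time $O((\ell+1)(|E_H|+n\log n))=O(\tau\log n\cdot(\tau n\log n+n\log n))=O(\tau^2 n\log^2 n)$, which is the second term.

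The remaining operations—sampling $T$, testing whether some $d'(v)<d(v)$, and returning $\phi=d$—take $O(n+|T|)=O(n+\tau\log n)$ time and are absorbed by the bounds above. Summing all contributions produces the claimed $O(\beta\tau\log n(m+n\log n)+\tau^2 n\log^2 n)$. There is no real obstacle here: the analysis is a straightforward application of \lemref{BFD}, and the only small judgment call is how to treat edges of $H$ when invoking BFD, for which the worst-case choice (all negative) already suffices.
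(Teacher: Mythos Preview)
Your proposal is correct and follows essentially the same approach as the paper: bound the $\beta$-hop SSSP/STSP calls on $G$ via \lemref{BFD} to get the first term, bound $|E_H|=O(\tau n\log n)$, and then apply \lemref{BFD} to the $\Theta(\tau\log n)$-hop super-source computation in $H$ for the second term. Your explicit choice to treat every edge of $H$ as a negative edge when invoking BFD is a clean way to justify the hop count; the paper leaves this implicit (and notes in a footnote that plain Bellman-Ford rounds would also suffice here).
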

\begin{proof} The two dominant costs of the algorithm are the
  hop-limited SSSP computations.
  
  Since $w(e) \geq 0$ for all $e\in E^+$, we can apply \lemref{BFD} to
  compute $\beta$-hop distances in $G$, giving a running time of
  $O(\beta(m+n\log n))$ per such computation.  The cost of computing
  $\beta$-hop distances to and from vertices in $T$ is thus
  $\card{T} \cdot O(\beta(m+n\log n)) = O((\tau \log n)\cdot\beta (m +n\log
  n))$
  
  Next, consider the single super-source computation in $H$.  By
  construction,
  $\card{E_H} \leq 2\card{T \times V} = \Theta(\tau n\log n)$.
  Applying BFD to the $\Theta(\tau \log n)$-hop SSSP in $H$ gives a
  running time of
  $O((\tau \log n) \cdot (\tau n \log n + n\log n))$.\footnote{In
    fact, this bound can be improved to remove the $n\log n$ term by
    observing (as in the start of the proof of \lemref{btw-reweighting}) that we do
    not actually need BFD here---$\Theta(\tau \log n)$ rounds of
    Bellman-Ford suffice.  But given that we have not established
    notation for ``$h$ rounds of Bellman-Ford,'' the weaker bound is
    used here.}
\end{proof}

When $\beta - 1 = \tau = \Theta(r)$, this bound simplifies to $O((r^2\log n)
\cdot (m+n\log n)) =\tilde{O}(r^2 m)$.

\begin{lemma}\lemlabel{btw-highprob}
  Consider an execution of \algref{btw} on input graph
  $G=(V,E^+\cup E^-,w)$ and let $n=\card{V}$.  Then with probability at least
  $1-1/n^{c-2}$, the algorithm either
  \begin{closeitemize}
  \item correctly reports a negative-weight cycle, or
  \item returns a
  price function $\phi$ such that for all $u,v \in V$,
  $\bw[\phi]{\beta} \leq n/\tau$.  
\end{closeitemize}
\end{lemma}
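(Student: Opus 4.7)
My plan is to split on whether \algref{btw} reports a cycle. The cycle case is handled deterministically by \lemref{btw-reweighting}, so I focus on the case where a price function $\phi$ is returned. It suffices to show that, for each fixed ordered pair $(u,v)$, the event $\bw[\phi]{\beta}(u,v) \leq n/\tau$ holds with probability at least $1 - n^{-c}$ over the random sample $T$; a union bound over the $n^2$ ordered pairs then yields the claimed $1 - n^{-(c-2)}$ success probability.

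The key observation, previewed in \secref{over-btw}, is that by \lemref{phidist},
\[ \thru[\phi]{\beta}(u,x,v) = \thru{\beta}(u,x,v) + \phi(u) - \phi(v), \]
so reweighting shifts every through-distance by a constant independent of the middle vertex $x$. Hence the relative ordering of candidate middle vertices is invariant under $\phi$. I would fix a total order on $V$ refining the partial order by $\thru{\beta}(u,\cdot,v)$ (breaking ties arbitrarily) and let $P_{uv}$ consist of the $\floor{n/\tau} + 1$ smallest vertices in this order. Note that $P_{uv}$ depends only on $G$, not on $T$ or $\phi$, which is crucial for the probabilistic argument below.

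I would then apply \lemref{btw-reweighting}: since no cycle was reported, every $y \in T$ has $\dist[\phi]{\beta}(u,y) \geq 0$ and $\dist[\phi]{\beta}(y,v) \geq 0$, and hence $\thru[\phi]{\beta}(u,y,v) \geq 0$. If $T \cap P_{uv} \neq \emptyset$, pick any such $y$; then every $x$ with $\thru[\phi]{\beta}(u,x,v) < 0$ has strictly smaller through-distance than $y$ both before and after reweighting, so $x$ lies strictly earlier than $y$ in the fixed order, giving $\bw[\phi]{\beta}(u,v) \leq \card{P_{uv}} - 1 = \floor{n/\tau} \leq n/\tau$. Finally, since $\card{P_{uv}}/n \geq 1/\tau$, standard without-replacement sampling gives $\Pr[T \cap P_{uv} = \emptyset] \leq (1 - 1/\tau)^{\card{T}} \leq e^{-c \ln n} = n^{-c}$, and the union bound closes the argument. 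I do not anticipate a serious obstacle here; the one subtlety worth flagging is that $P_{uv}$ must be defined from the pre-reweighting $\thru{\beta}$ so that it is $T$-independent, and the order-preservation identity above is precisely what lets the sampling guarantee translate into a bound on the reweighted betweenness.
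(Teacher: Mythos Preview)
Your proposal is correct and follows essentially the same approach as the paper's proof: fix a pair $(u,v)$, order vertices by $\thru{\beta}(u,\cdot,v)$ in the \emph{unreweighted} graph, use \lemref{phidist} to observe that reweighting shifts all through-distances by the same amount, apply \lemref{btw-reweighting} to conclude that any sampled vertex has nonnegative through-distance in $G_\phi$, bound the failure probability that $T$ misses the bottom $\approx n/\tau$ ranks by $(1-1/\tau)^{\card{T}} \leq n^{-c}$, and take a union bound over the $n^2$ pairs. The only cosmetic difference is that the paper phrases the threshold as the rank $j$ of the lowest-indexed sample versus your set $P_{uv}$ of size $\floor{n/\tau}+1$, but these are equivalent.
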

\begin{proof}
  Consider a particular pair $u,v \in V$.  The proof focuses on
  showing that the claim holds with high probability for this
  pair. Then taking a union bound across all $n^2$ pairs proves the
  lemma.  All distances in this proof are distance in $G$ or
  $G_\phi$, so the subscript $G$ is omitted.

  Number all the vertices in $V$ as $x_1, x_2, \ldots, x_n$ such that
  $\thru{\beta}(u,x_1,v) \leq \thru{\beta}(u,x_2,v) \leq \cdots \leq
  \thru{\beta}(u,x_n,v)$.  Now let $y = x_j$ be the sampled vertex
  with lowest index/rank in the numbering.  If the algorithm reports a
  cycle, then by \lemref{btw-reweighting} this reporting is correct.
  For the remainder, suppose instead that the algorithm returns a price
  function $\phi$.

  By \lemref{btw-reweighting}, $\phi$ is such that
  $\dist[\phi]{\beta}(u,y) \geq 0$ and
  $\dist[\phi]{\beta}(y,v) \geq 0$ and hence
  $\thru[\phi]{\beta}(u,y,v) \geq 0$.  From \lemref{phidist}, for all
  $x \in V$, we have
  $\thru[\phi]{\beta}(u,x,v) = \dist{\beta}(u,x) + \phi(u) -\phi(x) +
  \dist{\beta}(x,v) + \phi(x) - \phi(v) = \thru{\beta}(u,x,v) +
  \phi(u) - \phi(v)$.  Moreover, for all $x_i$ with $i \geq j$, we
  have $\thru{\beta}(u,x_i,v) \geq \thru{\beta}(u,y,v)$, and hence
  $\thru[\phi]{\beta}(u,x_i,v) = \thru{\beta}(u,x_i,v) + \phi(u) -
  \phi(v) \geq \thru{\beta}(u,y,v) + \phi(u) - \phi(v) =
  \thru[\phi]{\beta}(u,y,v) \geq 0$.  Thus, $\bw[\phi]{\beta}(u,v)
  \leq j-1$, where $x_j$ is the lowest-rank sampled vertex.  As long
  as $j \leq \ceil{n/\tau}$, we have $j-1 < n/\tau$ and hence
  $\bw[\phi]{\beta}(u,v) < n/\tau$.
  
  A failure event (the algorithm neither reports a cycle nor hits the
  betweenness guarantee) can thus only occur if $j>\ceil{n/\tau}$.
  The last step of the proof is to bound this probability.
  For $j$ to be this large, each sample must be drawn from the
  $b=n-\ceil{n/\tau}$ other vertices.  If $b < \card{T}$, then there
  is never a failure. Otherwise, the failure probability is given by 
  $\left(\frac{b}{n}\right)\left(\frac{b-1}{n-1}\right)\left(\frac{b-2}{n-2}\right)\cdots\left(\frac{b-\card{T}+1}{n-\card{T}+1}\right)
  \leq \left(\frac{b}{n}\right)^{\card{T}} =
  \left(1-\frac{\ceil{n/\tau}}{n}\right)^{\card{T}} \leq
  \left(1-\frac{1}{\tau}\right)^{\card{T}} \leq (1-1/\tau)^{c\tau \ln n}
  \leq (1/n)^c$.  
\end{proof}

\paragraph{Proof of \lemref{betweenness-reduction}.} By assumption in
the lemma statement,
$w(e) \geq 0$ for all $e \in E^+$. 
Thus
\lemref{btw-time} can be applied, and the algorithm always meets the
promised running time.  Moreover, by \lemref{btw-reweighting}, the
algorithm always either correctly reports a cycle or returns a valid
price function.  Finally,
\lemref{btw-highprob} states that algorithm is successful with high
probability, in which case it reports a cycle or a price function with
the desired $\beta$-betweenness guarantee. \qed
\section{Finding a Negative Sandwich or Independent Set}\seclabel{findsandwich}

This section expands on the problem of finding a negative sandwich or
independent set, as introduced in \secref{over-findsandwich}.  The
bulk of this section is devoted to proving \lemref{sandwich-crust}.
Recall that the input comprises the graph $G=(V,E^+\cup E^-,w)$, a
subset $U_0$ of negative vertices with $\hat{k} = \card{U_0}$, and
integer parameter $\rho$ with $1 \leq \rho \leq \hat{k}$.  

As outlined in \secref{over-findsandwich}, the first task of
\lemref{sandwich-crust} is to partition the negative vertices in $U_0$
into a heavy and light set.

\begin{algorithm}[t]
\caption{Algorithm to partition into heavy and light sets}\alglabel{partition}
\SetKwInOut{Input}{input}
\SetKwInOut{Output}{output}
\SetKwFunction{partition}{HL-Partition}
\SetKwProg{Fn}{}{}{end}

\Input{A graph $G=(V,E^+\cup E^-,w)$ with $w(e) \geq 0$ for $e\in
  E^+$}
\Input{Subset $U_0$ of negative vertices and integer $\rho$ with
  $1\leq \rho \leq \card{U_0}$}
\Output{A partition $\ang{H,L=U_0-H}$ of $U_0$}
\  \\
\Fn{\partition{$G=(V,E^+\cup E^-,w)$, $U_0$, $\rho$}}{
\nl let $\hat{k} = \card{U_0}$\\
\nl \lForEach{$v \in V$}{$\id{count}(v)
  \gets 0$}  
\nl \For{$c \ceil{\ln n}$ times}{
\nl generate set $U'$ by sampling each vertex in $U_0$ with probability
$\rho/\hat{k}$\\
\nl compute $R = \reach[G]{1}(U')$\\
\nl \lForEach{$v \in R$}{$\id{count}(v) \gets
  \id{count}(v) + 1$}
} %EndFor
\nl $H \gets \set{u\in U_0 | \id{count}(u) \geq (c/2)\ceil{\ln n}}$\\
\nl $L \gets U_0 - H$\\
\nl \Return{$\ang{H,L}$}
} %EndFn
\end{algorithm}

The partitioning algorithm is given by \algref{partition}.  The
algorithm is parameterized by a constant $c\geq 6$ that controls the
probability of failure.  The algorithm is straightforward.  Sample
each vertex in $U_0$ independently with probability $\rho/\hat{k}$ to
get a random subset $U'$. For each vertex in the $1$-hop reach of
$U_0$, increment a counter.  Repeat this process $c \ceil{\ln n}$ times.
Finally, the set $H$ is the set of vertices in $U_0$ with counts at
least $(c/2)\ceil{\ln n}$.

To prove the algorithm works, recall that
$C(U_0,v) = \card{\set{u \in U_0 | \dist{1}(u,v) < 0}}$.  Define a
vertex as \defn{heavy} if $C(U_0,v) \geq 2\hat{k}/\rho$ and
\defn{light} if $C(U_0,v) \leq (1/8)\hat{k}/\rho$.  (Some vertices are
neither heavy nor light.)

\begin{lemma}\lemlabel{partition}
  Consider an execution of \algref{partition} with input $G$, $U_0$,
  $\rho$.  Then with probability at least $1-1/n^{c/3-1}$, the partition
  is such that all heavy vertices in $U$ are in $H$ and all light vertices are in $L$.  Equivalently,
  with high probability:
  $\forall v \in H, C(U_0,v) > (1/8)\hat{k}/\rho$ and
  $\forall v \in L, C(U_0,v) < 2\hat{k}/\rho$.
\end{lemma}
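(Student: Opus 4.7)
The plan is to analyze each vertex $v \in U_0$ separately, bound its per-iteration probability of landing in the set $R = \reach[G]{1}(U')$, apply a Chernoff concentration bound across the $T = c\ceil{\ln n}$ independent iterations, and finally union bound over all of $U_0$.

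First I would compute, for a fixed $v$ and a single iteration, the probability $\prob{v \in R}$. Since each $u \in U_0$ is placed in $U'$ independently with probability $\rho/\hat{k}$, and $v \in R$ iff at least one $u$ with $\dist[G]{1}(u,v) < 0$ lands in $U'$, we get $\prob{v \in R} = 1 - (1-\rho/\hat{k})^{C(U_0,v)}$. The inequality $1 - x \leq e^{-x}$ gives $\prob{v \in R} \geq 1 - e^{-2}$ whenever $v$ is heavy (i.e., $C(U_0,v) \geq 2\hat{k}/\rho$), while the Bernoulli inequality $(1-x)^k \geq 1 - kx$ gives $\prob{v \in R} \leq 1/8$ whenever $v$ is light (i.e., $C(U_0,v) \leq (1/8)\hat{k}/\rho$).

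Next I would analyze $\id{count}(v)$, which is a sum of $T$ independent Bernoulli indicators since the iterations are mutually independent. For a heavy $v$, the expected count is at least $(1-e^{-2}) T > 0.8 \, T$, which sits bounded away from the classification threshold $T/2$ by a constant multiplicative margin, so a standard Chernoff (or Hoeffding) lower-tail bound gives $\prob{\id{count}(v) < T/2} \leq n^{-\Omega(c)}$. Symmetrically, for a light $v$, the expected count is at most $T/8$, and the upper-tail Chernoff bound applied to the factor-$4$ deviation yields $\prob{\id{count}(v) \geq T/2} \leq n^{-\Omega(c)}$. By plugging in explicit constants (e.g., using Hoeffding with the gaps $(1-e^{-2}) - 1/2$ and $1/2 - 1/8 = 3/8$, respectively) and by taking the constant $c \geq 6$, both per-vertex failure probabilities can be bounded by $1/n^{c/3}$.

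Finally I would union bound: there are at most $\hat{k} \leq n$ vertices in $U_0$, and each has per-vertex failure probability at most $1/n^{c/3}$, so the probability that some heavy vertex ends up in $L$ or some light vertex ends up in $H$ is at most $n \cdot 1/n^{c/3} = 1/n^{c/3 - 1}$. The equivalent contrapositive form in the lemma is then immediate: if no truly-light vertex sits in $H$, then every $v \in H$ satisfies $C(U_0,v) > (1/8)\hat{k}/\rho$, and symmetrically for $L$. The only nontrivial aspect of the argument is the bookkeeping of constants inside the Chernoff exponent to reach the specific threshold $c/3 - 1$; otherwise, this is a routine sampling-plus-concentration-plus-union-bound calculation.
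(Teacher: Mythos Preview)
Your proposal is correct and follows essentially the same argument as the paper: per-iteration probability bounds of $1-e^{-2}$ (heavy) and $1/8$ (light), Chernoff--Hoeffding concentration over the $c\ceil{\ln n}$ independent iterations to get a per-vertex failure probability of $1/n^{c/3}$, and a union bound over $U_0$. The paper's only cosmetic differences are that it writes the light-case bound via the union bound $E[X_i]\leq(\rho/\hat{k})C(U_0,v)$ rather than the Bernoulli inequality, and it explicitly plugs constants into the relative-entropy form of Chernoff--Hoeffding to obtain the $c/3$ exponent.
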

\begin{proof}
  Consider a heavy vertex~$v \in U_0$.  Let $X_i$ be the indicator
  that $\id{count}(v)$ increases in the $i$th iteration of the loop,
  and let $X = \id{count}(v) = \sum_{i=1}^{c\ceil{\ln n}} X_i$.  In each
  iteration of the loop, $\Pr(X_i = 0)$ is the probability that none
  of the vertices that can reach $v$ are sampled, which is
  $\Pr(X_i = 0) \leq (1-\rho/\hat{k})^{C(U_0,v)} \leq
  (1-\rho/\hat{k})^{2\hat{k}/\rho} \leq 1/e^2$. Let
  $p = E[X_i]$. Then $p = \Pr(X_i = 1) \geq (1-1/e^2) > 6/7$.  Because
  the $X_i$'s are independent identically distributed indicators, we can apply a
  Chernoff-Hoeffding bound to get $\Pr(X \leq (1/2)c\ceil{\ln n})$.  In
  particular, set $\epsilon = p - 1/2$ or $1/2=p-\epsilon$.  Then we
  have
  $\Pr(X \leq (1/2)c\ceil{\ln n}) = \Pr(X \leq (p-\epsilon) c\ceil{\ln n}) \leq
  \left(\left(\frac{p}{1/2}\right)^{1/2}
    \left(\frac{1-p}{1/2}\right)^{1/2}\right)^{c\ceil{\ln n}} \leq
  (1/e)^{(1/3)c\ln n} = 1/n^{c/3}$ when $p\geq 6/7$. 

  Consider instead a light vertex $v$.  Again, let $X_i$ be the
  indicator that $\id{count}(v)$ increases in the $i$th iteration and
  $X = \sum_{i=1}^{c\ceil{\ln n}}X_i$.  Now we have
  $E[X_i] \leq (\rho/\hat{k})\cdot C(U_0,v) \leq 1/8$ by the union
  bound.  Let $p=E[X_i] \leq 1/8$.  Again, the $X_i$'s are i.i.d.\
  indicators, so the Chernoff-Hoeffding bound applies.  In particular,
  set $\epsilon = 1/2-p$ or $p+\epsilon=1/2$.  Then we have
  $\Pr(X \geq (1/2)c\ceil{\ln n}) = \Pr(X \geq (p+\epsilon)c\ceil{\ln n}) \leq
  \left(\left(\frac{p}{1/2}\right)^{1/2}
    \left(\frac{1-p}{1/2}\right)^{1/2}\right)^{c\ceil{\ln n}} \leq
  (1/e)^{(1/3)c\ln n} = 1/n^{c/3}$ when $p \leq 1/8$.  

  Taking the union bound across all vertices in $U_0$, the probability
  that any heavy or light vertex is misclassified is at most
  $1/n^{c/3-1}$.  This bound is only meaningful if $c$ is strictly
  larger than~$3$.
\end{proof}

\begin{algorithm}[t]
\caption{Algorithm to find a random $1$-hop independent set}\alglabel{indset}
\SetKwInOut{Input}{input}
\SetKwInOut{Output}{output}
\SetKwFunction{ind}{RandIS}
\SetKwProg{Fn}{}{}{end}

\Input{A graph $G=(V,E^+\cup E^-,w)$ with $w(e) \geq 0$ for
  $e\in E^+$} \Input{Subset $U_0$ of negative vertices and integer
  $\rho$ with $1\leq \rho \leq \card{U_0}$} \Output{A $1$-hop
  independent set $I \subseteq U_0$}
\  \\
\Fn{\ind{$G=(V,E^+\cup E^-,w)$, $U_0$, $\rho$}}{
  \nl let $I'$ be a uniformly random size-$\ceil{\rho/4}$ subset of $U_0$\\
  \nl solve the super-source problem to compute
  $d(v) = \dist[G]{1}(I',v)$ and also a corresponding starting vertex
  $s(v) \in I'$ such that $d(v) =
  \dist[G]{1}(s,v)$\\
  \nl \ForEach{$u \in I'$}{\nl \lIf{$d(u) < 0$ and $s(u) = u$}{terminate
      algorithm and report ``cycle''}}
  \nl $R \gets \set{v | d(v) < 0}$\\
  \nl $I \gets I' - R$\\
  \nl \Return{$I$}
} %EndFn
\end{algorithm}

Now let us turn to the task of finding an independent set in the event
that the returned partition has $H = \emptyset$.  The algorithm is
given by \algref{indset}.  The algorithm is simple: sample a uniformly
random
size-$\ceil{\rho/4}$ subset $I'$ of $U_0$, and then remove from $I'$
any vertices than can be reached by negative-weight $1$-hop paths from
any other vertex in $I'$.  It is easy to see that this set is now
a $1$-hop independent set.

There is one other issue: if there are
negative-weight $1$-hop cycles in~$G$, then we cannot bound the
likelihood that the independent set is large.  Thus, the algorithm
also checks whether any of the shortest paths computed by the
black-box subroutine correspond to negative-weight cycles.  In
particular, recall that for the super-source version of the problem,
\lemref{BFD} states that BFD (and indeed any relaxation-based SSSP algorithms)
can be augmented to return some vertex $s(v) \in I'$ such that
$\dist[G]{1}(I',v) = \dist[G]{1}(s(v),v)$.  If $s(v) = v$ and the
distance to~$v$ is negative, then a negative-weight cycle is reported.  Once
a cycle is reported, the entire algorithm terminates.

\begin{lemma}\lemlabel{indset}
  Consider an execution of \algref{indset} with input $G$, $U_0$,
  $\rho$.  The algorithm always correctly reports a negative-weight
  cycle (i.e., only if $G$ has a negative-weight cycle) or returns a 1-hop
  independent set $I\subseteq U_0$.

  Suppose that there are no heavy vertices in~$U_0$.  Then the
  probability that the algorithm returns an independent set with
  $\card{I} < \rho/16$ is at most $5/6$.  Conversely, with probability
  at least $1/6$, the algorithm either correctly reports a cycle or
  returns an independent set with~$\card{I} \geq \rho/16$. 
\end{lemma}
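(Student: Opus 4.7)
The plan is to split the proof into two parts mirroring the lemma statement: unconditional well-formedness of the output, and the conditional probability bound. For the first part, I will show (i) any cycle report is truthful, and (ii) any returned set $I$ is 1-hop independent. Part (i) follows by observing that reporting requires some $u \in I'$ with $d(u) < 0$ and $s(u) = u$; by \lemref{BFD} this gives $\dist[G]{1}(u,u) = d(u) < 0$, a genuine negative-weight 1-hop cycle in $G$. For (ii), I will suppose toward contradiction that $u, v \in I$ are distinct with (WLOG) $\dist[G]{1}(v,u) < 0$; since $v \in I \subseteq I'$, the computed $d(u) = \dist[G]{1}(I',u) \leq \dist[G]{1}(v,u) < 0$, so $u \in R$, contradicting $u \in I = I' - R$.

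For the probability bound, assume no $u \in U_0$ is heavy, so $C(U_0,u) < 2\hat{k}/\rho$ for every $u \in U_0$. I plan to introduce the (algorithm-independent) random variable
\[ Y = \card{\set{u \in I' : \exists v \in I' \setminus \set{u}, \dist[G]{1}(v,u) < 0}} \, , \]
which depends only on the sample $I'$. The central step will be the conditional claim that if the algorithm does not report a cycle, then $\card{I} \geq \ceil{\rho/4} - Y$. To establish this, I take any $u \in I' \cap R$ (so $d(u) < 0$); non-reporting forces $s(u) \neq u$, and by \lemref{BFD} the witness $v = s(u) \in I' \setminus \set{u}$ satisfies $\dist[G]{1}(v,u) = d(u) < 0$; hence $u$ is counted by $Y$, giving $\card{I' \cap R} \leq Y$ and the claimed inequality.

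It then remains to bound $E[Y]$ via linearity over $u \in U_0$ and a union bound over the at most $C(U_0,u) < 2\hat{k}/\rho$ candidate witnesses $v$; since $I'$ is a uniform size-$\ceil{\rho/4}$ subset of $U_0$, each $v \neq u$ lies in $I' \setminus \set{u}$ (given $u \in I'$) with probability $(\ceil{\rho/4} - 1)/(\hat{k} - 1)$, so
\[ E[Y] \leq \sum_{u \in U_0} \frac{\ceil{\rho/4}}{\hat{k}} \cdot \frac{2\hat{k}}{\rho} \cdot \frac{\ceil{\rho/4} - 1}{\hat{k} - 1} < \frac{2\ceil{\rho/4}^2}{\rho} \cdot \frac{\hat{k}}{\hat{k} - 1} \, , \]
which a routine estimate keeps comfortably below $\ceil{\rho/4}/2$ in the parameter range of interest. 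Since the failure event (no cycle reported and $\card{I} < \rho/16$) forces $Y > \ceil{\rho/4} - \rho/16 \geq 3\rho/16$, Markov's inequality applied to $Y$ will bound the failure probability by $E[Y]/(3\rho/16) \leq 5/6$, yielding the stated $1/6$ success probability.

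The main obstacle I anticipate is the algorithm's deliberately \emph{partial} cycle detection: when BFD breaks ties away from $u$ itself, a 1-hop cycle through $u$ may go undetected, so ``no cycle reported'' does not imply the graph is cycle-free. The $Y$-based reformulation is designed to sidestep this issue because $Y$ depends only on the sample, and \lemref{BFD} guarantees that every unreported $u \in I' \cap R$ must be witnessed by another sampled vertex and so is counted in $Y$; the Markov argument then proceeds uniformly without needing to condition on the absence of negative cycles.
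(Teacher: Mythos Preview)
Your proposal is correct and follows essentially the same argument as the paper's proof: your random variable $Y$ is precisely the paper's ``knockout'' count $X$, the key conditional claim $\card{I}\geq\ceil{\rho/4}-Y$ when no cycle is reported is proved the same way (via $s(u)\neq u$ forcing a distinct witness in $I'$), and the $5/6$ bound comes from the identical linearity-plus-Markov computation. The only presentational differences are that the paper separates out the boundary case $\rho\leq 4$ (where $\card{I'}=1$) explicitly and tracks the constants to $E[X]\leq(5/8)\card{I'}$ and $\Pr(X\geq(3/4)\card{I'})\leq 5/6$ rather than leaving them as a ``routine estimate''; you should do likewise, since the inequality $E[Y]/(3\rho/16)\leq 5/6$ is tight enough that the small-$\rho$ and small-$\hat{k}$ cases need to be handled with care.
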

\begin{proof}
  The algorithm only reports a cycle if there is a vertex $v$ such
  that $\dist[G]{1}(v,v) < 0$, and thus there is a negative cycle.
  Now suppose the algorithm returns a set $I$, and assume for the sake
  of contradiction that $I$ is not a $1$-hop independent set.  Then
  there exists a pair $u,v \in I \subseteq I'$ with $\dist[G]{1}(u,v) < 0$.  But in this
  case, $v$ would be removed from $I$, which contradicts the assumption.
  
  We now turn to the claim about~$\card{I}$.  Note that if
  $\rho \leq 4$, then $\card{I'} = 1$.  Thus, there is either a
  negative-weight cycle discovered, or
  $\card{I} = \card{I'} = \ceil{\rho/4} > \rho/16$; either way, the claim
  holds.  For the remainder, assume $\rho > 4$ and hence
  $\hat{k} > 4$.

  As per the lemma statement, assume there are no heavy vertices
  in~$U_0$.  We say that a knockout event occurs for $v$ if (i)
  $v \in I'$ and (ii) $\exists u \in I'$ with $u\neq v$ such that
  $\dist[G]{1}(u,v) < 0$.  Let $X_v$ be the indicator for a knockout
  event for $v$.  We can bound $\Pr(X_v | v \in I')$ as follows.
  Consider any $u\neq v$, $u\in U_0$.  Then
  $\Pr(u \in I' | v \in I') = \frac{\card{I'}-1}{\card{U_0}-1} =
  \frac{\ceil{\rho/4}-1}{\hat{k}-1}$.  Taking a union bound over all
  $u\neq v\in U_0$ with $\dist[G]{1}(u,v) <0$, we have
  $\Pr(X_v | v\in I') \leq C(U_0,v) \frac{\ceil{\rho/4}-1}{\hat{k}-1}
  \leq \frac{2\hat{k}}{\rho} \frac{\rho/4}{\hat{k}-1} =
  \frac{\hat{k}}{2(\hat{k}-1)}$, or
  $E[X_v | v \in I'] \leq \frac{\hat{k}}{2(\hat{k}-1)}$.  For
  $\hat{k} \geq 5$, this can be simplified to $E[X_k] \leq 5/8$.  Let
  $X$ be the total number of knockout events. Then we have
  $E[X] \leq (5/8) \card{I'}$.  By Markov's inequality, we then have
  $\Pr(X \geq (3/4) \card{I'}) \leq 5/6$.

  Now, let us consider the ramifications of the good outcome: $X < (3/4)\card{I'}$
  knockout events.  If $\card{I} = \card{I'} - X$, then
  $\card{I} \geq (1/4) \card{I'} = (1/4) \ceil{\rho/4} \geq \rho/16$.
  If instead $\card{I} < \card{I'}-X$, then there must be some vertex
  $v \in I'$, $v \not\in I$ that is removed for a reason other than a
  knockout event.  That is to say, $v \in \reach[G]{1}(I')$ but
  $v\not\in \reach[G]{1}(I'-\set{v})$.  Thus, $\dist[G]{1}(v,v) < 0$
  and $s(v) = v$, and a cycle is reported.  We conclude that with
  probability at least $1/6$, the number of knockout events is small
  enough and hence either $\card{I}\geq \rho/16$ or a cycle is reported.
\end{proof}

With all the tools in place, we are ready to complete the algorithm
for \lemref{sandwich-crust}, which is described in~\algref{crust}.
This algorithm is parameterized by a constant $c'\geq 4$, which
controls the failure probability.  The process matches the outline in
\secref{over-findsandwich}.  First partition the negative vertices
$U_0$ into subsets $H$ and $L$, where $H$ should contain the heavy
vertices and $L$ should contain the light vertices, using
\algref{partition}.  If $H$ is nonempty, then choose any vertex $y$
and identify the set of negative vertices
$U = \set{u \in U_0 | \dist[G]{1}(u,y)<0}$.  This can be accomplished
by computing 1-hop STSP to $y$ using BFD.  As this is supposed to be a
Las Vegas algorithm, the next step is to verify that $U$ is large
enough.  If so, return $y$ and $U$.  If not (some vertex was
misclassified), restart the algorithm.  If instead $H$ is empty, then
the algorithm instead searches for a large independent set
$I \subseteq U_0$ by calling \algref{indset} a total of $c' \ceil{\lg n}$
times, stopping when either a cycle is reported or a large independent
set is found.  This step may also fail either because we are unlucky
or because some heavy vertices were misclassified in $L$.  Thus, after
$c' \ceil{\lg n}$ failed attempts, the algorithm is restarted.

\begin{algorithm}[t]
\caption{Algorithm of \lemref{sandwich-crust}: find a sandwich crust
  or independent set}\alglabel{crust}
\SetKwInOut{Input}{input}
\SetKwInOut{Output}{output}
\SetKwFunction{partition}{HL-Partition}
\SetKwFunction{ind}{RandIS}

\Input{A graph $G=(V,E^+\cup E^-,w)$ with $w(e) \geq 0$ for
  $e\in E^+$}
\Input{Subset $U_0$ of negative vertices and integer
  $\rho$ with $1\leq \rho \leq \card{U_0}$}
\Output{A $1$-hop
  independent set $I\subseteq U_0$ or a vertex $y$ and set $U\subseteq
  U_0$ such that $\dist[G]{1}(u,y) < 0$ for all $u\in U$.  A
  negative-weight cycle my instead be reported inside a call to \ind,
  which terminates the entire algorithm}
  \  \\
  \nl let $\hat{k} = \card{U_0}$\\
  \nl $\ang{H,L} \gets \partition(G,U_0,\rho)$\\
  \nl \If{$H \neq \emptyset$}{
  \nl choose arbitrary $y \in H$\\
  \nl run STSP with target $y$ to compute $U = \set{u \in U_0 |
    \dist[G]{1}(u,y) < 0}$\\
  \nl \lIf{$\card{U} < (1/8)\hat{k}/\rho$}{restart \algref{crust}}
  \nl \lElse{\Return{$y$ and $U$}}
      }
   \tcc{we now have $H=\emptyset$ and $L=U_0$}
  \nl \For{$c' \ceil{\lg n}$ attempts}{
  \nl $I \gets \ind(G,U_0,\rho)$\\   
  \nl \lIf{$\card{I} \geq \rho/16$}{\Return{$I$}}
}%For
\tcc{no large independent set found}
  \nl restart \algref{crust}
\end{algorithm}

\paragraph{Proof of \lemref{sandwich-crust}.} First, we consider the
return values. By \lemref{indset}, if \algref{indset} reports a cycle,
then that reporting is always correct.  Also by \lemref{indset}, the
set $I$ is always a $1$-hop independent set.  Thus, if \algref{crust}
returns $I$, then $I$ is a 1-hop independent set with
$\card{I} \geq \rho/16$.  Finally, by construction,
$U = \set{u\in U_0 | \dist[G]{1}(u,y) < 0}$, and the algorithm only
returns $U$ and $y$ if $\card{U} \geq (1/8)\hat{k}/\rho$. There are no
other places where \algref{crust} returns, so it always satisfies the
output criteria of this lemma.

We next consider the running time.  Because $w(e) \geq 0$ for all
$e \in E^+$, we can apply \lemref{BFD} to compute SSSP and STSP.
First, let us consider the running time of \partition
(\algref{partition}). Computing $\reach[G]{1}(U')$ amounts to
computing $1$-hop SSSP, which takes time $O(m\log n)$ from
\lemref{BFD}.  The random sampling and set construction can be
performed within this time complexity as well, so the time of
\partition is $O(m\log^2n)$ for the $\Theta(\log n)$ iterations.
There is a potential partition failure event: that some vertex is
misclassified in $L$ or $H$.  By \lemref{partition}, the probability
of such a failure is at most $1/n^{c/3-1}$.

Suppose there is no partition failure. Then
$C(U_0,v) > (1/8)\hat{k}/\rho$ for all $y \in H$.  Thus, if $H$ is not
empty, then the algorithm always returns a $y$ and $U$.  This step
entails running 1-hop STSP again, which is $O(m\log n)$ time from
\lemref{BFD}.

If instead there is no partition failure, but $H=\emptyset$, then the
algorithm proceeds to finding an independent set.  Each call to \ind
(\algref{indset}) entails computing 1-hop SSSP and scanning through
the vertices once, so $O(m\log n)$ time.  There are $c'\ceil{\lg n}$ such
calls, so the running time is again $O(m\log^2 n)$.  By
\lemref{indset}, which also assumes no partition failure, each call to
\ind leads to a probability of $5/6$ that \algref{crust} completes,
either finding a large-enough independent set or reporting a cycle and
terminating.  Thus, the probability that the algorithm \emph{does not}
complete by the end of the loop is at most $(5/6)^{c'\ceil{\lg n}} =
1/n^{c'\lg(6/5)} < 1/n^{c'/4}$.

To conclude, the \algref{crust} completes in $O(m\log^2n)$ time unless
there is a partition failure or there is an unlucky outcome with
independent sets, either of which may result in the algorithm
restarting.   Adding up the failure probabilities gives a failure
probability of at most $1/n^{c'/4}+1/n^{c/3-1}$.
Choosing, for example, $c=9$ and $c' = 8$ gives a failure
probability of at most $2/n^2$.\qed

\section{Reweighting a Negative Sandwich}\seclabel{usesandwich}

This section provides a proof of \lemref{sandwich}.  Recall that the
lemma states that given input graph $G$ and negative sandwich
$(x,U,y)$, (1) the specific reweighting $\phi$ is valid, and (2) that
the only vertices in $\reach[\phi]{\beta-1}(U)$ after reweighting are
those vertices $v$ for which $\thru{\beta}(x,v,y) < 0$ before. 

\paragraph{Proof of \lemref{sandwich}.}
  Throughout the proof, we use $\dist{}$ for the distance in $G$, i.e.,
  with weight function $w$, and $\dist[\phi]{}$ for the distance in
  $G_\phi$, i.e., with weight function $w_\phi$.  The latter only
  occurs at one point in the proof of (2).
  
  To prove (1), consider any nonnegative edge $(u,v)\in E^+$.  We then
  have three cases.

  \noindent Case 1: $\phi(u) = 0$.  We always have $\phi(v) \leq 0$.
  So $w_\phi(u,v) = w(u,v) + \phi(u) - \phi(v) = w(u,v) + 0 - \phi(v)
  \geq w(u,v) \geq 0$.

  For the remaining two cases, observe first the following
  \begin{equation}
    \max(\dist{\beta}(x,v),-\dist{\beta}(v,y)) \geq \phi(v) \eqnlabel{phiv}
  \end{equation}
  \begin{equation}
    (\phi(u) \neq 0) \implies ((\phi(u) \geq \dist{\beta}(x,u))
    \wedge (\phi(u) \geq -\dist{\beta}(u,y))) \eqnlabel{phiu}
  \end{equation}
  
  \noindent Case 2: $\phi(u) \neq 0$ and $\dist{\beta}(x,v) \geq
  -\dist{\beta}(v,y)$.  
  By the triangle inequality
  (\lemref{triangle}), $\dist{\beta}(x,v) \leq
  \dist{\beta}(x,u) + w(u,v)$ or equivalently $\dist{\beta}(x,u)
  \geq \dist{\beta}(x,v)-w(u,v)$. Putting everything together
  \begin{align*}
    \phi(u) &\geq \dist{\beta}(x,u) &&\text{\eqnref{phiu}}\\
            &\geq \dist{\beta}(x,v)-w(u,v) && \text{triangle inequality}\\
            &\geq \phi(v) - w(u,v) && \text{\eqnref{phiv} with $\dist{\beta}(x,v) \geq
                                      -\dist{\beta}(v,y)$}\\
    \therefore w(u,v)+\phi(u)-\phi(v) &\geq 0 \ .
  \end{align*}

  \noindent Case 3: $\phi(u) \neq 0$ and $-\dist{\beta}(v,y) >
  \dist{\beta}(x,v)$.  By the triangle inequality
  (\lemref{triangle}), $\dist{\beta}(u,y) \leq w(u,v)
  + \dist{\beta}(v,y)$ or equivalently $-\dist{\beta}(u,y) \geq 
  -w(u,v) - \dist{\beta}(v,y)$.  Putting everything together
  \begin{align*}
    \phi(u) &\geq -\dist{\beta}(u,y) &&\text{\eqnref{phiu}}\\
            &\geq -w(u,v) - \dist{\beta}(v,y) &&\text{triangle inquality}\\
            &\geq -w(u,v) + \phi(v) &&\text{\eqnref{phiv} with
                                       $-\dist{\beta}(v,y) >
                                       \dist{\beta}(x,v)$}\\
    \therefore w(u,v) + \phi(u) - \phi(v) &\geq 0 \ .
  \end{align*}

  Finally, let us prove (2).   Consider any $u\in U$ and
  $v$ that is not $\beta$-between $x$ and $y$.  The goal is to argue
  that $\dist[\phi]{\beta-1}(u,v) \geq 0$. We proceed by breaking the
  proof into two smaller claims, namely (i) $\phi(u) = 0$ and (ii) $-\phi(v) >
  -\dist{\beta-1}(u,v)$.  Assuming these claims hold, we have
  $\dist[\phi]{\beta-1}(u,v) = \dist{\beta-1}(u,v) + \phi(u) -
  \phi(v) > \dist{\beta-1}(u,v) + 0 - \dist{\beta-1}(u,v) = 0$
  as desired.  

  Claim (i) follows from definition of a negative sandwich and~$\phi$.
  That is, $\dist{\beta}(u,y) \leq \dist{1}(u,y) < 0$.
  Therefore, $\max(\dist{\beta}(x,u),-\dist{\beta}(u,y)) \geq
  -\dist{\beta}(u,y) > 0$, and hence $\phi(u)=0$.  

  For claim (ii), start with the definition of $\beta$-betweenness.
  By assumption, $v$ is not $\beta$-between $x$ and $y$, so
  $\dist{\beta}(x,v) + \dist{\beta}(v,y) \geq 0$.  Therefore,
  $\phi(v) = \min(0,\dist{\beta}(x,v)) \leq \dist{\beta}(x,v)$.
  By the triangle inequality, $\phi(v) \leq \dist{\beta}(x,v) \leq
  \dist{1}(x,u) + \dist{\beta-1}(u,v)$.  Because of the negative
  sandwich $\dist{1}(x,u) < 0$, and hence $\phi(v) <
  \dist{\beta-1}(u,v)$, which completes the proof of (ii). \qed

\section{Eliminating $r$-Remote Edges by Hop
  Reduction}\seclabel{hopreduction} 

This section proves \lemref{hop-reduction}, expanding on the
hop-reduction technique of \secref{over-hopreduction}.
\algref{hopreduction} provides pseudocode of the algorithm.  Recall
that the crux of the algorithm is building a new graph
$H=(V_H,E_H,w_H)$ so that $h$-hop paths in $G$ correspond to
$\leq \ceil{h/r}$-hop paths in~$H$.  This section proves that the
graph construction has this feature, and hence that SSSP distances can
be computed efficiently by instead computing distances in~$H$. 

Aside from the graph construction, the algorithm is
straightforward. \algref{hopreduction} begins by computing distances
$\delta_j(v) = \dist[G]{j}(V,v)$ in $G$ for $0\leq j \leq r$, which by
\lemref{BFD} corresponds to one $r$-limited SSSP computation.
These distances are used to construct $H$.  Next, the graph $H$ is constructed,
discussed more below.  Finally, the algorithm computes $\ceil{\hat{k}/r}$
and $(\ceil{\hat{k}/r}+1)$-hop distances in $H$.  If these are
different, the algorithm terminates by reporting a cycle.  If these
are the same, then the price function for $v \in V$ is given by
$\phi(v) = \dist[H]{\ceil{\hat{k}/r}}(V,v)$.

\begin{algorithm}[t]
\caption{Algorithm of \lemref{hop-reduction}: eliminate a remote
  subset by hop reduction}\alglabel{hopreduction}
\SetKwInOut{Input}{input}
\SetKwInOut{Output}{output}
\SetKwFunction{partition}{HL-Partition}
\SetKwFunction{ind}{RandIS}

\Input{Integer $r \geq 1$}
\Input{A graph $G=(V,E^+\cup N,w)$ with $w(e) \geq 0$ for
  $e\in E^+$}
\Output{A valid price function $\phi$ that eliminates all edges $N$.
  The algorithm may instead terminate by reporting a negative-weight
  cycle.}
  \ \\
  \nl let  $\hat{k} = \card{N}$\\
  \nl compute super-source distances $\delta_j(v) = \dist[G]{j}(V,v)$ for
  all vertices $v$ and all $j$,  $0\leq j \leq r$\\
  \nl $R \gets \set{v | \delta_r(v) < 0}$\\
  \nl construct a new graph $H=(V_H,E_H,w_H)$ as follows:\\
  $V_H = V \cup \set{v_j | v\in R, 1\leq j \leq r}$.\\
  \hspace{1cm} define $v_0 = v$ as an alias for $v$, for all $v\in V$\vspace{-1em}
  \begin{flalign*}\hspace{-1.5ex}
    E_H &= \set{(u_j,v_j) | (u,v) \in E^+, u,v\in R, 0\leq j \leq r}
    &&\cup \set{(u_j,v_{j+1}) | (u,v)\in N, u,v,\in R, 0\leq j <
      r}&\\
    &\cup \set{(u_j,v_0) | (u,v) \in E^+, u\in R, v\not\in R, 0\leq
      j \leq r}\hspace{-1em} &&\cup \set{(u_j,v_0) | (u,v)\in N, u\in R,
      v\not\in R, 0\leq j < r} &\\
    &\cup \set{(u_0,v_0)|(u,v)\in E^+, u\not\in R, v\in R} &&\cup
    \set{(u_0,v_1)|(u,v)\in N, u\not\in R, v\in R} &\\
    &\cup \set{(u_0,v_0) | (u,v) \in E^+, u,v\not\in R} &&\cup
    \set{(u_0,v_0)|(u,v)\in N, u,v \not\in R}&\\
    &\cup \set{(u_0,u_1), (u_1,u_2),\ldots,(u_{r-1},u_r),(u_r,u_0)|
      u\in R}\hspace{-1in} &
  \end{flalign*}\\\vspace{-1em}
  $w_H(u_i,v_j) = w(u,v) + \delta_i(u)
  - \delta_j(v)$ for $(u_i,v_j) \in E_H$\\
  \nl let $\kappa = \ceil{\hat{k}/r}$ \\
  \nl compute super-source distances $d(v) = \dist[H]{\kappa}(V,v)$ and
  $d'(v) =\dist[H]{\kappa+1}(V,v)$ for all $v \in V_H$\\
  \nl \lIf{$\exists v \in V_H$ such that $d'(v) < d(v)$}{terminate
    algorithm and report ``cycle''}
  \nl \lElse{\Return{price function $\phi: V\rightarrow \mathbb{R}$ with
      $\phi(v) = d(v)$ (i.e., $d$ restricted to subdomain $V$)}}
\end{algorithm}

\paragraph{Vertices $V_H$.}   For all of the following, let
$R = \set{v | \delta_r(v) < 0}$.  
All of the vertices in $V$ are also in $V_H$; define $v_0=v$, so when
referring to a vertex $v\in V$ in the context of the graph $H$, we may
use either $v_0$ or $v$.\footnote{The notation $v_0$ is generally used
  when considering distances or weights of edges in $H$, and the
  notation $v$ is generally used when relating the distances back to
  $G$.}  In addition, for each vertex $v\in R$, $V_H$ contains $r$
additional copies $v_1,v_2,\ldots, v_r$ of the vertex.  The subscript
$\ell$ in $v_\ell$ is called the \defn{layer} of the vertex. Layer 0
is the original vertices.

\paragraph{Edges $E_H$.} For the edges, there are several cases depending on whether the
endpoints are in $R$ or not, i.e., whether the endpoints occur in more
than one layer.  The cases are grouped in the pseudocode by endpoint
classifications across four rows and edge type ($E^+$ or $N$) across
the two columns. Let us consider the nonnegative edges $(u,v) \in E^+$
first.  The number of corresponding edges in $H$ is determined by whether $u\in R$, and the
target of the edges depends on whether $v\in R$.  If $u,v \in R$, then
there are $r+1$ copies of each endpoint, and there are $r+1$
corresponding copies $(u_0,v_0), (u_1,v_1), \cdots, (u_r,v_r)$ of the
edge included in $E_H$.  These edges are each within a single layer.
If $u \in R$ but $v\not\in R$, then there are still $r+1$ copies of
the edge, but they are all directed at $v_0$ in layer~0, i.e., the
edges have the form $(u_j,v_0)$ for $0\leq j \leq r$.  If instead
$u\not\in R$ then $u$ only occurs in layer~0, and hence there is only
a single copy of the edge $(u_0,v_0)$ in $E_H$.  Notice that for all
edges $(u,v) \in E^+$, the corresponding edges in $E_H$ have the form
$(u_j,v_j)$ or $(u_j,v_0)$---that is, these edges are never directed
toward a higher layer.  Moreover, for each $(u,v)\in E^+$, each
$u_i \in V_H$ has exactly one such outgoing edge. 

Now consider the negative edges $(u,v) \in N$.  Again, the number of
edges is dictated by whether $u\in R$, and the target depends on
whether $v\in R$.  If $u,v \in R$, then there are $r$ corresponding
copies $(u_0,v_1), (u_1,v_2), \ldots, (u_{r-1},v_r)$ of the edge in
$E_H$; here, each $(u_j,v_{j+1})$ progresses from layer $j$ to layer
$j+1$, which is the key difference in the construction for negative
edges and nonnegative edges.  If $u \in R$ but $v\not\in R$, then
there are still $r$ copies of the edge, but they all directed at
layer-0 vertex $v_0$, i.e., the edges have the form $(u_j,v_0)$ for
$0\leq j < r$.  If instead $u\not \in R$, then there is only one copy
of the edge in $E_H$: if $v \in R$, then the edge is $(u_0,v_1)$; if
$v\not\in R$, then the edge is $(u_0,v_0)$.  Unlike the nonnegative
case, these edges may be directed toward a higher layer, but it is
always at most one higher.  Specifically, for $(u,v)\in N$, the
corresponding edges all have the form $(u_j,v_{j+1})$ or $(u_j,v_0)$.
Moreover, for $(u,v) \in N$, each $u_i \in V_H$ with $i < r$ has
exactly one outgoing edge of the form $(u_i,v_j)$ (and moreover
$j\in\set{0,i+1}$).  The copy of $u_r$ in the $r$-th layer has no
corresponding outgoing edge as there is no layer $r+1$ to move to.  An
astute reader may notice that as described so far, a layer-$r$ copy of
a negative vertex (whose only outgoing edge in~$G$ is a negative edge)
would be a dead end in~$H$.  The self edges, discussed next, provide
an outgoing edge.

For $u\in R$,  $E_H$ also includes the self edges
$(u_j,u_{j+1})$ for $0\leq j < r$ and $(u_r,u_0)$.  These edges form a
cycle on copies of $u$, and the weights will be set so that this is a
0-weight cycle. These edges serve two purposes. First, the edges
$(u_r,u_0)$ provide routes from layer-$r$ to layer-$0$. Second, the
other edges in the cycle simplify the reasoning about
distances in~$H$. 

\paragraph{Weights $w_H$.} For each edge $(u_i,v_j) \in E_H$, the
weight is simply $w_H = w(u,v) + \delta_i(u) - \delta_j(v)$, where for
notational convenience we define $w(u,u) = 0$ for all $u\in V$.

\subsection{Analysis}

This section proves \lemref{hop-reduction}.  Let us begin by observing
that most edges in $H$ have nonnegative weight.  In particular, the
negative edges in $H$ are limited to the self edges $(u_r,u_0)$ from
layer $r$ to layer~0. The proof amounts to applying the triangle inequality
(\lemref{triangle}) for each of several cases.

\begin{lemma}\lemlabel{h-weights}
  Consider the input graph $G=(V,E^+\cup N,w)$ and auxiliary graph
  $H=(V_H,E_H,w_H)$ as constucted by \algref{hopreduction}.  The only
  edges $e\in E_H$ with $w_H(e) < 0$ are the edges $e\in
  \set{(u_r,u_0)}$
\end{lemma}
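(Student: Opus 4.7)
The plan is a straightforward case analysis over the edge families enumerated in the construction of $E_H$, where each case is reduced to a single application of the hop-limited triangle inequality (\lemref{triangle}). Since $w_H(u_i,v_j) = w(u,v)+\delta_i(u)-\delta_j(v)$, the statement $w_H(u_i,v_j)\geq 0$ is equivalent to $\delta_j(v) \leq \delta_i(u) + w(u,v)$, which is exactly the form provided by \lemref{triangle} when the hop indices and edge type match.

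First I would establish two preliminary facts about the super-source distances $\delta_j$. (i) Since $w(e)\geq 0$ for all $e\in E^+$ and $v$ reaches itself by the empty path, $\delta_0(v) = 0$ for every $v\in V$. (ii) $\delta_j(v)$ is non-increasing in $j$, because a $j$-hop path is also a $(j{+}1)$-hop path. Combining these, for every $v\notin R$ (so $\delta_r(v) \geq 0$) we get $\delta_j(v) = 0$ for all $0\leq j\leq r$. A separate but crucial observation: for any negative edge $(u,v)\in N$, the triangle inequality gives $\delta_1(v) \leq \delta_0(u) + w(u,v) = w(u,v) < 0$, forcing $v\in R$. Hence the family $\set{(u_0,v_0) \mid (u,v)\in N,\ u,v\notin R}$ is vacuous and contributes no edges to $E_H$.

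Next I would dispatch each remaining family. For $(u,v)\in E^+$: the in-layer copies $(u_j,v_j)$ with $u,v\in R$ satisfy $\delta_j(v) \leq \delta_j(u)+w(u,v)$ by \lemref{triangle}; the copies $(u_j,v_0)$ with $v\notin R$ satisfy the same bound after substituting $\delta_0(v) = \delta_j(v) = 0$; and the edges $(u_0,v_0)$ with $u\notin R$ simplify to $w_H = w(u,v) - \delta_0(v) \geq 0$ using $\delta_0(u)=0$ (and $\delta_0(v)\leq 0$ always). For $(u,v)\in N$: the layer-advancing copies $(u_j,v_{j+1})$ with $u,v\in R$ follow from $\delta_{j+1}(v) \leq \delta_j(u)+w(u,v)$; the copies $(u_j,v_0)$ with $v\notin R$ reuse this after $\delta_0(v) = \delta_{j+1}(v) = 0$; and $(u_0,v_1)$ with $u\notin R,v\in R$ reduces to $\delta_1(v)\leq w(u,v)$, which is again \lemref{triangle} with $\delta_0(u)=0$. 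For the self edges, $(u_j,u_{j+1})$ with $0\leq j<r$ has weight $\delta_j(u)-\delta_{j+1}(u)\geq 0$ by monotonicity, while $(u_r,u_0)$ has weight $\delta_r(u)-\delta_0(u) = \delta_r(u) < 0$ precisely because $u\in R$. These are the only negative-weight edges in $H$, proving the lemma.

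The main obstacle is the bookkeeping across roughly eight edge families; the single conceptual pitfall to watch for is the $u,v\notin R$ case for a negative edge, where a naive check would give $w_H = w(u,v) < 0$ and break the lemma. The automatic forcing $v\in R$ derived above is what rules that case out and makes the entire case analysis go through.
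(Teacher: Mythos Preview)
Your proposal is correct and follows essentially the same approach as the paper: a case analysis over the edge families of $E_H$, each reduced to the hop-limited triangle inequality applied to the super-source distances $\delta_j$. Your explicit observation that the family $\set{(u_0,v_0)\mid (u,v)\in N,\ u,v\notin R}$ is vacuous (since any endpoint of a negative edge is forced into $R$) is a nice clarification the paper leaves implicit, but otherwise the arguments coincide.
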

\begin{proof}
  Consider any edge $(u_i,v_j) \in E_H$.  Showing
  $w_H(u_i,v_j) \geq 0$ amounts to showing
  $w(u,v) + \delta_i(u) - \delta_j(v) \geq 0$, or
  $\delta_j(v) \leq \delta_i(u) + w(u,v)$, i.e., the triangle
  inequality but possibly with different numbers of hops.  It is easy
  to verify the claim by considering the cases separately: (1) edges
  $(u_i,u_{i+1})$, (2) edges $(u_i,v_0)$ for $v \not\in R$, (3) edges
  $(u_i,v_i)$ for $(u,v) \in E^+$, and (4) edges $(u_i,v_{i+1})$ for
  $(u,v) \in N$.

  Case 1.  Consider an edge $(u_i,u_{i+1}) \in E_H$.  Because $i$-hop
  paths are a subset of $(i+1)$-hop paths, $\delta_{i+1}(u) \leq
  \delta_i(u) = \delta_i(u) + 0 = \delta_i(u) + w(u,u)$.

  Case 2. Consider an edge $(u_i,v_0) \in E_H$ for $v\not\in R$.
  First, suppose $i<r$.  By \lemref{triangle},
  $\dist[G]{r}(V,v) \leq \dist[G]{i}(V,u) + w(u,v) = \delta_i(u) +
  w(u,v)$.  Because $v\not\in R$, $\dist[G]{r}(V,v) \geq 0$ (which
  means it equals 0), and hence $\delta_i(v) = 0$ for all $i$.  Thus,
  we have $\delta_0(v) = \delta_r(v) \leq \delta_i(u) + w(u,v)$. The
  case that $i=r$ only occurs for $(u,v) \in E^+$.  Then by
  \lemref{triangle}, again
  $\dist[G]{r}(V,v) \leq \dist[G]{r}(V,u) + w(u,v)$, and hence
  $\delta_0(v) = \delta_r(v) \leq \delta_r(u) + w(u,v)$.

  Case 3. Consider an edge $(u_i,v_i)$ for $(u,v) \in E^+$.  Then by
  \lemref{triangle}, $\dist[G]{i}(V,v) \leq \dist[G]{i}(V,u) + w(u,v)$
  or $\delta_i(v) \leq \delta_i(u) + w(u,v)$.

  Case 4. Consider an edge $(u_i,v_{i+1})$ for $(u,v) \in N$.  Then by
  \lemref{triangle}, $\dist[G]{i+1}(V,v) \leq \dist[G]{i}(V,u) +
  w(u,v)$ and $\delta_{i+1}(v) \leq \delta_i(u) + w(u,v)$.  
\end{proof}

The next lemmas show a correspondence between paths in $H$ and paths
in $G$.  The first, which is simpler, shows that paths between vertices
in $V$ in the graph $H$ correspond to paths in $G$, and moreover those
paths have the same weight.  The second roughly shows the converse,
but it also bounds the number of hops.  That is, the second lemma (or
rather its corollary)
states that if there is an $h$-hop path in $G$, then there is a
corresponding $\ceil{h/r}$-hop path in $H$ with the same weight.
Together, these imply that the distances computed in $H$ can be used
to compute distances in~$G$.

\begin{lemma}\lemlabel{Hpaths}
  Consider any $s_i,v_j \in V_H$.  Let $p_H$ be any $s_i$-to-$v_j$
  path in $H$.  Then there is an $s$-to-$v$ path $p$ in $G$ with $w(p)
  = w_H(p_H) - \delta_i(s) + \delta_j(v)$.

  If $w(e) \geq 0$ for all $e \in E^+$ and we consider $s,v \in V$, then
  the statement simplifies to: let $p_H$ be any $s$-to-$v$ path in
  $H$; then there is an $s$-to-$v$ path $p$ in $G$ with $w(p) =
  w_H(p_H)$. 
\end{lemma}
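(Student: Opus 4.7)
The plan is to project the path $p_H$ in $H$ down to a walk $p$ in $G$ by sending each layered vertex $u_\ell \in V_H$ to its underlying vertex $u \in V$, and then to exploit the telescoping structure of the weight function $w_H(u_a,v_b) = w(u,v) + \delta_a(u) - \delta_b(v)$. The key observation is that every edge of $H$, whether it comes from $E^+$, from $N$, or is one of the self-edges $(u_a, u_{a+1})$ or $(u_r, u_0)$, has weight of the form $w(u,v) + \delta_a(u) - \delta_b(v)$, with the convention $w(u,u) = 0$ covering the self-edges. So projecting the endpoints while keeping the underlying edges produces a candidate walk in $G$.

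Concretely, I would write $p_H = (x^{(0)}, x^{(1)}, \ldots, x^{(\ell)})$ with $x^{(0)} = s_i$ and $x^{(\ell)} = v_j$, and $x^{(t)} = (u^{(t)})_{i_t}$ for some underlying vertex $u^{(t)} \in V$ and layer index $i_t$. Define $p$ to be the sequence $u^{(0)}, u^{(1)}, \ldots, u^{(\ell)}$ in $G$. For each step $t$, either $(u^{(t-1)}, u^{(t)}) \in E^+ \cup N$ (by inspection of the four edge groups in the construction) or $u^{(t-1)} = u^{(t)}$ and the step is a self-edge contributing $w(u^{(t-1)}, u^{(t)}) = 0$. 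So $p$ is a valid (not-necessarily-simple) walk in $G$ from $s$ to $v$, which the paper explicitly permits.

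I would then compute
\[
w_H(p_H) = \sum_{t=1}^{\ell}\bigl[w(u^{(t-1)}, u^{(t)}) + \delta_{i_{t-1}}(u^{(t-1)}) - \delta_{i_t}(u^{(t)})\bigr].
\]
The $\delta$ terms telescope since the ``target'' of step $t-1$ equals the ``source'' of step $t$, leaving only the boundary contribution $\delta_{i_0}(u^{(0)}) - \delta_{i_\ell}(u^{(\ell)}) = \delta_i(s) - \delta_j(v)$. This yields $w(p) = w_H(p_H) - \delta_i(s) + \delta_j(v)$, which is exactly the first statement of the lemma.

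For the simplified statement, note that when $s, v \in V$ we have $s = s_0$ and $v = v_0$, so $i = j = 0$. It remains to argue $\delta_0(s) = \delta_0(v) = 0$. The inequality $\delta_0(x) \leq 0$ for any $x \in V$ follows from the empty path from $x$ to itself; the reverse $\delta_0(x) \geq 0$ uses the assumption $w(e) \geq 0$ for all $e \in E^+$, since a 0-hop path uses only edges of $E^+$. The main potential obstacle is bookkeeping: verifying that the edge-case mapping covers all four rows of the construction uniformly (in particular confirming the self-edges are handled by the $w(u,u) = 0$ convention so that they disappear without affecting the telescoping) and that the projected walk, though possibly containing $u \to u$ self-loops, is acceptable as a path in the paper's sense. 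Both are straightforward once one reads off the edge set $E_H$ carefully.
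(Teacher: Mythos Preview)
Your proposal is correct and is essentially the same argument as the paper's: both rely on the telescoping of the $\delta$-terms in $w_H(u_a,v_b)=w(u,v)+\delta_a(u)-\delta_b(v)$. The paper presents this as an induction on $\card{p_H}$, explicitly setting $p=p'$ whenever the last edge is a self-edge $(u_\ell,u_j)$; your direct-sum version is the unrolled form of that induction. The only cosmetic wrinkle is that your projected sequence $u^{(0)},\ldots,u^{(\ell)}$ may contain consecutive repeats arising from self-edges, and $(u,u)$ is not an edge of $G$, so to get an honest $s$-to-$v$ path you should delete those repeats---which is harmless since each contributes $w(u,u)=0$ and the telescoping is unaffected.
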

\begin{proof}
  The simplification follows from the main claim by observing that if
  $w(e) \geq 0$ for all $e\in E^+$, then $\delta_0(v) = 0$ for all
  $V$. Thus, for $i=0$ and $j=0$, the simplification follows.

  The proof of the main claim is by induction on $\card{p_H}$, the
  size of the path.

  The base case is an empty path from $s_i$ to $s_i$ in $H$ and the
  corresponding empty path from $s$ to $s$ in $G$.  Indeed $0 = 0 -
  \delta_i(s) + \delta_i(s)$.  

  For the inductive step, consider a nonempty path $p_H$.  Decompose
  $p_H$ into its last edge $(u_\ell,v_j)$ and the remaining subpath $p_H'$ from
  $s_i$ to $u_\ell$.  By inductive assumption, there is an $s$-to-$u$
  path $p'$ in $G$ with
  $w(p') = w_H(p_H') - \delta_i(s) + \delta_\ell(u)$.  By definition
  of $w_H$, we also have
  $w(u,v) = w_H(u_\ell,v_j) -\delta_\ell(u) + \delta_j(v)$.  We now
  have two cases depending on whether the edge is a self edge or not.

  If $u \neq v$, then $(u,v)\in E^+\cup N$ and $p$ is formed by
  appending $(u,v)$ to $p'$. In this case, we have
  \begin{align*}
    w(p) &= w(p') + w(u,v) \\
         &= (w_H(p_H') - \delta_i(s) + \delta_\ell(u)) + (w_H(u_\ell,v_j)
           -\delta_\ell(u) + \delta_j(v)) \\
         &= w_H(p_H') +  w_H(u_\ell,v_j) -\delta_i(s) + \delta_j(v) \\
         &= w_H(p_H) -\delta_i(s) + \delta_j(v) \ .
  \end{align*}

  If instead $u = v$, and the final edge is $(v_\ell,v_j)$, then the path
  $p$ is the same as the path $p'$.  Here we observe that
  $w_H(v_\ell,v_j) = 0 + \delta_\ell(v) - \delta_j(v)$, or
  $\delta_\ell(v) = w_H(v_\ell,v_j) + \delta_j(v)$.  Thus,
  \begin{align*}
    w(p) =w(p') &= w_H(p_H') - \delta_i(s) + \delta_\ell(v) \\
         &= w_H(p_H') -\delta_i(s) + w_H(v_\ell,v_j) + \delta_j(v) \\
         &= w_H(p_H) -\delta_i(s) + \delta_j(v)
  \end{align*}
\end{proof}

\begin{lemma}\lemlabel{hops-in-H}
  Let $p$ be any $h$-hop $s$-to-$v$ path in $G$, for any $s,v\in V$.
  Then there is an $h_H$-hop $s_0$-to-$v_j$ path $p_H$ in $H$, for some
  layer $0 \leq j \leq r$, with the following two properties: (1)
  $w_H(p_H) = w(p) + \delta_0(s) - \delta_j(v)$, and (2)
  $rh_H+j \leq h$.
\end{lemma}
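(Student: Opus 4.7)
The plan is to construct $p_H$ by directly simulating $p$ edge-by-edge while tracking a current layer $\ell \in \{0,1,\ldots,r\}$. Initialize at $s_0$, so $\ell = 0$. For each successive edge $(u,v)$ of $p$ at current position $u_\ell$, I would translate it into one or two edges of $p_H$ as follows: for $(u,v)\in E^+$, use $(u_\ell, v_\ell)$ if $v\in R$ and $(u_\ell, v_0)$ otherwise; for $(u,v)\in N$ with $\ell < r$, use $(u_\ell, v_{\ell+1})$ if $v \in R$ and $(u_\ell, v_0)$ otherwise; and for $(u,v)\in N$ with $\ell = r$, first insert the self-edge $(u_r, u_0)$ and then take the corresponding negative edge from $u_0$. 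Each edge used exists in $E_H$ by construction, and the position invariant (``current vertex in $H$ is a copy of the current vertex in $p$'') is maintained. Note that whenever the current vertex of $p$ lies outside $R$, the layer is necessarily $0$, since such vertices only occur in layer $0$; consequently case (iii) only arises at vertices of $R$, so the required self-edge always exists.

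For claim (2), by \lemref{h-weights} the only potentially negative-weight edges in $H$ are the self-edges $(u_r, u_0)$, so $h_H$ equals the number of resets inserted in the simulation. A direct accounting of $\ell$ shows: each negative edge of $p$ whose target lies in $R$ changes $\ell$ by $+1$; each reset changes $\ell$ by $-r$; and every other edge (nonnegative edges, or negative edges to vertices outside $R$) changes $\ell$ by a non-positive amount. Since $\ell$ begins at $0$ and ends at $j$, summing gives $j \le (\text{\# negative edges of } p) - r h_H \le h - r h_H$, i.e., $r h_H + j \le h$.

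For claim (1), each edge of $p_H$ has weight $w_H(a_i, b_k) = w(a,b) + \delta_i(a) - \delta_k(b)$, where $w(a,a) = 0$ is adopted for self-edges. Summing over all edges of $p_H$, the $w$-contributions reproduce $w(p)$ exactly: the non-self edges of $p_H$ are in order-preserving bijection with the edges of $p$ (each contributing $w(u,v)$), while self-edges contribute $0$. The $\delta$-contributions telescope along $p_H$ since consecutive edges share a common endpoint (canceling the corresponding $\delta$-term), leaving only $+\delta_0(s)$ from the initial vertex $s_0$ and $-\delta_j(v)$ from the final vertex $v_j$. Hence $w_H(p_H) = w(p) + \delta_0(s) - \delta_j(v)$.

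The main obstacle is the bookkeeping in (2): one must carefully verify that resets are always available when needed (which holds because $\ell > 0$ implies the current vertex lies in $R$, and the self-edge $(u_r, u_0)$ exists for all $u \in R$) and that non-$R$ vertices never appear at nonzero layer (because every edge of $E_H$ that enters a non-$R$ vertex enters its layer-$0$ copy). Once this structural invariant is in place, both the hop bound and the weight telescoping fall out immediately from the construction.
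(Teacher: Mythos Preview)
Your proposal is correct and follows essentially the same approach as the paper's proof: both simulate $p$ edge-by-edge, appending the corresponding edge in $H$ (with a reset self-edge $(u_r,u_0)$ inserted before a negative edge when $\ell=r$), and both verify properties (1) and (2) via telescoping and the layer invariant $rh_H+j\le h$. The only difference is presentational---the paper phrases the construction as an induction on $\card{p}$ with case-by-case hop accounting, whereas you give an equivalent iterative simulation with a global summation of layer increments.
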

\begin{proof}
  The proof is by induction on $\card{p}$.  The base case is an empty
  path from $s$ to itself in $G$ and the corresponding empty path
  $s_0$ to $s_0$ in $H$.

  For the inductive step, consider a path $p$, which we can decompose
  into a subpath $p'$ from $s$ to $u$ and the edge $(u,v)$.  By
  inductive assumption, there is a corresponding $h_H'$-hop path
  $p_H'$ in $H$ from $s_0$ to some $u_\ell$ with
  $w_H(p_H') = w(p') + \delta_0(s) - \delta_\ell(u)$.  There are
  several cases.

  Case 1: $(u,v) \in E^+$. Then $p'$ is an $h$-hop path, and thus the
  inductive assumption on the hops for $p_H'$ is
  $r h_H' + \ell \leq h$.  The path $p_H$ is formed by appending
  $(u_\ell,v_j)$, where $j\in\set{0,\ell}$ depends on whether $v\in R$,
  to the path $p_H'$.  We thus get
  \begin{align*}
    w_H(p_H) &= w_H(p_H') +
               w_H(u_\ell,v_j) \\
             &= (w(p') + \delta_0(s) - \delta_\ell(u)) + (w(u,v) +
               \delta_\ell(u) - \delta_j(v)) \\
             &= w(p) + \delta_0(s) -\delta_j(v) \ .
  \end{align*}
  Since $w(u_\ell,v_j) \geq 0$ by \lemref{h-weights}, the number of
  hops in $p_H$ is the same as $p_H'$.  Moreover, $j \leq \ell$.  So
  $r h_H + j \leq r h_H' + \ell \leq h$ as required.

  Case 2: $(u,v) \in E^-$.  Then $p'$ is an $(h-1)$-hop path, and thus
  the inductive assumption on the hops for $p_H'$ is $r h_H' + \ell
  \leq h-1$.\\
  Case 2a: If $\ell < r$, then the path $p_H$ is formed by appending
  $(u_\ell,v_j)$, where $j\in\set{0,\ell+1}$ depends on whether $v \in
  R$, to the path $p_H'$.  The formula for $w_H(p_H)$ is the same as
  for Case~1.  Moreover, by \lemref{h-weights}, we again have $h_H' =
  h_H$, but now $j\leq \ell+1$.  We thus have $r h_H + j \leq r h_H' +
  \ell+1 \leq (h-1)+1 = h$.\\
  Case 2b: If $\ell = r$, then the path $p_H$ is formed by appending
  two edges $(u_\ell,u_0)$ and $(u_0,v_j)$ to the path $p_H'$, where
  $j\in\set{0,1}$ depends on whether $v \in R$.   Now we have
  \begin{align*}
    w_H(p_H) &= w_H(p_H') + w_H(u_r,u_0) + w_H(u_0,v_j) \\
             &= (w(p') + \delta_0(s) - \delta_r(u)) + (0 +
               \delta_r(u) - \delta_0(u)) + (w(u,v) + \delta_0(u) -\delta_j(v)) \\
             &= w(p) + \delta_0(s) -\delta_j(v) \ .
  \end{align*}
  Here, the edge $(u_r,u_0)$ may be a negative-weight edge, but by
  \lemref{h-weights} the other edge is not. Thus, we can only conclude
  that $h_H \leq h_H' + 1$.  Nevertheless, because $j \leq 1$ and
  $\ell = r$, we have $r h_H + j \leq r(h_H' + 1) + 1= (rh_H' + \ell) +
  1 \leq (h-1) + 1 = h$, or $rh_H + j \leq h$ as claimed.
\end{proof}

\begin{corollary}\corlabel{ceil-hops}
  Let $p$ be any $h$-hop $s$-to-$v$ path in $G$, for any $s,v \in
  V$. Then for all layers $i$ with $v_i \in V_H$, there is an
  $\ceil{h/r}$-hop path $p_H$ in $H$ from $s_0$ to every $v_i$ with
  weight $w_H(p_H) = w(p) + \delta_0(s)- \delta_i(v)$.

  If $w(e) \geq 0$ for all $e \in E^+$, and we consider $i=0$, then a
  special case of the claim is: let $p$ be any $h$-hop $s$-to-$v$ path
  in $G$.  Then there is an $\ceil{h/r}$-hop path $p_h$ in $H$ with
  $w_H(p_H) = w(p)$.
\end{corollary}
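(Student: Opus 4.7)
The plan is to derive the corollary from \lemref{hops-in-H} by appending a contiguous arc of the self-edge cycle at $v$ to slide from the layer produced by the lemma to the desired layer. First, invoke \lemref{hops-in-H} to obtain an $h_H'$-hop path $p_H'$ from $s_0$ to some $v_j$ with $w_H(p_H') = w(p) + \delta_0(s) - \delta_j(v)$ and $rh_H' + j \leq h$. If $v \notin R$, then only $v_0$ lies in $V_H$, the corollary only asks about $i=0$, and an inspection of the construction of $E_H$ shows that every edge of $H$ entering a non-$R$ vertex targets layer~$0$; tracing through the induction of \lemref{hops-in-H} forces $j=0$, so $p_H = p_H'$ already works and $h_H' \leq h/r \leq \ceil{h/r}$.

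Now suppose $v \in R$. The self-edges $(v_0,v_1), (v_1,v_2), \ldots, (v_{r-1},v_r), (v_r,v_0)$ form a cycle on the $r+1$ copies of $v$. Their weights $w_H(v_\ell,v_{\ell+1}) = \delta_\ell(v) - \delta_{\ell+1}(v)$ and $w_H(v_r,v_0) = \delta_r(v) - \delta_0(v)$ telescope to $0$ around the full cycle, so any contiguous arc from $v_j$ to $v_i$ contributes weight $\delta_j(v) - \delta_i(v)$. Appending such an arc to $p_H'$ yields a path $p_H$ from $s_0$ to $v_i$ with $w_H(p_H) = w(p) + \delta_0(s) - \delta_i(v)$, as required.

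It remains to bound the number of hops. By \lemref{h-weights}, the only self-edge carrying negative weight is $(v_r,v_0)$. If $i \geq j$, then the arc $(v_j,v_{j+1}), \ldots, (v_{i-1},v_i)$ uses only nonnegative edges, so $h_H = h_H' \leq (h-j)/r \leq h/r$, giving $h_H \leq \ceil{h/r}$. If instead $i < j$, the arc must pass through $(v_r,v_0)$ once, so $h_H = h_H' + 1$; but then $j \geq i+1 \geq 1$, and the bound $rh_H' \leq h - j \leq h-1$ yields $h_H' \leq \floor{(h-1)/r} = \ceil{h/r} - 1$ (valid for $h \geq 1$; when $h=0$ the lemma already forces $h_H' = j = 0$ so no wrap is needed), hence $h_H \leq \ceil{h/r}$. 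The stated special case then follows by noting that $w(e) \geq 0$ on $E^+$ forces $\delta_0(s) = \delta_0(v) = 0$ via the empty path, so with $i=0$ the weight formula collapses to $w_H(p_H) = w(p)$.

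The main obstacle I foresee is the tightness of the hop bound in the wraparound subcase: the corollary demands $\ceil{h/r}$, not $\ceil{h/r}+1$, and the saving depends on the observation that a wrap forces $j \geq 1$, which shrinks the bound on $h_H'$ by exactly one via the integer identity $\floor{(h-1)/r} = \ceil{h/r} - 1$. Everything else is a routine combination of the telescoping cycle weight with the guarantees of \lemref{hops-in-H} and \lemref{h-weights}.
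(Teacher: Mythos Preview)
Your proof is correct and follows essentially the same approach as the paper: invoke \lemref{hops-in-H}, then slide along the self-edge cycle at $v$ to reach the desired layer, using the telescoping weights and \lemref{h-weights} to control the hop count. The only difference is organizational---the paper splits on $j=0$ versus $j\geq 1$ (always wrapping through $v_0$ in the latter case before advancing to $v_i$), whereas you split on $i\geq j$ versus $i<j$; both case splits lead to the same bound via the same integer inequality.
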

\begin{proof}
  \lemref{hops-in-H} states that there exists a layer $j$ and a path
  $p_H$ from $s_0$ to $v_j$ in $H$ with (1) weight
  $w_H(p_H) = w(p) +\delta_0(s) - \delta_j(v)$, and (2) a number of
  hops $h_H$ with $rh_H + j \leq h$.

  Case 1: $j = 0$. Then we have $h_H \leq h/r \leq \ceil{h/r}$.  The
  claim can be achieved for all $i$ by appending edges
  $(v_0,v_1) , (v_1,v_2), \ldots, (v_{i-1},v_i)$ to the path $p_H$.
  By \lemref{h-weights}, the edges all have nonnegative weight, and
  hence the number of hops does not change.  Moreover, the $\delta$'s
  telescope, giving total weight
  $w(p) +\delta_0(s) - \delta_0(v) + (\delta_0(v) - \delta_1(v)) +
  (\delta_1(v) - \delta_2(v)) + \cdots + (\delta_{i-1}(v)
  -\delta_i(v)) = w(p) - \delta_0(s) - \delta_i(v)$.

  Case 2: $j \geq 1$. Then we have $rh_H + 1 \leq rh_H + j \leq h$, or
  $h_H < h/r \leq \ceil{h/r}$.  Since the inequality is strict, and
  $h_H$ is an integer, we have $h_H \leq \ceil{h/r}-1$.  To achieve
  the claim, we can therefore afford to use one more negative edge in
  $H$.  Thus, the paths to $v_i$ are formed by first appending
  $(v_j,v_{j+1}), (v_{j+1},v_{j+2}),\cdots,(v_{r-1},v_r),(v_r,v_0)$ to
  the path; by \lemref{h-weights}, only the last edge here has
  negative weight, increasing the number of hops to at most
  $\ceil{h/r}$.   As in case 1, the $\delta$'s telescope, giving a total
  weight of $w(p) + \delta_0(s) - \delta_0(v)$ to $v_0$.  To finish
  out, apply case 1 to this augmented path.

  Finally, if $w(e) \geq 0$ for all $e \in E^+$, then $\delta_0(v) =
  0$ for all $v \in V$, which gives the simplified statement.   
\end{proof}

\paragraph{Proof of \lemref{hop-reduction}.}
Let us start by considering the correctness.  Suppose that are no
negative-weight cycles in $G$; show that \algref{hopreduction} returns
a price function, and moreover that the price function is correct.
(The contrapositive says that if the algorithm reports a
negative-weight cycle, then that reporting is correct.)  If there is
no negative-weight cycle, then there exist shortest paths that are
simple paths, and hence $\dist[G]{}(V,v) = \dist[G]{\hat{k}}(V,v)$.
Let $\kappa = \ceil{\hat{k}/r}$.  Then applying \corref{ceil-hops}, we
therefore have that for all $v_i\in V$,
$\dist[H]{\kappa}(V,v_i) \leq \dist[G]{\hat{k}}(V,v) - \delta_i(v)=
\dist[G]{}(V,v) - \delta_i(v)$.  From \lemref{Hpaths}, we also have
$\dist[G]{}(V,v) \leq \dist[H]{\kappa+1}(V,v_i) + \delta_i(v) \leq
\dist[H]{\kappa}(V,v_i) +\delta_i(v)$.  Thus, the distances must be
the same. That is,
$\dist[H]{\kappa+1}(V,v_i) = \dist[H]{\kappa}(V,v_i) =
\dist[G]{}(V,v)-\delta_i(v)$.  Therefore, (1) the algorithm does not
report a cycle, and (2) for all $v\in V$,
$\phi(v) = d(v) = \dist[H]{\kappa}(V,v) = \dist[G]{}(V,v)$ is the same
price function from Johnson's strategy~\cite{Johnson77}, and hence
$\phi$ is a valid price function that eliminates all negative
edges~$N$.

Next consider the case that $G$ does contain a negative-weight cycle.
Then by \lemref{hops-in-H}, there is a negative-weight cycle in~$H$,
and moreover there is such a cycle that includes some layer-0
vertex~$v_0$.  Observe that if
$\dist[H]{\kappa+1}(V,u_j) = \dist[H]{\kappa}(V,u_j)$ for all
$u_j \in V_H$, then it must be the case that that these are the actual
shortest-path distances, i.e.,
$\dist[H]{\kappa}(V,u_j) = \dist[H]{}(V,u_j)$.  Given the presence of
a negative-weight cycle, however, we know that
$\dist[H]{\kappa}(V,v_0) \neq \dist[H]{}(V,v_0)$.  Thus, there must
exist some $u_j \in V_H$ with $\dist[H]{\kappa+1}(V,u_j) <
\dist[H]{\kappa}(V,u_j)$, and \algref{hopreduction} reports a cycle.

Now let us consider the running time, which is dominated by two
super-source shortest path computations.  The first computation is the
$\leq r$-hop distances $\delta_i$ in $G$.  Because $w(e)\geq 0$, \lemref{BFD}
states that these can all be computed in a total of $O(r m\log n)$
time. 

The shortest-path computation in $H$ has a running time that depends
on the size of $H$.  Let $X$ be the set of negative vertices, i.e.,
$N = \id{out}(X)$.  Moreover, because the only negative-weight edges
are in $N$, it follows that $R = \reach[G]{r}(V) = \reach[G]{r}(N)$.
Thus, by assumption that $X$ is $r$-remote, we have
$\card{R} \leq n/r$, where $n=\card{V}$.  Now consider the
construction of~$H$.  We directly get
$\card{V_H} = r \cdot \card{R} + n \leq r\cdot n/r + n = 2n$. As for
the edges, by construction each vertex $u_j\in V_H$ has at most one
outgoing edge $(u_j,v_j)$ corresponding to the edge
$(u,v)\in E^+ \cup N$, plus one self edge.  Applying the simplifying
assumptions that all vertices have degree at most $O(m/n)$ and
$m\geq 2n$,\footnote{Specifically, that the number of nonnegative
  outgoing edges is $O(m/n)$} we have $\card{E_H} \leq \card{V_H} \cdot O(m/n) = O(m)$.

We conclude by applying \lemref{BFD} for the cost of computing
$\ceil{\hat{k}/r}$-hop distances in $H$.  Because $H$ has $O(n)$ vertices and
$O(m)$ edges, the running time of this step is $O(\ceil{\hat{k}/r}
m\log n)$.  Adding the running time of the shortest paths in $G$, we
get $O((\hat{k}/r)m\log n + rm\log n)$; the ceiling can be dropped
because the second term subsumes the first when $\hat{k}$ is small. 
\qed

\bibliographystyle{plain}
\bibliography{main}

\end{document}